\def\thm@space@setup{\thm@preskip=0pt
\thm@postskip=0pt}
\newtheoremstyle{newstyle}      
{} 
{} 
{\mdseries} 
{} 
{\bfseries} 
{.} 
{ } 
{} 
\theoremstyle{newstyle}
\newtheorem{theorem}{Theorem}[section]
\newtheorem{corollary}{Corollary}[theorem]
\newtheorem{lemma}[theorem]{Lemma}
\newtheorem{proposition}[theorem]{Proposition}
\newtheorem{definition}[theorem]{Definition}
\newcommand{\E}{{\rm I\kern-.3em E}}
\begin{document}	
\pagenumbering{arabic}
%
\title{Regulating Competition in Age of Information under Network Externalities}
%
%
%

\author{Shugang~Hao,
        and~Lingjie~Duan,~\IEEEmembership{Senior Member,~IEEE}
\thanks{Part of this work has appeared in ACM MobiHoc 2019 Symposium \cite{hao2019economics}.}
\thanks{S. Hao and L. Duan are with the Pillar of Engineering Systems and Design, Singapore University of Technology and Design, Singapore, 487372 Singapore. E-mail: shugang\_hao@mymail.sutd.edu.sg, lingjie\_duan@sutd.edu.sg.}
}

\maketitle

\begin{abstract}
Online content platforms are concerned about the freshness of their content updates to their end customers, and increasingly more platforms now invite and pay the crowd to sample real-time information (e.g., traffic observations and sensor data) to help reduce their ages of information (AoI). How much crowdsourced data to sample and buy over time is a critical question for a platform's AoI management, requiring a good balance between its AoI and the incurred sampling cost. This question becomes more interesting by considering the stage after sampling, where multiple platforms coexist in sharing the content delivery network of limited bandwidth, and one platform's update may jam or preempt the others' under negative network externalities. When these selfish platforms know each other's sampling cost, we formulate their competition as a non-cooperative game and show they want to over-sample to reduce their own AoIs, causing the price of anarchy (PoA) to be infinity. To remedy this huge efficiency loss, we propose a trigger mechanism of non-monetary punishment in a repeated game to enforce the platforms' cooperation to approach the social optimum. We also study the more challenging scenario of incomplete information that some new platform hides its private sampling cost information from the other incumbent platforms in the Bayesian game. Perhaps surprisingly, we show that even the platform with more information may get hurt. We successfully redesign the trigger-and-punishment mechanism to negate the platform's information advantage and ensure no cheating. Our extensive simulations show that the mechanisms can remedy the huge efficiency loss due to platform competition, and the performance improves as we have more incumbent platforms with known cost information.   
\end{abstract}

\begin{IEEEkeywords}
Age of information, Mobile crowdsourcing, Network externalities,  Repeated games, Trigger mechanism of non-monetary punishment.
\end{IEEEkeywords}

\section{Introduction}

%
Today many customers do not want to lose any breaking news or useful information in smartphone even if in minute, and online platforms (such as social media outlets and navigation applications) want to keep their content updates fresh to attract a good number of customers for subscription and profit (\hspace{-0.5pt}\cite{NF,GM}). The platforms' updated real-time information can be news, traffic conditions, shopping promotions, restaurant discovery, and air quality conditions. 

Age of information (AoI) is a promising metric to characterize a platform's content update delay from an application layer point of view, and AoI measures the duration from the moment that the latest content was generated to the current reception time  \cite{kaul2012real}. Numerous works were done to analyze the AoI for a single link (\hspace{-0.5pt}\cite{kaul2012real, huang2015optimizing, costa2016age, sun2017update}). (\hspace{-0.5pt}\cite{bedewy2016optimizing, bedewy2017age, bedewy2019minimizing}) also analyzed the benefit of using queues to store outdated packets and improved the average age by choosing sampling rate. \cite{kaul2012status} extended the long-run AoI analysis to the case of multiple sources in a last-come-first-serve (LCFS) M/M/1 queue with given preemption policy. (\hspace{-0.5pt}\cite{hsu2017age,hsu2017scheduling}) optimized the online scheduling policy to balance multiple sources without knowing future data arrival patterns. 

The existing works on AoI focus on technological issues for controlling time-avarage age under different policies and scenarios (e.g., \cite{kaul2012real, huang2015optimizing, costa2016age, sun2017update, bedewy2016optimizing, bedewy2017age, bedewy2019minimizing, kaul2012status, hsu2017age,hsu2017scheduling}), and very few studies look at the economics of AoI management at the platform or system level. We are only aware that \cite{xuehe} studied how a single platform dynamically motivates sensors to sample fresh data on the source side, and \cite{meng} analyzed the purchase behavior to buy a platform's fresh data on the demand side. From the system management perspective, there are two critical issues to address. First, on the supply side, a platform needs to take care of the large sampling cost to support AoI. Increasingly more platforms now invite and pay the crowd to sample and send back real-time information (e.g., traffic observations, sales information, and sensor data) at large sampling rates \cite{duan2014motivating}. For example, online platform CrowdSpark follows this crowdsourcing approach and maintains a large pool of professional and citizen journalists who are paid to submit reports, news and videos. Another example is Waze platform who asks and rewards millions of drivers to report location-based observations (e.g., of road visibility, congestion, and {\textquotedblleft black-ice\textquotedblright} segments) when travelling in different routes of the city (\hspace{-0.5pt}\cite{li2017dynamic, jiang2017scalable}). We wonder how much crowdsourced data a platform should buy, expecting a balance between its AoI performance and the incurred sampling cost. 

The other economic issue is on the delivery side, where more than one selfish platform shares the same content delivery network for managing their individual AoIs. The content delivery network is of limited bandwidth and naturally involves competition among multiple platforms \; (\hspace{-0.5pt}\cite{tang2014regulating,lee2010max}). One platform's content update can jam or preempt the others' information updates, reducing its own AoI at the cost of the others'. How to enforce their cooperation despite the selfish nature of each is another key question, requiring new mechanism design under negative network externalities. In the literature, there are some game-theoretic studies on duopoly competition under externalities without mechanism design (\hspace{-0.5pt}\cite{duan2015economic,musacchio2006game}). \cite{varian2004system} further studied direct pricing or subsidy-based mechanisms to seek duopoly cooperation, yet such direct payment may be difficult to implement and realize in practice. For example, in our AoI management problem, it is difficult to ask the platforms to pay additionally according to their sample updates on top of their existing contracts with the Internet service provider (ISP). Regarding the literature of indirect (non-monetary) cooperation mechanism design for wireless networking applications, there are some repeated game studies that proposed trigger mechanisms of long-term punishment to hinder any platform's deviation from cooperation (e.g., \cite{duan2012attack,le2010repeated}). Yet these mechanisms are proposed for complete information or require sufficiently large discount factor when each platform cares enough for its future return. Differently, we design new trigger-and-punishment mechanisms here to work for any discount factor and incomplete information scenario. We also note that there are some pure economics studies on repeated games under incomplete information (e.g., \cite{aumann1995repeated}), yet they focus on signalling and learning and are not directly suitable for our problem of AoI management.

Our key novelty and main contributions are summarized as follows. 
\begin{itemize}
    \item \emph{Regulating AoI competition under network externalites}: To our best knowledge, this is the first paper studying the platform competition in AoI, and we take into account their sampling costs on the information supply side and update competition on the information delivery side. In Section \Romannum{2}, we model multiple selfish platforms' competition as non-cooperative games, depending on how well the platforms know about each other's sampling cost. 
    \item \emph{Huge efficiency loss due to platform competition:} Under complete information, each platform competes to increase its sampling rate without caring the others' AoI increases, and we prove the price of anarchy (PoA) is infinity. Under incomplete information where some platform newly joining the information market can hide its sample cost realization from the other incumbent platforms in the Bayesian game, we show surprisingly that the new platform may get hurt from gaining more information and the PoA is also infinity. 
    
    \item \emph{Trigger mechanism of non-monetary punishment for approaching the social optimum under complete information:} To remedy the huge efficiency loss under complete information, in Section \Romannum{3} we design a trigger mechanism of non-monetary punishment in the repeated game to enforce the platforms to cooperate, where we adapt the platforms' sampling cooperation profile for fitting any discount factor. As the discount factor increases, the mechanism's achieved performance improves to approach the social optimum as the platforms are more forward-looking to cooperate. 
    \item \emph{Approximate trigger-and-punishment mechanism  design under incomplete information:}
       To reduce the efficiency loss  between the Bayesian competition equilibrium and the social optimum under incomplete information, in Section \Romannum{4} we propose an approximate mechanism in the repeated game to enforce all the platforms' cooperation.
       Note that our mechanism under complete information does not work here, as the new platform now can take advantage of hiding its cost information to strategically  under- or over-sample without triggering the punishment. We successfully redesign the trigger-and-punishment mechanism to negate the platform's information advantage and ensure no cheating. We show the mechanism's performance improves as we have more incumbent platforms with known cost information.  
    \end{itemize}
\begin{figure}
\centering
\includegraphics[height=2.5in, width=4.2in]{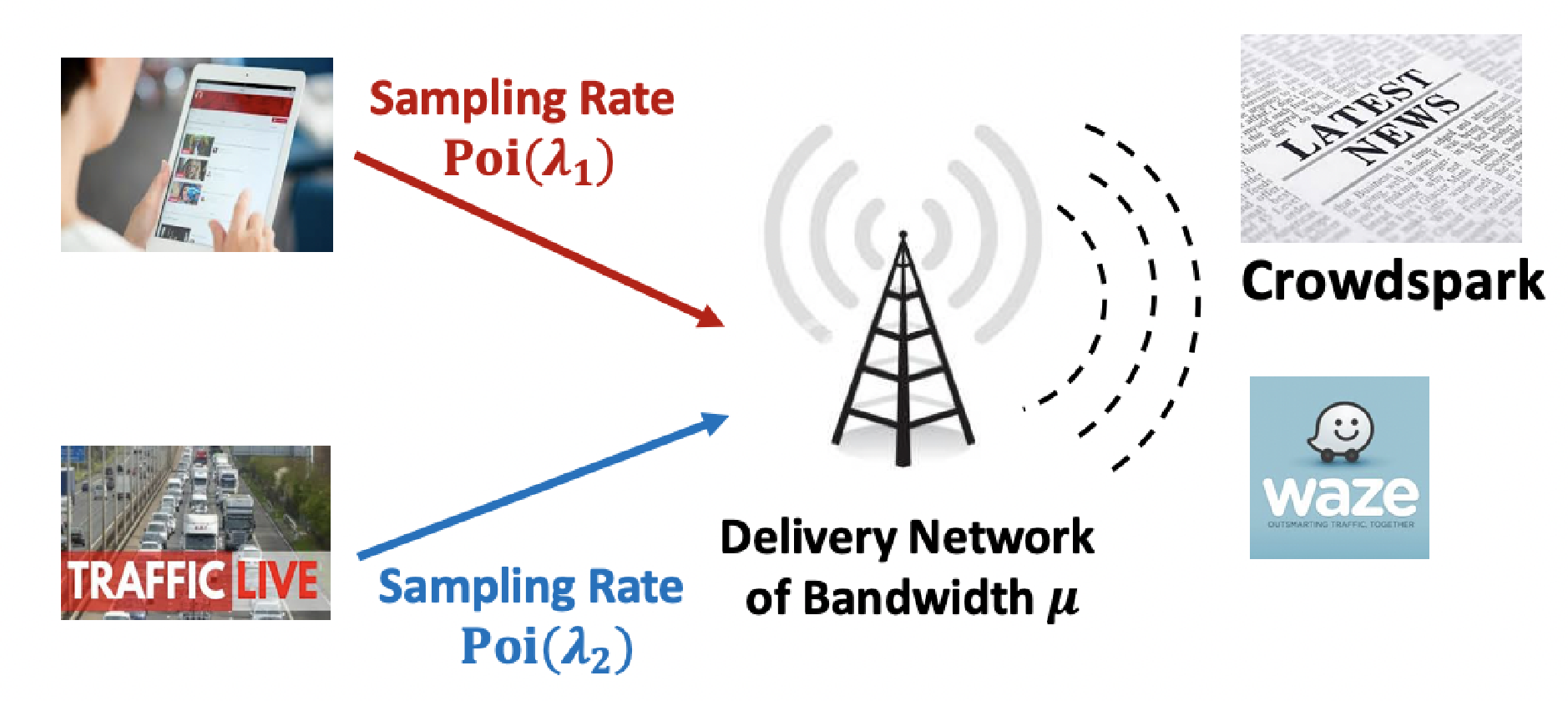}
\caption{Illustration of System Model. For example, here $N=2$ platforms (Crowdspark and Waze) respectively buy samples from their crowdsourcing pools at Poisson rates $\lambda_1$ and $\lambda_2$, and then update to their end customers through the same delivery network of bandwidth $\mu$. }
\end{figure}

\section{System Model and preliminary results}
As shown in Figure 1, $N$ online platforms (e.g., Crowdspark and Waze) first collect new samples (e.g., reports of spotted news and traffic observations) from their crowdsourcing pools at Poisson rates $\lambda_1$, $\lambda_2$, $\cdots$, $\lambda_N$, respectively, and then share the ISP's content delivery network of limited bandwidth $\mu$  to update new content to their customers in the same area. Here $\lambda_i$ denotes the mean rate of sampling generation of new information for platform $i\in\{1, 2, \cdots, N\}$. As in many of the AoI literature (e.g., \cite{kaul2012status, yates2018age1, yates2012real}), we assume in each platform's crowdsourcing pool, the sampling from each sensor source over time follows a Poisson process, and the total sampling to platform $i$ observations as superposition also follows Poisson process with mean rate $\lambda_i$. Platform $i$ can control mean rate $\lambda_i$ by providing proper incentive compensation to the crowdsourcing pool as in \cite{xuehe} and \cite{duan2014motivating}, and its average sampling cost is $c_i\lambda_i$ with unit compensation cost $c_i$.

 After sampling, we consider the content transmission time through the delivery network follows an exponential distribution with rate $\mu$. Without much loss of generality, we assume that the content delivery network applies a LCFS M/M/1 queue with preemption policy for processing the $N$ platforms' updates as in \cite{kaul2012status}, where the latest content arrival can preempt any platform's ongoing update in the network.\footnote{Though more involved, our model and the following analysis can also be extended to other queueing models such as a first-come-first-serve M/M/1 in \cite{kaul2012real}. There, the time-average AoI expression still shows the benefit of a platform to increase its sampling rate at the cost of increasing the other platforms' AoIs.} According to \cite{kaul2012status}, the time-average AoI at platform $i$ is given by 
 
 \begin{align}
    \Delta_i = \frac{\sum\limits_{j = 1}^N\lambda_j}{\lambda_i}(\frac{1}{\sum\limits_{j = 1}^N\lambda_j} + \frac{1}{\mu}), \nonumber
\end{align}
which decreases with its own sampling rate $\lambda_i$ and network bandwidth $\mu$, and increases with the other platforms' total sampling rate $\lambda_{-i} = \sum\limits_{j =1}^N\lambda_j - \lambda_i$ under negative network externalities.  

By further taking our modeled sampling cost into consideration, we define platform $i$'s total cost function as
\begin{align}
    \pi_i(\lambda_i, \lambda_{-i}) = \Delta_i + c_i\lambda_i, 
\end{align}
requiring platform $i$ to balance the AoI and the sampling cost when deciding its $\lambda_i$. Unlike each platform who only aims to minimize its own cost objective, the social planner wants to minimize the social cost as defined below: 
\begin{align}
     \pi(\lambda_1, \lambda_2, \cdots, \lambda_N) = \sum\limits_{i = 1}^N\pi_i(\lambda_i, \lambda_{-i}).\nonumber
\end{align}
In practice, a platform knows its own sampling cost yet may or may not know the other platforms' costs exactly. Next we present our preliminary results for the $N$ platforms' competition equilibrium under complete and incomplete information.
\subsection{Competition Equilibrium under Complete Information}
We first consider the information scenario that each cost $c_i$ is known to all the $N$ platforms. We formulate the $N$ platforms' interaction in the non-cooperative one-shot game where platform $i$ decides its own $\lambda_i$ to minimize its total cost $\pi_i(\lambda_i, \lambda_{-i})$ in (1) without considering the others' AoIs. As outcome of this game, we denote $(\lambda_1^*, \lambda_2^*, \cdots, \lambda_N^*)$ as the equilibrium sampling rates. 

To tell the maximum efficiency loss due to their selfish competition, we use the concept of price of anarchy (PoA) below:
\begin{align}
    PoA = \max_{c_1, c_2, \cdots, c_N, \mu} \frac{\pi(\lambda_1^*, \lambda_2^*, \cdots, \lambda_N^*)}{\pi(\lambda_1^{**}, \lambda_2^{**}, \cdots, \lambda_N^{**})}, \nonumber
\end{align}
where $(\lambda_1^{**}, \lambda_2^{**}, \cdots, \lambda_N^{**})$ denote the social optimizers when the $N$ platforms cooperate to jointly minimize the social cost. By checking the first-order conditions of convex costs $\pi(\lambda_1, \cdots, \lambda_N)$ and $\pi_i(\lambda_i, \lambda_{-i})$ with respect to $\lambda_i$ for all $i \in \{1, \cdots, N\}$, we have the following result. 
\begin{proposition}
Under complete information, the social optimizers $({\lambda_1^{**}}, {\lambda_2^{**}}, \cdots, \lambda_N^{**})$ are the unique solutions to
\begin{align}
    -\frac{1}{{\lambda}_i^2} (1+\frac{\sum\limits_{j = 1}^N \lambda_j - \lambda_i}{\mu}) + c_i + \frac{1}{\mu}\bigg(\sum\limits_{j = 1}^N\frac{1}{\lambda_j} - \frac{1}{\lambda_i}\bigg) = 0, \;\; \text{$i \in\{ 1,  \cdots, N\}$}.
\end{align}
Differently, the competition equilibrium $(\lambda_1^*, \lambda_2^*, \cdots, \lambda_N^*)$ are the unique solutions to
\begin{align}
    -\frac{1}{{\lambda}_i^2} (1+\frac{\sum\limits_{j = 1}^N \lambda_j - \lambda_i}{\mu}) + c_i = 0, \;\; \text{$i \in\{ 1,  \cdots, N\}$}.
\end{align}
 By comparing (2) and (3), we conclude that competition leads over-sampling  $(\lambda_i^* \geq \lambda_{i}^{**})$ for platform
 $i \in\{ 1, \cdots, N\}$ at the equilibrium.
\end{proposition}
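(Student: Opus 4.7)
The plan is to derive equations (2) and (3) by direct differentiation, argue uniqueness of both solutions from the curvature and boundedness of the cost, and then deduce the over-sampling inequality via a monotone (supermodular) best-response argument.

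First, I would rewrite platform $i$'s cost in the expanded form $\pi_i(\lambda_i,\lambda_{-i}) = \frac{1}{\lambda_i} + \frac{1}{\mu} + \frac{\lambda_{-i}}{\lambda_i\mu} + c_i\lambda_i$, where $\lambda_{-i} = \sum_{j\neq i}\lambda_j$. Differentiating with respect to $\lambda_i$ at fixed $\lambda_{-i}$ immediately yields equation (3), and the second derivative $2\lambda_i^{-3}(1+\lambda_{-i}/\mu)>0$ shows strict convexity in $\lambda_i$, so each platform's best response to $\lambda_{-i}$ is the unique positive root $\mathrm{BR}_i(\lambda_{-i}) = \sqrt{(1+\lambda_{-i}/\mu)/c_i}$. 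For the social FOC (2), I would compute $\partial\pi/\partial\lambda_i$ by separating the contribution from $\pi_i$ itself (which reproduces the two terms already appearing in (3)) and the contributions from each $\pi_j$ with $j\neq i$, in which $\lambda_i$ enters only through the cross term $\lambda_{-j}/(\lambda_j\mu)$ and contributes $1/(\lambda_j\mu)$. Summing these cross contributions produces exactly the extra externality term $\frac{1}{\mu}\bigl(\sum_j\frac{1}{\lambda_j}-\frac{1}{\lambda_i}\bigr)$ in (2).

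For uniqueness of the competition equilibrium, I would reduce (3) to a scalar fixed-point problem. Setting $S = 1 + \sum_j\lambda_j/\mu$, each equation in (3) becomes the quadratic $c_i\lambda_i^2 + \lambda_i/\mu = S$, which has a unique positive root $\lambda_i(S)$ that is strictly increasing in $S$. Substituting back into the definition of $S$ yields a single equation in $S$ whose left-hand side grows linearly while the right-hand side grows like $\sqrt{S}$, so exactly one positive $S$ (and thus one $(\lambda_1^*,\ldots,\lambda_N^*)$) satisfies it. For uniqueness of the social optimizer, I would use a similar substitution supplemented by the observation that $\pi$ is coercive on the open positive orthant (it blows up as any $\lambda_i\to 0$ and as $\|\lambda\|\to\infty$), ensuring an interior minimizer, and then deduce that any two solutions of (2) must coincide via the same monotone comparison used below.

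The over-sampling inequality is the main structural statement, and I would obtain it by a supermodular best-response argument once the FOCs are in place. Evaluating the competition FOC (3) at the social optimum $\lambda^{**}$ gives $-(\lambda_i^{**})^{-2}(1+\lambda_{-i}^{**}/\mu)+c_i = -\frac{1}{\mu}\sum_{j\neq i}\frac{1}{\lambda_j^{**}} < 0$ for each $i$, which rearranges to $\mathrm{BR}_i(\lambda_{-i}^{**}) > \lambda_i^{**}$. Since $\mathrm{BR}_i$ is strictly increasing in every $\lambda_j$ with $j\neq i$, the joint best-response map $T(\lambda) = (\mathrm{BR}_1(\lambda_{-1}),\ldots,\mathrm{BR}_N(\lambda_{-N}))$ is componentwise monotone, and the sublinear growth of $\mathrm{BR}_i$ in $\lambda_{-i}$ makes $T$ map a sufficiently large compact box into itself. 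Iterating $T$ from $\lambda^{**}$ therefore produces a bounded coordinatewise-increasing sequence converging to a fixed point of $T$ that dominates $\lambda^{**}$; by uniqueness of the competition equilibrium this limit equals $\lambda^*$, giving $\lambda_i^* \geq \lambda_i^{**}$ for all $i$. The main obstacle I anticipate is uniqueness of the social optimum, since the joint cost fails to be convex on the positive orthant (the bilinear term $\lambda_j/\lambda_i$ has indefinite Hessian), so I cannot invoke strict convexity of $\pi$ directly; I would close this gap by combining coercivity, the scalar FOC reduction above, and the monotone-iteration scheme to show that any two interior stationary points must coincide.
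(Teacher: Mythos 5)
Your derivations of the first-order conditions (2) and (3), your scalar reduction for uniqueness of the competition equilibrium, and your monotone best-response argument for over-sampling are all sound, and two of them are genuinely different from (and arguably cleaner than) the paper's route. For the equilibrium, the paper proves uniqueness by rewriting (3) as the system $\lambda_i=\sqrt{(1+\lambda_{-i}/\mu)/c_i}$, composing the response curves for $N=2$ into a single scalar function whose convexity and boundary behavior force a unique positive root, and then inducting on $N$; your aggregate substitution $c_i\lambda_i^2+\lambda_i/\mu=S$ with $S=1+\sum_j\lambda_j/\mu$ collapses the whole system to one monotone scalar equation in $S$ and avoids the induction entirely (to make the crossing argument airtight, note that $h(S)=S-1-\sum_i\lambda_i(S)/\mu$ is convex since each $\lambda_i'(S)=1/(2c_i\lambda_i(S)+1/\mu)$ is decreasing, with $h(0^+)<0$ and $h(\infty)=+\infty$). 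For the comparison $\lambda_i^*\geq\lambda_i^{**}$, the paper only observes that the social response map has an extra positive term $\frac{1}{\mu}\bigl(\sum_j\frac{1}{\lambda_j}-\frac{1}{\lambda_i}\bigr)$ in its denominator and asserts the ordering of fixed points; your argument --- showing $\mathrm{BR}_i(\lambda_{-i}^{**})>\lambda_i^{**}$ from the sign of the externality term and then iterating the increasing map $T$ from $\lambda^{**}$ inside a self-mapped box until it converges to the unique equilibrium --- supplies exactly the Tarski-type justification the paper leaves implicit.

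The genuine gap is uniqueness of the social optimizer. Your $S$-reduction does not carry over to (2): the social FOC couples the unknowns through \emph{two} aggregates, $S=1+\sum_j\lambda_j/\mu$ and $T=\sum_j 1/\lambda_j$, so each $\lambda_i$ is the root of $\bigl(c_i+\frac{1}{\mu}(T-\frac{1}{\lambda_i})\bigr)\lambda_i^2+\lambda_i/\mu=S$ and you are left with a two-dimensional consistency system, not a single monotone scalar equation. Coercivity does give existence of an interior stationary point, but the fallback you propose --- ``the same monotone comparison'' --- does not close the argument: the social response map $\hat{T}_i(\lambda_{-i})=\sqrt{(1+\lambda_{-i}/\mu)/(c_i+\frac{1}{\mu}\sum_{j\neq i}1/\lambda_j)}$ is indeed componentwise increasing, but a monotone map can have many fixed points, so monotonicity alone proves nothing about uniqueness. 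The paper closes exactly this step by a concrete $N=2$ analysis: it shows each response curve is concave and strictly increasing, composes them into $g(\lambda_2)=\lambda_2-\lambda_2(\lambda_1(\lambda_2))$, proves $g$ is convex with $g'(0^+)<0$, $g'(\infty)>0$, $g(0)=0$, and $g(\infty)=\infty$, hence has a unique positive root, and then extends to general $N$ by induction on the number of platforms. You would need to import this (or an equivalent contraction/strict-concavity argument for the composed map) to complete your proof; as written, the uniqueness of $(\lambda_1^{**},\ldots,\lambda_N^{**})$ --- which your over-sampling step does not depend on, but which the proposition explicitly asserts --- is not established.
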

The proof is given in Appendix A.
We notice from the third-term on the left-hand-side of (2) that at the social optimum platform $i$ cares about its sampling's negative externality effect $\frac{1}{\mu}\big({\sum\limits_{j = 1}^N}\frac{1}{\lambda_j}-\frac{1}{\lambda_i}\big)$ on the other $N-1$ platforms and will sample conservatively, while this is missing in (3) due to platform $i$'s selfishness at the equilibrium. 
The following result further explains their competition to over-sample at the equilibrium. 
\begin{corollary}
At the competition equilibrium, $\lambda_i^*$ increases with $\lambda_{j}^*$, and decreases with $c_i$, $c_j$ and $\mu$, respectively, where $i, j\in\{ 1, \cdots, N\}$ and $j \ne i$.
\end{corollary}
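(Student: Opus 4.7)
\emph{Proof plan.} My plan is to start from the equilibrium first-order condition (3), rearrange it into an explicit best-response function, and then read off three of the four monotonicities by inspection while handling the fourth through a monotone comparative statics argument at the fixed point. Solving (3) for $\lambda_i$ in terms of the rivals' rates gives
\begin{align*}
\lambda_i \;=\; \sqrt{\frac{1 + \lambda_{-i}/\mu}{c_i}}.
\end{align*}
From this expression the best-response slope $\partial \lambda_i/\partial \lambda_j > 0$ for every $j\neq i$, which immediately yields the first claim that $\lambda_i^*$ increases with any $\lambda_j^*$ at the equilibrium. The same expression is also strictly decreasing in $c_i$, and strictly decreasing in $\mu$ (since raising $\mu$ shrinks the $\lambda_{-i}/\mu$ term inside the square root), so if opponents held their rates fixed $\lambda_i$ would already fall with $c_i$ and with $\mu$.

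To promote these partial statements to the fixed point, and to handle $c_j$ with $j\neq i$, I would invoke monotone comparative statics. A direct calculation from (1) gives $\partial^2 \pi_i/\partial \lambda_i\partial \lambda_j = -1/(\lambda_i^2\mu)<0$, so each platform's cost exhibits strictly decreasing differences in its own rate and any opponent's rate, making the game one of strategic complementarities. A rise in $c_j$ shifts platform $j$'s best response pointwise downward while leaving the other best responses unchanged; a rise in $c_i$ or in $\mu$ shifts the relevant best responses pointwise downward as well. By Topkis's monotone comparative statics theorem for supermodular games, each of these perturbations drives the entire equilibrium profile downward, yielding $\partial \lambda_i^*/\partial c_i<0$, $\partial \lambda_i^*/\partial c_j<0$, and $\partial \lambda_i^*/\partial \mu<0$, which completes the corollary.

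The main obstacle I expect is keeping track of the simultaneous feedback: changing $c_j$ moves $\lambda_j^*$ directly and then propagates through every other best-response equation, so without a structural argument the net sign on $\lambda_i^*$ is not obvious. The supermodular-game route above dispatches this cleanly, but if one preferred a calculus-only proof, the alternative would be to apply the implicit function theorem to the system (3), whose Jacobian has positive diagonal entries $2(1+\lambda_{-i}/\mu)/\lambda_i^3$ and negative off-diagonal entries $-1/(\lambda_i^2\mu)$. Verifying that this Jacobian is a nonsingular $M$-matrix (equivalently, that strict diagonal dominance holds so that $-J^{-1}$ has entries of uniform sign) is the one piece of extra bookkeeping such a route would require, and it is precisely the technical step I would expect to consume the most effort.
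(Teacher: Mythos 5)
Your proposal is correct, and it is in fact more rigorous than what the paper offers: the paper states this corollary without proof, evidently intending it to be read off from the best-response form of (3), $\lambda_i=\sqrt{(1+\lambda_{-i}/\mu)/c_i}$, exactly as you do for the direct effects. Where you go beyond the paper is in recognizing that the claims about $c_j$ and (at the fixed point) $c_i$ and $\mu$ genuinely require closing the simultaneous-feedback loop, and your supermodular-game route does this cleanly: the cross-partial $\partial^2\pi_i/\partial\lambda_i\partial\lambda_j=-1/(\lambda_i^2\mu)<0$ makes the costs submodular, each parameter increase shifts the relevant best responses pointwise downward, and Topkis/Milgrom--Roberts then gives monotonicity of the extremal equilibria. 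Two small points you should make explicit. First, that theorem speaks of the largest and smallest equilibria, so you need the uniqueness established in Proposition II.1 (Appendix A) to transfer the conclusion to \emph{the} equilibrium, and you should restrict to a large compact box $[1/\sqrt{c_i},M]^N$ (invariant under the best responses for $M$ large, since they grow only like $\sqrt{M}$) so that the lattice-theoretic machinery applies. Second, in your fallback calculus route the parenthetical equating the nonsingular M-matrix property with strict diagonal dominance is not quite right: with diagonal $2c_i/\lambda_i$ and off-diagonal row sum $(N-1)/(\lambda_i^2\mu)$, plain row (or column) dominance can fail, e.g.\ when some $c_i$ is very small so that $\lambda_i^*$ is much larger than the rivals' rates; one would instead need a weighted (generalized) diagonal dominance or a spectral-radius bound on the off-diagonal part. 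Since that route is only your alternative, this does not affect the validity of the main argument.
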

Intuitively, each platform worries that its update is preempted by the other platforms, and will sample and update more frequently. Such competition causes huge efficiency loss, as shown below. 
\begin{proposition}
 Price of anarchy under complete information is PoA=$\infty$, which is achieved when the smallest sampling cost among all the platforms $\min\{c_1, \cdots, c_N\}$ tends to be zero and the largest sampling cost among all the platforms $\max\{c_1, \cdots, c_N\}$ is non-trivial.
\end{proposition}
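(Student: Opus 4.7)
The plan is to exhibit a limiting family of cost configurations matching the stated parameter regime for which the ratio $\pi(\lambda_1^*,\ldots,\lambda_N^*)/\pi(\lambda_1^{**},\ldots,\lambda_N^{**})$ can be driven to infinity. Without loss of generality (relabelling if necessary) I fix $\mu$ and positive constants $c_2,\ldots,c_N$ as the ``non-trivial'' costs and let $c_1\downarrow 0$. I then need to show that the equilibrium social cost diverges while the social-optimum cost stays bounded.

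For the equilibrium side I would invoke the first-order system (3). Applying it to $i=1$ gives $c_1(\lambda_1^*)^2 = 1+\lambda_{-1}^*/\mu \geq 1$, so $\lambda_1^*\geq 1/\sqrt{c_1}\to\infty$ (consistent with the monotonicity in the preceding corollary). Plugging this back into (3) for each $i\neq 1$ yields $c_i(\lambda_i^*)^2\geq \lambda_1^*/\mu\to\infty$, hence every $\lambda_i^*$ also diverges, but only at the slower rate $\Theta(\sqrt{\lambda_1^*})$. The decisive consequence is $L^*/\lambda_i^*\to\infty$ for each $i\neq 1$, which forces the AoI
\[\Delta_i^*=\frac{1+L^*/\mu}{\lambda_i^*}\longrightarrow\infty,\]
where $L^*=\sum_j\lambda_j^*$. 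Summing the $\pi_i^*$ then gives $\pi(\lambda_1^*,\ldots,\lambda_N^*)\to\infty$.

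For the social optimum I would bypass the coupled system (2) and simply upper-bound $\pi(\lambda_1^{**},\ldots,\lambda_N^{**})$ by the value of any fixed feasible profile. Choosing $\lambda_i=K$ uniformly for any constant $K>0$ and substituting into (1) yields
\[\pi(K,\ldots,K)=\frac{N}{K}+\frac{N^2}{\mu}+K\sum_{i=1}^N c_i,\]
which remains bounded as $c_1\downarrow 0$ because $\mu$ and $c_2,\ldots,c_N$ are held fixed. Hence the denominator stays $O(1)$ while the numerator diverges, so the supremum defining PoA is infinite.

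The step I expect to be the main obstacle is the asymptotic bookkeeping on the equilibrium side: one must verify that the ``collateral'' over-sampling $\lambda_i^*$ for $i\neq 1$ is slow enough relative to $L^*$ that $\Delta_i^*$ actually diverges rather than stabilizing at a finite value. The rate comparison $\lambda_i^* = \Theta(\sqrt{\lambda_1^*})$ extracted from (3), together with the bound $L^*\geq \lambda_1^*$, is exactly the ingredient that resolves this; once it is in hand, the rest is substitution and the monotonicity already recorded in the preceding corollary. The non-triviality assumption on $\max_i c_i$ is only used to ensure the reference platform whose cost is driven to zero is not the only one in the system, so that the negative-externality channel through which the inefficiency is amplified is actually present.
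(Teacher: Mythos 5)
Your proposal is correct and follows essentially the same route as the paper: send the smallest cost to zero so that the equilibrium sampling rates diverge via the first-order system (3), while the social-optimum cost stays bounded (you make this explicit with a fixed feasible profile, which the paper merely asserts). The only, harmless, difference is that you witness the numerator's divergence through the AoI terms $\Delta_i^*$ --- which forces you into the extra rate bookkeeping $\lambda_i^*=\Theta(\sqrt{\lambda_1^*})$ --- whereas the paper takes the shorter path of lower-bounding the equilibrium social cost by the single sampling-cost term $c_j\lambda_j^*$ of the largest-cost platform, which diverges immediately once $\lambda_j^*\to\infty$.
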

\begin{proof}
Suppose $c_i = \min\{c_1, \cdots, c_N\}$ and $c_j = \max\{c_1, \cdots, c_N\}$, we need to prove PoA=$\infty$ given $c_i \to 0$ and non-trivial $c_j > 0$.
As $c_i \to 0$, platform $i$ does not care its sampling cost and only aims to minimize its AoI. According to (3),  $\lambda_i^*$ will go to infinity and this stimulates platform $j$'s $\lambda_j^* = \sqrt{\frac{1+(\sum\limits^N_{k =1}\lambda_k^*-\lambda_j^*)/\mu}{c_j}}$ to reach infinity. However, at the social optimum, both $(\lambda_1^{**}, \lambda_2^{**}, \cdots, \lambda_N^{**})$ in (2) and the resultant social cost $\pi(\lambda_1^{**}, \lambda_2^{**}, \cdots, \lambda_N^{**})$ are finite. Then we have
    \begin{align}
    PoA &= \lim_{c_i \to 0} \frac{\sum\limits_{k=1}^N\frac{\sum\limits_{l = 1}^N\lambda_l^*}{\lambda_k^*}\bigg(\frac{1}{\sum\limits_{l = 1}^N\lambda_l^*} + \frac{1}{\mu}\bigg)+ \sum\limits_{l = 1}^{N}c_l\lambda_l^*}{\pi(\lambda_1^{**}, \lambda_2^{**}, \cdots, \lambda_N^{**})} \geq \lim_{c_i \to 0} \frac{c_j\lambda_j^*}{\pi(\lambda_1^{**}, \lambda_2^{**}, \cdots, \lambda_N^{**})} = \infty. \nonumber  \qedhere
\end{align}
\end{proof}
This huge efficiency loss motivates us to design non-monetary cooperation mechanisms to mitigate the competition among the platforms. Note that the social optimum can be easily realized if the social planner (e.g., the ISP) can charge a monetary penalty $\frac{1}{\mu}\big(\sum\limits_{j = 1}^N\frac{1}{\lambda_j^{**}}-\frac{1}{\lambda_i^{**}}\big)$ per sampling rate $\lambda_i$ from platform $i$, where $i \in\{1, \cdots, N\}$. However, this additional charging based on usage is intrusive and difficult to implement in practice, given the content platforms' existing flat contracts with the ISP. This motivates us to design non-monetary cooperation mechanisms in the repeated game in Section \Romannum{3}, which is more challenging. 
\subsection{Competition Equilibrium under Incomplete Information}
Now we consider the incomplete information scenario that there is 1 newly joined platform in the network (namely, platform 1) whose unit sampling cost is time-varying and its realization is only known to itself, while the other $N-1$ existing platforms know each other's sampling cost exactly and are uncertain about that of the new platform.\footnote{We can also extend our analysis to another incomplete scenario where more than one platform can hide its cost information from the others, though the analysis involving many combinations of cost realizations becomes more complicated.} Accordingly, we model the public information that platform 1 has probability $p_H$ of having high sampling cost $c_H$ and probability $1-p_H$ of having low sampling cost $c_L$ each time it samples, while
 sampling cost $c_i$ of platform $i\in\{2, \cdots, N\}$ is constant over time. On one hand, platform 1 knows its $c_1$ realization ($c_L$ or $c_H$) and the other $c_i$'s. On the other hand, platform $i$ only knows all costs $c_i$'s and the probability distribution of $c_1$, and it is also aware of platform 1's information advantage. In other words, the ratio $1/N$ tells the degree of incomplete information to the social planner or public. We wonder if platform 1 benefits from this and if we have huge efficiency loss result as in complete information. To let platform 1 fully uses its information advantage, we consider a more challenging case where unit sampling costs of platforms follow the order of $p_Hc_H + (1-p_H)c_L \leq c_2 \leq \cdots \leq c_N$ with smallest mean cost for platform 1. 

Since platform 1 knows its own unit cost exactly, it will take this information advantage and adaptively decide $\lambda_1(c_H)$ when $c_1 = c_H$ and $\lambda_1(c_L)$ when $c_1 = c_L$. Unaware of $c_1$ realizations, platform $i\in\{2, \cdots, N\}$ behaves indifferently to decide $\lambda_i$ constantly over time. We model such platform competiton as a Bayesian game as follows.  

Given $c_1=c_H$, platform 1's cost function is
\begin{align}
    \pi_{1}\left(\lambda_{1}\left(c_{H}\right), \sum\limits_{i=2}^{N} \lambda_{i}\right)=\frac{\lambda_{1}\left(c_{H}\right)+\sum\limits_{i=2}^{N} \lambda_{i}}{\lambda_{1}\left(c_{H}\right)}\left(\frac{1}{\lambda_{1}\left(c_{H}\right)+\sum\limits_{i=2}^{N} \lambda_{i}}+\frac{1}{\mu}\right)+c_{H} \lambda_{1}\left(c_{H}\right),
\end{align}
and otherwise,
\begin{align}
    \pi_{1}\left(\lambda_{1}\left(c_{L}\right), \sum\limits_{i=2}^{N} \lambda_{i}\right)=\frac{\lambda_{1}\left(c_{L}\right)+\sum\limits_{i=2}^{N} \lambda_{i}}{\lambda_{1}\left(c_{L}\right)}\left(\frac{1}{\lambda_{1}\left(c_{L}\right)+\sum\limits_{i=2}^{N} \lambda_{i}}+\frac{1}{\mu}\right)+c_{L} \lambda_{1}\left(c_{L}\right).
\end{align}
Under incomplete information, the cost function of platform $i\in\{2, \cdots, N\}$ is defined below in average sense:
\begin{align}
    &\pi_{i}\left(\left(\lambda_{1}\left(c_{H}\right), \lambda_{1}\left(c_{L}\right)\right), \lambda_{i}, \lambda_{-i}\right)=p_{H} \cdot\left(\frac{\lambda_{1}\left(c_{H}\right)+\lambda_{i}+\lambda_{-i}}{\lambda_{i}}\left(\frac{1}{\lambda_{1}\left(c_{H}\right)+\lambda_{i}+\lambda_{-i}}+\frac{1}{\mu}\right)\right)  \nonumber \\
    &+\left(1-p_{H}\right) \cdot\left(\frac{\lambda_{1}\left(c_{L}\right)+\lambda_{i}+\lambda_{-i}}{\lambda_{i}}\left(\frac{1}{\lambda_{1}\left(c_{L}\right)+\lambda_{i}+\lambda_{-i}}+\frac{1}{\mu}\right)\right)+c_{i} \lambda_{i},
\end{align}
where $\lambda_{-i} = \sum\limits_{j = 2}^N\lambda_j - \lambda_i$.
The average social cost function is
\begin{align}
    \pi\left(\left(\lambda_{1}\left(c_{H}\right), \lambda_{1}\left(c_{L}\right)\right), \lambda_2, \cdots, \lambda_N\right)=&p_{H} \pi_{1}\left(\lambda_{1}\left(c_{H}\right), \sum\limits_{i=2}^{N} \lambda_{i}\right) +\left(1-p_{H}\right) \pi_{1}\left(\lambda_{1}\left(c_{L}\right), \sum\limits_{i=2}^{N} \lambda_{i}\right)\nonumber \\
    &+\sum\limits_{i=2}^{N} \pi_{i}\left(\left(\lambda_{1}\left(c_{H}\right), \lambda_{1}\left(c_{L}\right)\right), \lambda_{i}, \lambda_{-i}\right).
\end{align}
Similar to the complete information scenario, we present the concept of PoA below in this incomplete information scenario:
\begin{align}
    P_{o} A=\max _{c_{L}, c_{H}, c_{2}, \cdots, c_{N}, \mu, p_{H}} \frac{\pi\left(\left(\lambda_{1}^{*}\left(c_{H}\right), \lambda_{1}^{*}\left(c_{L}\right)\right), \lambda_{2}^{*}, \cdots, \lambda_{N}^{*}\right)}{\pi\left(\left(\lambda_{1}^{* *}\left(c_{H}\right), \lambda_{1}^{* *}\left(c_{L}\right)\right), \lambda_{2}^{* *}, \cdots, \lambda_{N}^{* *}\right)}, \nonumber
\end{align}
 where $\left(\left(\lambda_{1}^{*}\left(c_{H}\right), \lambda_{1}^{*}\left(c_{L}\right)\right), \lambda_{2}^{*}, \cdots, \lambda_{N}^{*}\right)$ summarize competition equilibrium and social optimizers are summarized as $\left(\left(\lambda_{1}^{* *}\left(c_{H}\right), \lambda_{1}^{* *}\left(c_{L}\right)\right), \lambda_{2}^{* *}, \cdots, \lambda_{N}^{* *}\right)$. By checking the first-order conditions of convex costs $\pi_{1}\left(\lambda_{1}\left(c_{H}\right), \sum\limits_{i=2}^{N} \lambda_{i}\right)$ in (4), $\pi_{1}\left(\lambda_{1}\left(c_{L}\right), \sum\limits_{i=2}^{N} \lambda_{i}\right)$ in (5), $\pi_{i}(\left(\lambda_{1}\left(c_{H}\right), \lambda_{1}\left(c_{L}\right)\right), \lambda_{i}$, $\lambda_{-i})$ in (6) and $\pi\left(\left(\lambda_{1}\left(c_{H}\right), \lambda_{1}\left(c_{L}\right)\right), \lambda_{2}, \cdots, \lambda_{N}\right)$ in (7) with respect to $\lambda_1(c_H)$, $\lambda_1(c_L)$ and $\lambda_i$, $i\in\{2, \cdots, N\}$, respectively, we have the following result. 
\begin{proposition}
Under incomplete information, $\left(\left(\lambda_{1}^{* *}\left(c_{H}\right), \lambda_{1}^{* *}\left(c_{L}\right)\right), \lambda_{2}^{* *}, \cdots, \lambda_{N}^{* *}\right)$ as the social optimizers are the unique solutions to
\vspace{-5pt}
\begin{table}[H]
    \small{\begin{align}
    &-\frac{1}{\lambda_{1}^{2}\left(c_{H}\right)}\left(1+\frac{\sum\limits_{i=2}^{N} \lambda_{i}}{\mu}\right)+c_{H}+\frac{1}{\mu} \sum\limits_{i=2}^{N} \frac{1}{\lambda_{i}}=0,  \\
    &-\frac{1}{\lambda_{1}^{2}\left(c_{L}\right)}\left(1+\frac{\sum\limits_{i=2}^{N} \lambda_{i}}{\mu}\right)+c_{L}+\frac{1}{\mu} \sum\limits_{i=2}^{N} \frac{1}{\lambda_{i}}=0,  \\
    &p_{H}\left(-\frac{1}{\lambda_{i}^{2}}\left(1+\frac{\lambda_{1}\left(c_{H}\right)+\sum\limits_{j =2}^{N} \lambda_{j}-\lambda_i}{\mu}\right)+c_{i}+\frac{1}{\lambda_{1}\left(c_{H}\right) \mu}\right) \nonumber \\
    &+\left(1-p_{H}\right)\left(-\frac{1}{\lambda_{i}^{2}}\left(1+\frac{\lambda_{1}\left(c_{L}\right)+\sum\limits_{j =2}^{N} \lambda_{j}-\lambda_i}{\mu}\right)+c_{i}+\frac{1}{\lambda_{1}\left(c_{L}\right) \mu}\right)+\frac{1}{\mu}\bigg( \sum\limits_{j =2}^{N} \frac{1}{\lambda_{j}}-\frac{1}{\lambda_i}\bigg)=0,
\end{align}
    }
    \vspace{-20pt}
\end{table}
\noindent where $i\in\{2, \cdots, N\}$. Differently, the competition equilibrium
$\left(\left(\lambda_{1}^{*}\left(c_{H}\right), \lambda_{1}^{*}\left(c_{L}\right)\right), \lambda_{2}^{*}, \cdots, \lambda_{N}^{*}\right)$ are the unique solutions to
\vspace{-5pt}
\begin{table}[H]
   \small{
   \begin{align}
    &-\frac{1}{\lambda_{1}^{2}\left(c_{H}\right)}\left(1+\frac{\sum\limits_{i=2}^{N} \lambda_{i}}{\mu}\right)+c_{H}=0,   \\
     &-\frac{1}{\lambda_{1}^{2}\left(c_{L}\right)}\left(1+\frac{\sum\limits_{i=2}^{N} \lambda_{i}}{\mu}\right)+c_{L}=0,   \\
    &-\frac{p_{H}}{\lambda_{i}^{2}}\left(1+\frac{\lambda_{1}\left(c_{H}\right)+\sum\limits_{j =2}^{N} \lambda_{j}-\lambda_i}{\mu}\right)-\frac{1-p_{H}}{\lambda_{i}^{2}}\left(1+\frac{\lambda_{1}\left(c_{L}\right)+\sum\limits_{j =2}^{N} \lambda_{j}-\lambda_i}{\mu}\right)+c_{i}=0,
\end{align}
   }
   \vspace{-20pt}
\end{table}
\noindent where $i\in\{2, \cdots, N\}$.
All the platforms will over-sample at equilibrium, i.e., $\lambda_1^*(c_H)\geq \lambda_1^{**}(c_H) $, $\lambda_1^*(c_L)\geq \lambda_1^{**}(c_L) $ and $\lambda_i^* \geq \lambda_i^{**}$. Additionally, we have $\lambda_1^*(c_H)/\lambda_1^*(c_L) = \sqrt{c_L/c_H}$.
\end{proposition}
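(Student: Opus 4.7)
The plan is to follow the template of Proposition 1 but accommodate platform 1's two contingent sampling rates $\lambda_1(c_H)$ and $\lambda_1(c_L)$ and the prior-weighted cost faced by the incumbents. I would first verify strict convexity of every relevant objective in the relevant decision variable: for (4) and (5) the second derivative with respect to $\lambda_1(c_\cdot)$ equals $2\lambda_1^{-3}(c_\cdot)\bigl(1+\sum_{i=2}^N \lambda_i/\mu\bigr)>0$; for (6), strict convexity in $\lambda_i$ follows because it is a convex combination of two strictly convex functions of $\lambda_i$; and the social cost (7) inherits strict convexity coordinate-wise. Existence and uniqueness of both the equilibrium and the optimizer then follow exactly as in the complete-information proof of Proposition 1 (bounded best responses plus strict monotonicity).

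Next I would derive the two systems by setting the relevant partial derivatives to zero. For the equilibrium, $\partial \pi_1/\partial \lambda_1(c_H)=0$ applied to (4) gives (11), the analogous computation on (5) gives (12), and $\partial \pi_i/\partial \lambda_i=0$ applied to (6) gives (13). For the social optimum I would differentiate (7) coordinate by coordinate: the key observation is that $\lambda_1(c_H)$ appears inside every incumbent's AoI term with weight $p_H$, so when one differentiates $\sum_{j=2}^N \pi_j$ with respect to $\lambda_1(c_H)$ one picks up $p_H\cdot\frac{1}{\mu}\sum_{i=2}^N\frac{1}{\lambda_i}$, and after dividing through by the outer $p_H$ factor this yields exactly the externality term in (8). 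The same reasoning gives (9); for (10) the externality comes from the incumbents' rates appearing in each other's AoIs, yielding $\frac{1}{\mu}\bigl(\sum_{j=2}^N\frac{1}{\lambda_j}-\frac{1}{\lambda_i}\bigr)$.

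For over-sampling, I would compare the two systems coordinate-wise. Substituting the social optimizer into the equilibrium FOCs (11)--(13), the left-hand sides are strictly negative because the non-negative externality terms present in (8)--(10) are absent. Since each equilibrium FOC is strictly increasing in the platform's own rate (with the other coordinates held fixed), the unique equilibrium value in that coordinate must lie to the right, giving $\lambda_1^*(c_H)\geq\lambda_1^{**}(c_H)$, $\lambda_1^*(c_L)\geq\lambda_1^{**}(c_L)$ and $\lambda_i^*\geq\lambda_i^{**}$. Finally, dividing (11) by (12) cancels the common factor $1+\sum_{i=2}^N\lambda_i^*/\mu$ and leaves $\lambda_1^{*2}(c_L)/\lambda_1^{*2}(c_H)=c_H/c_L$, so taking square roots gives the claimed ratio $\sqrt{c_L/c_H}$.

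The main obstacle will be the coordinate-wise comparison in the third step: the FOCs are coupled through $\sum_i \lambda_i$, so a naive argument only compares derivatives at one candidate point rather than at the actual fixed point of the other system. I would make the argument rigorous either by invoking monotone comparative statics after checking that the games are supermodular in the appropriate orthant, or, more directly, by contradiction: if some coordinate of $\lambda^{**}$ exceeded the corresponding coordinate of $\lambda^*$, then re-evaluating the social gradient at $\lambda^*$ with the non-negative externality terms restored would contradict the equilibrium condition in that coordinate.
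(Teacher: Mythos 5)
Your proposal is correct and follows essentially the same route as the paper: verify coordinate-wise convexity, obtain the two systems from first-order conditions (with the externality terms $\frac{p_H}{\mu}\sum_{i=2}^{N}\frac{1}{\lambda_i}$ etc.\ arising exactly as you describe), inherit uniqueness from the structurally identical complete-information systems, and get the ratio by dividing (11) by (12) using the fact that the incumbents' rates are not contingent on $c_1$. For the over-sampling comparison, the paper's Appendix A handles the coupling you flag by noting that each social-optimum best response is the equilibrium best response with an extra positive term in the denominator, hence pointwise smaller, and that all best responses are increasing, so the fixed point shifts down --- which is precisely the monotone-comparative-statics argument you propose as one of your two options.
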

The proof is given in Appendix B.
Similar to the complete information scenario, all the platforms unnecessarily over-sample at competition equilibrium. Such competition also causes huge efficiency loss, as shown below by following a similar proof of Proposition \Romannum{2}.2. 
\begin{proposition}
Price of anarchy under incomplete information is PoA$=\infty$, which is achieved when the smaller cost $c_L$ of platform 1 tends to be zero while some $c_i$  with $i\in\{2,...,N\}$ is non-trivial. 
\end{proposition}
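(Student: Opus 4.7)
The plan is to mimic the proof strategy of Proposition II.2 but now funneled through equation (12) rather than the complete-information equilibrium condition, since the low-cost realization $c_L$ of platform 1 plays the role previously played by the smallest $c_i$. The goal is to pick a parameter regime ($c_L\to 0$ with some $c_i$ bounded away from zero) in which the equilibrium social cost blows up while the social-optimum cost stays bounded.

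First I would rewrite the Bayesian equilibrium conditions (11)--(13) of Proposition II.4 explicitly as
\begin{align}
\lambda_1^{*}(c_L)=\sqrt{\tfrac{1+\sum_{i=2}^{N}\lambda_i^{*}/\mu}{c_L}}, \qquad
(\lambda_i^{*})^2 c_i = 1+\tfrac{p_H\lambda_1^{*}(c_H)+(1-p_H)\lambda_1^{*}(c_L)+\sum_{j=2}^{N}\lambda_j^{*}-\lambda_i^{*}}{\mu},\nonumber
\end{align}
and argue by contradiction that $\lambda_1^{*}(c_L)\to\infty$ as $c_L\to 0$. If it stayed bounded, then by the second relation all $\lambda_i^{*}$ would stay bounded (since $c_i>0$), but then the right-hand side of the first relation would diverge, a contradiction. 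Plugging the divergence of $\lambda_1^{*}(c_L)$ back into the equation for $\lambda_i^{*}$ with non-trivial $c_i$ shows $(\lambda_i^{*})^2\ge (1-p_H)\lambda_1^{*}(c_L)/(c_i\mu)\to\infty$, so $\lambda_i^{*}\to\infty$ as well. Consequently the social cost (7) at equilibrium is at least $c_i\lambda_i^{*}\to\infty$.

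Next I would show the denominator, i.e., the social cost at the social optimum, remains finite as $c_L\to 0$. Equation (9) gives
\begin{align}
\lambda_1^{**}(c_L)=\sqrt{\tfrac{1+\sum_{i=2}^{N}\lambda_i^{**}/\mu}{c_L+\tfrac{1}{\mu}\sum_{i=2}^{N}1/\lambda_i^{**}}},\nonumber
\end{align}
and the additional negative-externality term $\tfrac{1}{\mu}\sum 1/\lambda_i^{**}$ in the denominator is strictly positive and bounded away from zero (because the coupled system (8)--(10) yields finite positive $\lambda_i^{**}$ for $i\ge 2$, with a bound independent of $c_L$, exactly as in Proposition II.1). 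Therefore $\lambda_1^{**}(c_L)$ stays bounded even when $c_L\to 0$, and all the AoI terms $\Delta_i$ and sampling costs in (7) remain finite. Combining the two halves,
\begin{align}
\mathrm{PoA}\;\ge\;\lim_{c_L\to 0}\frac{c_i\lambda_i^{*}}{\pi\left((\lambda_1^{**}(c_H),\lambda_1^{**}(c_L)),\lambda_2^{**},\cdots,\lambda_N^{**}\right)}\;=\;\infty,\nonumber
\end{align}
which gives the claim.

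The main obstacle I anticipate is the propagation step, namely showing that the over-sampling forced on platform 1 by $c_L\to 0$ necessarily drags every other platform $i\in\{2,\dots,N\}$ to sample unboundedly. This is the only place where the coupling through the preemptive M/M/1 queue and the Bayesian averaging over $c_H,c_L$ really matters; I expect the cleanest way is the contradiction argument above, leveraging the monotonicity/uniqueness in Proposition II.4, rather than trying to solve the coupled system (11)--(13) in closed form. The remaining steps (finiteness at the social optimum, and dominance of a single $c_i\lambda_i^{*}$ term in the social cost) are essentially identical in spirit to the complete-information argument in Proposition II.2.
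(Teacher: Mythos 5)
Your proposal is correct and follows essentially the same route as the paper's proof: let $c_L\to 0$ so that $\lambda_1^*(c_L)\to\infty$ by the equilibrium condition (12), propagate this through (13) to force $\lambda_i^*\to\infty$ for any non-trivial $c_i$, lower-bound the equilibrium social cost by the single term $c_i\lambda_i^*$, and note that the social-optimum cost stays finite. The only difference is that you supply explicit justifications (the contradiction argument for the divergence of $\lambda_1^*(c_L)$ and the boundedness of $\lambda_1^{**}(c_L)$ via the externality term in (9)) for steps the paper states without elaboration.
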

\begin{proof}
We want to show PoA$=\infty$ given $c_L \to 0$ and non-trivial some $c_i > 0$ for platform $i\in\{2,...,N\}$.
As $c_L \to 0$, platform 1 does not care its sampling cost when $c_1 = c_L$ and only aims to minimize its AoI, which is decreasing in $\lambda_1(c_L)$ according to (5). Thus optimal $\lambda_1^*(c_L)$ will go to infinity.

As $c_i > 0$ is non-trivial, $\lambda_1^*(c_L)$ going to infinity stimulates platform $i$'s sampling rate $\lambda_i^*$ as $\lambda_i^* = \sqrt{\frac{1+(p_H\lambda_1^*(c_H)+(1-p_H)\lambda_1^*(c_L)+\sum\limits_{j = 2}^N\lambda_j^*-\lambda_i^*)/\mu}{c_i}}$ to reach infinity. However, at the social optimum, both $\left(\left(\lambda_{1}^{* *}\left(c_{H}\right), \lambda_{1}^{* *}\left(c_{L}\right)\right), \lambda_{2}^{* *}, \cdots, \lambda_{N}^{* *}\right)$ in (8)-(10) and the resultant social cost $\pi((\lambda_{1}^{* *}(c_{H})$, $\lambda_{1}^{* *}(c_{L}))$, $ \lambda_{2}^{* *}, \cdots, \lambda_{N}^{* *})$ are finite. Then we have
\begin{align}
    PoA &=  \lim_{c_L \to 0} \frac{\frac{p_H}{\lambda_1^{*}(c_H)}(1+{\sum\limits_{j=2}^N\lambda_j^{*}}/\mu) + \frac{1-p_H}{\lambda_1^{*}(c_L)}(1+{\sum\limits_{j=2}^N\lambda_j^{*}}/\mu)+ (1-p_H)c_L\lambda_1^*(c_L) +{\sum\limits_{j =2}^N c_j\lambda_j^{*}} }{\pi((\lambda_1^{**}(c_H), \lambda_1^{**}(c_L)), \lambda_2^{**}, \cdots, \lambda_N^{**})}\nonumber \\
    &+\frac{\sum\limits_{j=2}^N\frac{1}{\lambda_j^{*}}
    \big(1+\big(p_H\lambda_1^*(c_H) + (1-p_H)\lambda_1^*(c_L) + {\sum\limits_{k =2}^N\lambda_k^{*}} - \lambda_j^*\big)/\mu\big) + p_H c_H\lambda_1^*(c_H) + {N}/\mu}{\pi((\lambda_1^{**}(c_H), \lambda_1^{**}(c_L)), \lambda_2^{**}, \cdots, \lambda_N^{**})} \nonumber \\
    &\geq \frac{ c_i\lambda_i^{*}}{\pi((\lambda_1^{**}(c_H), \lambda_1^{**}(c_L)), \lambda_2^{**}, \cdots, \lambda_N^{**})} = \infty.  \nonumber \qedhere
\end{align}
\end{proof}
If the social planner or ISP can charge $\frac{1}{\mu} \sum\limits_{i=2}^{N} \frac{1}{\lambda_{i}^{* *}}$ per update from platform 1 and $\frac{p_{H}}{\lambda_{1}^{**}\left(c_{H}\right) \mu}+\frac{1-p_{H}}{\lambda_{1}^{**}\left(c_{L}\right) \mu}+\frac{1}{\mu}\big( \sum\limits_{j = 2}^N \frac{1}{\lambda_{j}^{* *}}- \frac{1}{\lambda_i^{**}}\big)$ from platform $i\in\{2, \cdots, N\}$, then the social optimum can be achieved. 
However, this additional charging based on usage is intrusive and difficult to implement in practice, given the content platforms' existing flat contracts with the ISP. This motivates us to design non-monetary cooperation mechanisms in the repeated game in Section \Romannum{4}, which is more challenging. 

Finally, we check platform 1's equilibrium cost objective and wonder if it takes advantage from knowing more information of its own cost realization.
\begin{proposition}
Under incomplete information, the one-shot cost of platform 1 when $c_1=c_H$ is greater than that under complete information, and once $p_H$ is large, even its time-average cost $p_H\pi_1(\lambda_1^*(c_H), \lambda_2^*, \cdots, \lambda_N^* )+(1-p_H)\pi_1(\lambda_1^*(c_L), \lambda_2^*, \cdots, \lambda_N^*)$ becomes greater than that under complete information.
\end{proposition}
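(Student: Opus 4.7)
My plan has two parts, one for each claim.

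\textbf{Part 1 (one-shot cost at $c_1=c_H$).} I would first show that each platform $i\in\{2,\ldots,N\}$ oversamples at the Bayesian competition equilibrium relative to the complete-information equilibrium conditioned on $c_1=c_H$. Combining the $p_H$- and $(1-p_H)$-weighted summands in (13) rewrites the first-order condition for platform $i$ as
\begin{align*}
-\frac{1}{\lambda_i^2}\Bigl(1+\frac{\bar\lambda_1+\sum_{j=2}^N\lambda_j-\lambda_i}{\mu}\Bigr)+c_i=0,
\end{align*}
where $\bar\lambda_1:=p_H\lambda_1^*(c_H)+(1-p_H)\lambda_1^*(c_L)$. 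This is structurally identical to the best-response equation obtained from (3) at $c_1=c_H$, except that $\lambda_1^*(c_H)$ is replaced by $\bar\lambda_1$. The relation $\lambda_1^*(c_H)/\lambda_1^*(c_L)=\sqrt{c_L/c_H}<1$ from the preceding proposition gives $\bar\lambda_1>\lambda_1^*(c_H)$. Since the game among platforms $2,\ldots,N$ is supermodular in rivals' rates, this strictly larger effective source pushes every $\lambda_i^*$ upward, yielding $\sum_{i=2}^N\lambda_i^{*,I}>\sum_{i=2}^N\lambda_i^{*,C,H}$.

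Next, platform $1$'s best-response cost is strictly increasing in opponents' aggregate rate. Writing $\pi_1(\lambda_1,\lambda_{-1})=1/\lambda_1+1/\mu+\lambda_{-1}/(\lambda_1\mu)+c_1\lambda_1$, the envelope theorem gives $\frac{d}{d\lambda_{-1}}\min_{\lambda_1}\pi_1=1/(\lambda_1^*\mu)>0$. Combined with the oversampling conclusion above, platform $1$'s one-shot equilibrium cost at $c_1=c_H$ under incomplete information strictly exceeds that under complete information, which is the first claim.

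\textbf{Part 2 (time-average cost for large $p_H$).} I would write the difference of time-average costs (incomplete minus complete) as
\begin{align*}
\Delta(p_H):=p_H\bigl(\pi_1^{I,H}-\pi_1^{C,H}\bigr)+(1-p_H)\bigl(\pi_1^{I,L}-\pi_1^{C,L}\bigr),
\end{align*}
where $I$ and $C$ tag the two information settings. A symmetric version of Part 1---using $p_H/\sqrt{c_H}+(1-p_H)/\sqrt{c_L}<1/\sqrt{c_L}$, which keeps the effective source $\bar\lambda_1$ below the complete-information source $\lambda_1^{C,L}$ and hence forces $\lambda_{-1}^{*,I}<\lambda_{-1}^{*,C,L}$---shows $\pi_1^{I,L}<\pi_1^{C,L}$, so the second bracket is negative. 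At $p_H=1$ the Bayesian equilibrium collapses to the complete-information equilibrium at $c_1=c_H$, so $\Delta(1)=0$. I would then implicitly differentiate the BNE system (11)--(13) in $p_H$ at $p_H=1$ to verify $\Delta'(1)<0$, which by continuity gives $\Delta(p_H)>0$ on some nonempty interval $(p_H^*,1)$.

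\textbf{Main obstacle.} Part 1 is a clean monotone-comparative-statics exercise. The hard step is Part 2: the two first-order effects---small harm with probability near $1$ versus large benefit with probability near $0$---are both linear in $1-p_H$, so the dominant contribution is not obvious a priori. I would pin this down by implicitly differentiating (13) at $p_H=1$ to compute $d\bigl(\sum_{i=2}^N\lambda_i^{*,I}\bigr)/dp_H$, then combining with the envelope formula $d\pi_1^*/d\lambda_{-1}=1/(\lambda_1^*\mu)$ to evaluate $\Delta'(1)$ explicitly and check its sign.
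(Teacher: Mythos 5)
Your Part 1 is correct and is essentially the paper's argument in Appendix C: the paper proves the chain $\pi_1(\lambda_1^*(c_H),\sum_{j\ge2}\lambda_j^*)\ \ge\ \pi_1(\lambda_1^*(c_H),\sum_{j\ge2}\bar\lambda_j(c_H))\ \ge\ \pi_1(\bar\lambda_1(c_H),\sum_{j\ge2}\bar\lambda_j(c_H))$ using exactly your two ingredients --- the incumbents over-sample in the Bayesian equilibrium relative to the complete-information equilibrium conditioned on $c_1=c_H$, and $\pi_1$ is increasing in the rivals' aggregate rate while $\bar\lambda_1(c_H)$ is the best response. Your envelope-theorem packaging merges the paper's two inequalities into one and is, if anything, cleaner; your monotone-fixed-point justification of the over-sampling is at least as explicit as the paper's one-line ``by comparing (11)--(13) and (31).'' One small imprecision: the complete-information first-order condition contains $\bar\lambda_1(c_H)$, not $\lambda_1^*(c_H)$, so the comparison must be run between the two fixed-point systems (the Bayesian system presents a larger effective platform-1 rate to the incumbents at every candidate profile), not between two numbers from the same system; the conclusion survives.

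Part 2 is where the gap is. You correctly establish the same two building blocks the paper uses --- the bracket at $c_H$ is positive (Part 1, the paper's (33)) and the bracket at $c_L$ is negative (the paper's (34)) --- but your closing move is to evaluate $\Delta'(1)$ by implicit differentiation and you never carry out that computation or verify its sign. This is not a routine omission: as you yourself note, $p_H(\pi_1^{I,H}-\pi_1^{C,H})$ and $(1-p_H)(\pi_1^{C,L}-\pi_1^{I,L})$ both vanish linearly in $1-p_H$ at $p_H=1$, so the sign of $\Delta'(1)=A'(1)+B(1)$ hinges on whether the envelope term $\frac{1}{\lambda_1^*(c_H)\mu}\frac{d\lambda_{-1}^{*}}{dp_H}\big|_{p_H=1}$ outweighs the limit of the $c_L$-bracket; nothing you have written determines this, and if $\Delta'(1)\ge 0$ your argument yields nothing near $p_H=1$. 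The paper closes this step by a different, purely algebraic route: it rearranges $\pi_1^C\le\pi_1^I$ into the explicit threshold condition $p_H\ \ge\ B/(A+B)$ with $A=\pi_1^{I,H}-\pi_1^{C,H}>0$ and $B=\pi_1^{C,L}-\pi_1^{I,L}>0$, and reads ``$p_H$ large'' as ``$p_H$ above this threshold in $(0,1)$.'' That rearrangement is what you should supply in place of the local analysis (noting, if you want to be more careful than the paper, that $A$ and $B$ themselves depend on $p_H$, so the threshold is implicit --- which is precisely the subtlety your derivative-at-$1$ approach was trying to confront, and which would indeed require the computation you deferred).
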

The proof is given in Appendix C.
Under complete information, when $c_1=c_H$ platform 1 does not want to sample much to save its sampling cost, and platform $i\in\{2, \cdots, N\}$ knowing $c_1=c_H$ expects weak limited negative network externalities from platform 1 and also samples conservatively. However, under incomplete information, platform $i$ can no longer observe $c_1=c_H$ or $c_1=c_L$ instances,  and its over-sampling when $c_1=c_H$ forces platform 1 to over-sample, intensifying the competition and hurting all. Once $p_H$ is large, this happens more often and even platform 1 loses on average sense.

\section{Trigger-and-punishment Mechanism under Complete Information}
To remedy the huge inefficiency with PoA$=\infty$ proved in Proposition \Romannum{2}.2, we want to stimulate cooperation between the $N$ platforms. Without direct pricing or penalty, this is difficult to enforce in one-shot, and thus we propose to use an infinitely repeated game to shift the $N$ platforms' myopic decision-making to be more forward-looking in the long run. In this repeated game, all the platforms will simultaneously  play the non-cooperative one-shot game in Section \Romannum{2}.A for infinitely many rounds with discount factor $\delta \in (0, 1)$. Note that $\delta$ tells how much a platform evaluates its one-shot cost in next round as compared to the current cost. 
Yet this repeated game alone is not enough to ensure cooperation, as each platform will still behave the same as $(\lambda_1^*,\lambda_2^*, \cdots, \lambda_N^*)$ in (3) in each round. We next propose a trigger mechanism of indirect punishment as credible threat to prevent their myopic over-sampling in the first place. Note that a platform's AoI decreases with its update and increases with the other platforms' updates under negative network externalities.
\begin{definition}
\emph{Our non-forgiving trigger mechanism of indirect punishment under complete information works in the following:}
 \begin{itemize}
    \item \emph{In each round, each platform follows cooperation profile $\big(\tilde{\lambda}_1(\delta)$, $\tilde{\lambda}_2(\delta)$,$\cdots$, $\tilde{\lambda}_N(\delta)\big)$ to sample if none was ever detected to deviate from this profile in the past.}\footnote{Here, each time slot in the repeated game is long enough for each platform's AoI statistic to converge to its average value. Then platform $i$ can easily identify the other platforms' total sampling rate $\lambda_{-i}$ from its own average AoI experience $\Delta_i(\lambda_i, \lambda_{-i})$ in (1) and rate $\lambda_i$. Note that each platform only has intention to over-sample. As long as one platform really over-samples, $\lambda_{-i}$ increases and all the other platforms can infer deviation to trigger the punishment. }
    \item \emph{Once a deviation was found in the past, the $N$ platforms will keep playing the equilibrium profile $(\lambda_1^*, \lambda_2^*, \cdots, \lambda_N^*)$} in (3) forever as punishment.
\end{itemize}
\end{definition}
We expect the social planner (e.g., the ISP) to implement the cooperation mechanisms and
recommend the cooperation or punishment profile to platforms based on their operations overtime.
Our mechanism as described above has another advantage: to trigger the punishment, we do not need to identify which platform deviates. One can imagine that as long as a platform cares enough for future costs  under a large discount factor $\delta$, it is unlikely to deviate to trigger severe punishment. It should be noted that in the extreme case of $\delta\rightarrow 0$, each platform only cares for immediate cost and $\big(\tilde{\lambda}_1(\delta)$, $\tilde{\lambda}_2(\delta)$,$\cdots$, $\tilde{\lambda}_N(\delta)\big)$ degenerate to $(\lambda_1^*, \lambda_2^*, \cdots, \lambda_N^*)$ in the one-shot game.  We next design the cooperation profile $\big(\tilde{\lambda}_1(\delta)$, $\tilde{\lambda}_2(\delta)$,$\cdots$, $\tilde{\lambda}_N(\delta)\big)$ according to any value of non-trivial $\delta$. 
\subsection{Cooperation Profile Design for Large $\delta$ Regime}
In this subsection, we first suppose the social optimum is attainable via our repeated game with $\big(\tilde{\lambda}_1(\delta)$, $\tilde{\lambda}_2(\delta)$,$\cdots$, $\tilde{\lambda}_N(\delta)\big)$ = $(\lambda_1^{**}, \lambda_2^{**}, \cdots, \lambda_N^{**})$ in (2), then any platform's deviation  will bring itself in a larger long-term cost. We can use this no-deviation condition to reverse-engineer the feasible regime of $\delta$ for enabling such $(\lambda_1^{**}, \lambda_2^{**}, \cdots, \lambda_N^{**})$ in the first place.

If platform $i \in \{1, \cdots, N\}$ chooses to deviate to any $\lambda_i$, it is optimal to deviate in the first round to save the immediate cost $\pi_i(\lambda_i, \lambda_{-i}^{**})$ in (1) without any time discount. Its optimal deviation or best response to $\lambda_{-i}^{**}$ is $\lambda_i=\sqrt{\frac{1+\lambda_{-i}^{**}/\mu}{c_i}}$ according to (3). Its (discounted) long-term cost objective over all time stages is 
\begin{align}
    \hat{\Pi}_i  &=  \pi_i\bigg(\sqrt{\frac{1+\lambda_{-i}^{**}/\mu}{c_i}}, \lambda_{-i}^{**}\bigg) + \delta \pi_i(\lambda_i^*, \lambda_{-i}^*) + \delta^2\pi_i(\lambda_i^*, \lambda_{-i}^*) + \cdots, \nonumber \\
    &= \pi_i\bigg(\sqrt{\frac{1+\lambda_{-i}^{**}/\mu}{c_i}}, \lambda_{-i}^{**}\bigg) + \frac{\delta}{1-\delta} \pi_i(\lambda_i^*, \lambda_{-i}^*),
\end{align}
\noindent where punishment is triggered since time stage 2 and $\lambda_{-i}^{*}=\sum\limits_{j=1}^N\lambda_j^{*} - \lambda_i^*$. 
Otherwise, it will always cooperate and obtain the following cost without any deviation,
\begin{align}
    \Pi_i = \pi_i(\lambda_i^{**}, \lambda_{-i}^{**}) + \delta \pi_i(\lambda_i^{**}, \lambda_{-i}^{**}) + \delta^2 \pi_i(\lambda_i^{**}, \lambda_{-i}^{**}) + \cdots = \frac{1}{1-\delta}\pi_i(\lambda_i^{**}, \lambda_{-i}^{**}). 
\end{align}
To ensure that platform {$i$} never deviates in the repeated game, we require {$\hat{\Pi}_i\geq \Pi_i$}, or simply 
\begin{align}
\delta \geq \delta^{th}_i := \frac{\pi_i(\lambda_i^{**}, \lambda_{-i}^{**}) - \pi_i\bigg(\sqrt{\frac{1+\lambda_{-i}^{**}/\mu}{c_i}}, \lambda_{-i}^{**}\bigg)}{\pi_i(\lambda_i^*, \lambda_{-i}^*) - \pi_i\bigg(\sqrt{\frac{1+\lambda_{-i}^{**}/\mu}{c_i}}, \lambda_{-i}^{**}\bigg)} =\frac{\bigg(\sqrt{\frac{1+\lambda_{-i}^{**}/\mu}{c_i+\frac{1}{\mu}\big(\sum\limits_{j = 1}^N\frac{1}{\lambda_j^{**}}-\frac{1}{\lambda_i^{**}}\big)}} - \sqrt{\frac{1+\lambda_{-i}^{**}/\mu}{c_i}}\bigg)^2}{2\lambda_i^{**}\bigg(\lambda_i^*-\sqrt{\frac{1+\lambda_{-i}^{**}/\mu}{c_i}}\bigg)}. 
\end{align} 
Without loss of generality, we assume {$c_1 \leq c_2 \leq \cdots \leq c_N$} and can show that platform 1 is more likely to deviate with {$\delta^{th}_1\geq \delta^{th}_2 \geq \cdots \geq \delta^{th}_N$}. The following summarizes the trigger mechanism with perfect cooperation profile for {$\delta\geq \max(\delta^{th}_1, \delta^{th}_2, \cdots, \delta^{th}_N)=\delta^{th}_1$}. 

\begin{proposition}[Large $\delta$ Regime]
Under complete information, if $\delta \geq \delta^{th}_1$ with $\delta_{i=1}^{th}$ in (16), all the platforms will follow the perfect cooperation profile {$(\tilde{\lambda}_1(\delta), \tilde{\lambda}_2(\delta), \cdots, \tilde{\lambda}_N(\delta))=(\lambda_1^{**}, \lambda_2^{**}, \cdots, \lambda_N^{**})$} in (2) all the time, without triggering the punishment profile {$(\lambda_1^*,\lambda_2^*, \cdots, \lambda_N^*)$} in (3).

\end{proposition}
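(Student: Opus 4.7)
The plan is to establish this via the one-shot deviation principle for infinitely repeated games. For each platform $i \in \{1, \ldots, N\}$, I first argue that its most profitable deviation from the cooperation path is a single-round deviation taken in the very first stage. Since the punishment phase, once triggered, yields the same stage cost $\pi_i(\lambda_i^*, \lambda_{-i}^*)$ in every subsequent round regardless of when detection occurs, delaying the deviation only discounts the one-shot gain without altering the punishment stream; and within the deviating round, the cost-minimizing choice given that the others play $\lambda_{-i}^{**}$ is the myopic best response $\sqrt{(1+\lambda_{-i}^{**}/\mu)/c_i}$ obtained from the first-order condition in (3) with $\lambda_{-i}=\lambda_{-i}^{**}$.

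Next, I would plug this optimal one-shot deviation into $\hat{\Pi}_i$ in (14) and compare it with the no-deviation stream $\Pi_i$ in (15). The incentive-compatibility condition $\hat{\Pi}_i \geq \Pi_i$ is linear in $\delta/(1-\delta)$, so rearranging directly produces the threshold $\delta \geq \delta_i^{th}$ of (16). A useful simplification at this stage is to invoke the social-optimum first-order condition (2) to substitute for $c_i + \frac{1}{\mu}\bigl(\sum_j 1/\lambda_j^{**} - 1/\lambda_i^{**}\bigr) = 1/(\lambda_i^{**})^2 \cdot (1+\lambda_{-i}^{**}/\mu)$; this turns the numerator of $\delta_i^{th}$ into the clean difference-of-square-roots form shown in (16).

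The main obstacle is the comparative-statics claim that $\delta_i^{th}$ is non-increasing in $c_i$ once the equilibrium and social-optimum rates are also allowed to respond, so that the assumed ordering $c_1 \leq c_2 \leq \cdots \leq c_N$ translates into $\delta_1^{th} \geq \delta_2^{th} \geq \cdots \geq \delta_N^{th}$. The intuition is that a lower-cost platform has both a larger temptation gap (its myopic best-response deviation is farther from its cooperative level $\lambda_i^{**}$) and a larger equilibrium rate $\lambda_i^*$, so the ratio of the one-round deviation gain in the numerator to the per-round punishment loss in the denominator grows as $c_i$ shrinks. I would verify this either by differentiating $\delta_i^{th}$ with respect to $c_i$, using Corollary II.1 together with the envelope identities from (2) and (3) to sign the derivative, or by exploiting homogeneity in $(c_i,\mu)$ to reduce the expression to a single-parameter function of the cost ratios and checking monotonicity directly.

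Once this monotonicity is established, platform $1$ is the binding deviator, and the condition $\delta \geq \delta_1^{th}$ automatically satisfies $\delta \geq \delta_i^{th}$ for every other $i$, so that no platform can profitably deviate in any round under the non-forgiving trigger mechanism of Definition III.1. Consequently, $(\lambda_1^{**}, \ldots, \lambda_N^{**})$ is played in every stage, the punishment profile in (3) is never reached on the equilibrium path, and the social optimum under complete information is sustained.
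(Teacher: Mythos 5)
Your proposal follows essentially the same route as the paper's own in-text derivation preceding the proposition: a single-round deviation to the myopic best response $\sqrt{(1+\lambda_{-i}^{**}/\mu)/c_i}$ is the most profitable defection, the incentive condition $\hat{\Pi}_i \geq \Pi_i$ rearranges into the threshold $\delta_i^{th}$ of (16) (with the social-optimum first-order condition (2) producing the difference-of-square-roots numerator), and the ordering $\delta_1^{th} \geq \cdots \geq \delta_N^{th}$ under $c_1 \leq \cdots \leq c_N$ makes platform 1 the binding deviator. The one step you leave as a sketch --- the monotonicity of $\delta_i^{th}$ in $c_i$ --- is likewise only asserted without proof in the paper, so your treatment is no less complete than the original.
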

 The threshold $\delta^{th}_i$ tells platform $i$'s unwillingness to cooperate and we prefer small threshold for this platform to follow $\lambda_i^{**}$ ideally. 

 


\subsection{Cooperation Profile Design for Medium $\delta$ Regime}

If {$\delta^{th}_{j+1} \leq \delta < \delta^{th}_j$, where $j \in \{1, \cdots, N-1\}$}, platform {$k \in \{j+1, \cdots, N\}$} will still follow the social optimizer {$\lambda_k^{**}$} yet platform {$i \in \{1, \cdots, j\}$} with smaller costs will deviate, requiring us to design new {$\tilde{\lambda}_i(\delta)$} as a function of $\delta$ to replace {$\lambda_i^{**}$ for such platforms}. By ensuring the long-term cost {$\Pi_i(\tilde{\lambda}_i(\delta),\lambda_{-i}(\delta))$} in (15) without deviation just equal to (14) with the best deviation, where $\lambda_{-i}(\delta) = \sum\limits_{k =1 }^{j}\tilde{\lambda}_k(\delta) - \tilde{\lambda}_i(\delta) + \sum\limits_{k = j+1}^N\lambda_k^{**}$, we optimally determine the {$\tilde{\lambda}_i(\delta)$} by jointly solving the following equations 
{
\begin{align}
\delta = \frac{\pi_i(\tilde{\lambda}_i(\delta), \lambda_{-i}(\delta)) - \pi_i\bigg(\sqrt{\frac{1+\lambda_{-i}(\delta)/\mu}{c_i}}, \lambda_{-i}(\delta)\bigg)}{\pi_i(\lambda_i^*, \sum\limits_{j = 1}^N\lambda_j^*-\lambda_i^*) -\pi_i\bigg(\sqrt{\frac{1+\lambda_{-i}(\delta)/\mu}{c_i}}, \lambda_{-i}(\delta)\bigg)}, \;\; i \in\{ 1,\cdots, j\}.
\end{align}
}
When solving (17), there are two candidates for $\tilde{\lambda}_i(\delta)$ and we choose to take the smaller root with smaller social cost. Then we have the following result. 

\begin{proposition}[Medium $\delta$ Regime]
In the repeated game under complete information, if $\delta^{th}_{j+1} \leq \delta < \delta^{th}_j$, where $j \in \{1, \cdots, N-1\}$, all the platforms will always follow the cooperation profile below without deviating to trigger punishment $(\lambda_1^*,\lambda_2^*, \cdots, \lambda_N^*)$ in (3): 
\begin{itemize}
    \item For platform $k \in \{j+1, \cdots, N\}$ with larger unit sampling costs: $\tilde{\lambda}_k(\delta) = \lambda_k^{**}$.
    \item For platform $i \in \{1, \cdots, j\}$: ($\tilde{\lambda}_1(\delta), \cdots, \tilde{\lambda}_j(\delta)$) are unique solutions to
    \begin{align}          
    \delta\lambda_i^* + (1-\delta)\sqrt{\frac{1+\frac{\lambda_{-i}(\delta)}{\mu}}{c_i}} 
   -\sqrt{\bigg(\delta\lambda_i^* + (1-\delta)\sqrt{\frac{1+\frac{\lambda_{-i}(\delta)}{\mu}}{c_i}} \bigg)^2 - \frac{1+\frac{\lambda_{-i}(\delta)}{\mu}}{c_i}} - \tilde{\lambda}_i(\delta) = 0. \nonumber
        \end{align}
    Here, $\lambda_i^{**} < \tilde{\lambda}_i(\delta) < \lambda_i^{*}$ and $\tilde{\lambda}_i(\delta)$ decreases with $\delta$.
\end{itemize}

\end{proposition}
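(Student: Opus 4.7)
The plan is to split the argument by platform type. For the high-cost platforms $k \in \{j+1,\dots,N\}$, since $\delta \geq \delta_k^{th}$, the social optimum rate $\lambda_k^{**}$ is (almost by definition) already self-enforcing, and the new profile only changes $\lambda_{-k}$. For the low-cost platforms $i\in\{1,\dots,j\}$, $\delta<\delta_i^{th}$ means $\lambda_i^{**}$ is \emph{not} self-enforcing, so I would loosen cooperation to a larger $\tilde{\lambda}_i(\delta)$ that makes platform $i$ exactly indifferent between cooperating forever and best-responding once before triggering the eternal punishment $(\lambda_1^*,\dots,\lambda_N^*)$. That indifference condition is precisely (17), and solving it pointwise (given $\lambda_{-i}(\delta)$) is what drives the proof.

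Next I would reduce (17) to the displayed quadratic. Using the compact form $\pi_i(\lambda_i,\lambda_{-i}) = (1+\lambda_{-i}/\mu)/\lambda_i + c_i\lambda_i + 1/\mu$ and writing $B_i(\delta) := \sqrt{(1+\lambda_{-i}(\delta)/\mu)/c_i}$ for the best reply to $\lambda_{-i}(\delta)$, routine algebra shows the numerator of (17) equals $c_i(\tilde{\lambda}_i - B_i(\delta))^2/\tilde{\lambda}_i$ and, using $\pi_i(\lambda_i^*,\lambda_{-i}^*) = 2c_i\lambda_i^* + 1/\mu$, the denominator equals $2c_i(\lambda_i^* - B_i(\delta))$. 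Setting the ratio to $\delta$ and clearing $\tilde{\lambda}_i$ yields
\begin{align}
\tilde{\lambda}_i^2 - 2\bigl[\delta\lambda_i^* + (1-\delta)B_i(\delta)\bigr]\tilde{\lambda}_i + B_i(\delta)^2 = 0, \nonumber
\end{align}
which is exactly the equation in the proposition; the stated closed form is its smaller root.

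Then I would handle existence, uniqueness and the claimed range. The discriminant $[\delta\lambda_i^* + (1-\delta)B_i(\delta)]^2 - B_i(\delta)^2$ factors as $\delta(\lambda_i^* - B_i(\delta))\bigl[\delta\lambda_i^* + (2-\delta)B_i(\delta)\bigr]$, and is non-negative because $\lambda_i^* \geq B_i(\delta)$ (which holds since $\lambda_{-i}(\delta) \leq \lambda_{-i}^*$; I will bootstrap this from $\tilde{\lambda}_k(\delta) \leq \lambda_k^*$ componentwise). Both roots are positive since their product $B_i(\delta)^2 > 0$; the smaller one is selected as it yields the lower social cost (closer to $\lambda_i^{**}$), which simultaneously pins down a unique solution to the coupled system via a monotone fixed-point argument in $\lambda_{-i}(\delta)$. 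Checking the endpoints: substituting $\delta = \delta_i^{th}$ recovers $\tilde{\lambda}_i = \lambda_i^{**}$ (compare with (16)), while $\delta \to 0$ forces $\tilde{\lambda}_i \to B_i(0) = \lambda_i^*$, so continuity gives $\lambda_i^{**} < \tilde{\lambda}_i(\delta) < \lambda_i^*$ on the medium regime. Monotonicity $d\tilde{\lambda}_i/d\delta<0$ follows from implicit differentiation of the quadratic: a larger $\delta$ raises the tipping point at which cooperation is just enforceable, so the cooperation level can be tightened.

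The part I expect to be the main obstacle is verifying no deviation by the high-cost platforms $k\in\{j+1,\dots,N\}$ under the \emph{new} (higher) $\tilde{\lambda}_{-k}$. The threshold $\delta_k^{th}$ in (16) was computed at the social optimizer, and now $\lambda_{-k}$ sits strictly between $\lambda_{-k}^{**}$ and $\lambda_{-k}^*$. I would show the temptation-to-punishment ratio in (16), regarded as a function of $\lambda_{-k}$ with $\lambda_k$ fixed at $\lambda_k^{**}$, is monotone in the direction that makes cooperation \emph{easier} as $\lambda_{-k}$ grows (because best-responding and playing $\lambda_k^{**}$ both become costlier but the punishment cost rises faster than the temptation gain). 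Combined with $\delta \geq \delta_{j+1}^{th} \geq \delta_k^{th}$ evaluated at the original profile, this gives the required no-deviation inequality for each $k>j$ at the new profile, completing the proof.
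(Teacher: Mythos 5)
Your main computation follows the paper's route in Appendix D almost exactly: writing $\pi_i(\lambda_i,\lambda_{-i})=(1+\lambda_{-i}/\mu)/\lambda_i+c_i\lambda_i+1/\mu$, the numerator of (17) is indeed $c_i(\tilde{\lambda}_i-B_i)^2/\tilde{\lambda}_i$ and the denominator $2c_i(\lambda_i^*-B_i)$, which yields the quadratic $\tilde{\lambda}_i^2-2[\delta\lambda_i^*+(1-\delta)B_i]\tilde{\lambda}_i+B_i^2=0$ whose smaller root is the displayed profile; the paper likewise selects the smaller root, defers uniqueness of the coupled system to the induction/convexity argument for the small-$\delta$ system (Appendix E), and obtains the range $\lambda_i^{**}<\tilde{\lambda}_i(\delta)<\lambda_i^*$ and monotonicity from the endpoints $\delta=0$ and $\delta=\delta_i^{th}$ plus implicit differentiation. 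Your discriminant factorization and positivity-of-roots observations are correct and slightly more explicit than the paper's.

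The problem is your final step. You correctly identify that the no-deviation condition for the high-cost platforms $k\in\{j+1,\dots,N\}$ must be re-verified at the new profile $\lambda_{-k}(\delta)>\lambda_{-k}^{**}$ (the paper silently skips this), but the monotonicity you propose goes the wrong way. With $u=\lambda_k^{**}$ and $v=\lambda_k^*$ fixed and $B=B_k(\lambda_{-k})$, the threshold is
\begin{align}
\delta_k^{th}(\lambda_{-k})=\frac{c_k(B-u)^2/u}{2c_k(v-B)},\nonumber
\end{align}
and for $u<B<v$ its derivative in $B$ is $\bigl[2(B-u)(v-B)+(B-u)^2\bigr]/\bigl[2u(v-B)^2\bigr]>0$: the punishment term $\pi_k(\lambda_k^*,\lambda_{-k}^*)$ is a \emph{fixed} constant, so as $\lambda_{-k}$ grows the denominator $\pi_k(\lambda_k^*,\lambda_{-k}^*)-\pi_k(B_k,\lambda_{-k})=2c_k(\lambda_k^*-B_k)$ shrinks while the temptation $c_k(B_k-\lambda_k^{**})^2/\lambda_k^{**}$ grows. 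Hence cooperation at $\lambda_k^{**}$ becomes \emph{harder}, not easier, to sustain when the low-cost platforms over-sample, and the inequality $\delta\geq\delta_{j+1}^{th}\geq\delta_k^{th}$ evaluated at the social optimum does not transfer to the new profile. As written, this step of your argument fails; closing it would require either a quantitative bound showing $\delta_k^{th}(\lambda_{-k}(\delta))\leq\delta$ still holds on the medium regime, or a modification of the profile for the high-cost platforms. The rest of the proposal is sound and matches the paper.
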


The proof is given in Appendix D. As we prefer the platforms not to over-sample, a larger $\delta$ in the medium regime helps.

\subsection{Cooperation Profile Design for Small $\delta$ Regime}

If $\delta$ is smaller than the smallest threshold $\delta^{th}_N$, no platform will follow the social optimizers, and we need to design totally new {$\big(\tilde{\lambda}_1(\delta), \tilde{\lambda}_2(\delta), \cdots, \tilde{\lambda}_N(\delta)\big)$} as functions of $\delta$ jointly. Similar to (17), we now have   
\begin{align}
   \delta = \frac{\pi_i(\tilde{\lambda}_{i}(\delta), \tilde{\lambda}_{-i}(\delta)) - \pi_i\bigg(\sqrt{\frac{1+\tilde{\lambda}_{-i}(\delta)/\mu}{c_i}}, \tilde{\lambda}_{-i}(\delta)\bigg)}{\pi_i(\lambda_i^*, \lambda_{-i}^*) - \pi_i\bigg(\sqrt{\frac{1+\tilde{\lambda}_{-i}(\delta)/\mu}{c_i}}, \tilde{\lambda}_{-i}(\delta)\bigg)}, \;\; i \in\{ 1, \cdots, N\},  
\end{align}
where $\tilde{\lambda}_{-i}(\delta) = \sum\limits_{j =1}^N \tilde{\lambda}_{j}(\delta) - \tilde{\lambda}_{i}(\delta)$ and $\lambda_{-i}^* = \sum\limits_{j=1}^N\lambda_j^*-\lambda_i^*$. After solving (18) and taking the smaller roots for all $\tilde{\lambda}_i(\delta)$'s to avoid large social cost, we have the following result. 

\begin{proposition}[Small $\delta$ Regime]
 In the repeated game under complete information, if  $\delta < \delta^{th}_N$ with $i = N$ in (16), all the platforms will always follow the cooperation profile $\big(\tilde{\lambda}_1(\delta), \tilde{\lambda}_2(\delta), \cdots, \tilde{\lambda}_N(\delta) \big)$ below as unique solutions to 
 \begin{align}
   \tilde{\lambda}_i(\delta) - \delta\lambda_i^* - (1-\delta)\sqrt{\frac{1+\frac{\tilde{\lambda}_{-i}(\delta)}{\mu}}{c_i}}
   + \sqrt{\bigg(\delta\lambda_i^* + (1-\delta)\sqrt{\frac{1+\frac{\tilde{\lambda}_{-i}(\delta)}{\mu}}{c_i}} \bigg)^2 - \frac{1+\frac{\tilde{\lambda}_{-i}(\delta)}{\mu}}{c_i}}  = 0, 
\end{align}
where $\tilde{\lambda}_{-i}(\delta) = \sum\limits_{j =1}^N\tilde{\lambda}_{j}(\delta)-\tilde{\lambda}_{i}(\delta)$. Here we have $\lambda_i^{**}<\tilde{\lambda}_i(\delta) \leq \lambda_i^*$ for all $i \in \{1, \cdots, N\}$. 
As $\delta \to 0$, the proposed $\big(\tilde{\lambda}_1(\delta), \tilde{\lambda}_2(\delta), \cdots, \tilde{\lambda}_N(\delta)\big)$ approach $(\lambda_1^*, \lambda_2^*, \cdots, \lambda_N^*)$ in (3),  and the repeated game degenerates to one-shot game. As $\delta$ increases, cooperation profile $\big(\tilde{\lambda}_1(\delta), \tilde{\lambda}_2(\delta), \cdots, \tilde{\lambda}_N(\delta)\big)$ decrease and the competition mitigates.
\end{proposition}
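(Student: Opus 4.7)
The plan is to reverse-engineer the cooperation profile from the indifference equation (18), which equates each platform's cost stream from cooperation against the cost stream from one-shot deviation followed by eternal punishment. Writing $\pi_i(\lambda_i,\lambda_{-i})=\frac{a_i}{\lambda_i}+c_i\lambda_i+\frac{1}{\mu}$ with $a_i:=1+\tilde{\lambda}_{-i}(\delta)/\mu$, the optimal deviation is $\lambda_i^{BR}=\sqrt{a_i/c_i}$ with value $\pi_i^{BR}=2\sqrt{c_i a_i}+\frac{1}{\mu}$, which collapses the numerator of (18) to $\pi_i(\tilde\lambda_i,\tilde\lambda_{-i})-\pi_i^{BR}=\frac{c_i(\tilde\lambda_i-\lambda_i^{BR})^2}{\tilde\lambda_i}$ and the denominator to $\pi_i(\lambda_i^*,\lambda_{-i}^*)-\pi_i^{BR}=2c_i(\lambda_i^*-\lambda_i^{BR})$. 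Clearing $\delta$ produces the quadratic $\tilde\lambda_i^2-2\tilde\lambda_i[\delta\lambda_i^*+(1-\delta)\lambda_i^{BR}]+(\lambda_i^{BR})^2=0$. Since the social cost is strictly increasing in each $\lambda_i$ beyond the social optimum $\lambda_i^{**}$, I retain the smaller root, which rearranges into exactly (19).

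Next I will establish the sandwich $\lambda_i^{**}<\tilde\lambda_i(\delta)\le\lambda_i^*$ together with uniqueness of the coupled system. Plugging $\tilde\lambda_{-i}=\lambda_{-i}^*$ into $\lambda_i^{BR}$ returns $\lambda_i^*$ by the best-response characterization of the one-shot equilibrium (3); the discriminant in (19) then vanishes and the smaller root equals $\lambda_i^*$, so $(\lambda_1^*,\ldots,\lambda_N^*)$ is a fixed point of the map defined by (19) at $\delta=0$. For any $\delta>0$ the smaller root drops strictly below $\delta\lambda_i^*+(1-\delta)\lambda_i^{BR}<\lambda_i^*$, and because each $\lambda_i^{BR}$ is strictly increasing in $\tilde\lambda_{-i}$ the coupled iteration is monotone: one descending sweep keeps the profile componentwise below $(\lambda_1^*,\ldots,\lambda_N^*)$ and induction extends this to the fixed point. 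For the strict lower bound, since $\delta<\delta_N^{th}\le\delta_i^{th}$ for every $i$, evaluating (14)--(15) at $\tilde\lambda_i=\lambda_i^{**}$ would leave deviation strictly cheaper than cooperation, so the indifference level must sit strictly above $\lambda_i^{**}$. Uniqueness follows by recasting (19) as a fixed-point problem of a monotone self-map on the compact cube $\prod_i[\lambda_i^{**},\lambda_i^*]$ whose Jacobian is strictly diagonally dominant (the off-diagonal sensitivities $\partial\lambda_i^{BR}/\partial\tilde\lambda_j$ are of order $1/\mu$).

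Finally, I will address the limit and the monotonicity claim. As $\delta\to 0$, equation (19) collapses to $\tilde\lambda_i=\lambda_i^{BR}=\sqrt{(1+\tilde\lambda_{-i}/\mu)/c_i}$ for every $i$, which is exactly the Nash system (3); by the uniqueness asserted in Proposition~II.1 the profile must converge to $(\lambda_1^*,\ldots,\lambda_N^*)$, so the repeated game degenerates to the one-shot game. For monotonicity in $\delta$, implicit differentiation of (19) yields a direct partial proportional to $(\lambda_i^*-\lambda_i^{BR})\bigl[1-\frac{\delta\lambda_i^*+(1-\delta)\lambda_i^{BR}}{\sqrt{[\delta\lambda_i^*+(1-\delta)\lambda_i^{BR}]^2-(\lambda_i^{BR})^2}}\bigr]$, which is negative because the bracketed fraction exceeds one and $\lambda_i^*>\lambda_i^{BR}$; coupling through the M-matrix Jacobian preserves this sign across all players, giving $d\tilde\lambda_i(\delta)/d\delta<0$ throughout the small-$\delta$ regime.

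The main obstacle I foresee is handling the coupling: unlike the medium-$\delta$ regime of Proposition~III.2 where the high-cost platforms are anchored at their social optima, here every equation in (19) depends on every other's output via $\tilde\lambda_{-i}$. My workaround is to recast (19) as a fixed-point problem for a monotone, diagonally dominant self-map on $\prod_i[\lambda_i^{**},\lambda_i^*]$, which yields existence, uniqueness, the required bounds and the comparative-statics sign in a single package.
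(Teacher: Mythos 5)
Your derivation of (19) is exactly the paper's: completing the square in the numerator of (18) to get $\frac{c_i(\tilde\lambda_i-\lambda_i^{BR})^2}{\tilde\lambda_i}$, writing the denominator as $2c_i(\lambda_i^*-\lambda_i^{BR})$, clearing $\delta$ into the quadratic, and keeping the smaller root. The $\delta\to 0$ limit and the sign of the direct partial $\partial\tilde\lambda_i/\partial\delta$ are also handled the same way. The genuine gap is in your existence/uniqueness machinery. Note that the Nash profile $(\lambda_1^*,\dots,\lambda_N^*)$ solves the system (19) for \emph{every} $\delta$, not only at $\delta=0$: plugging $\tilde\lambda_{-i}=\lambda_{-i}^*$ gives $\lambda_i^{BR}=\lambda_i^*$, the discriminant vanishes, and the smaller root equals $\lambda_i^*$. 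So your monotone self-map on the cube $\prod_i[\lambda_i^{**},\lambda_i^*]$ has at least two fixed points, and Tarski-type monotonicity cannot deliver uniqueness; moreover the "descending sweep" never leaves $\lambda^*$ because $T(\lambda^*)=\lambda^*$, so it does not locate the interior solution. The diagonal-dominance claim you use to rescue uniqueness also fails precisely where it is needed: writing $T_i=S-\sqrt{S^2-(\lambda_i^{BR})^2}$ with $S=\delta\lambda_i^*+(1-\delta)\lambda_i^{BR}$, the off-diagonal sensitivity contains the factor $1/\sqrt{S^2-(\lambda_i^{BR})^2}$, which blows up as $\tilde\lambda_{-i}\to\lambda_{-i}^*$; the sensitivities are not uniformly $O(1/\mu)$ on the cube, so the Jacobian argument (and the M-matrix step in your comparative statics) is unsupported near the Nash corner.

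The paper avoids this by working with the \emph{interior} solution directly: for $N=2$ it shows $f(\lambda_1^{**})>0$ reduces to $\delta<\delta_1$ with $\delta_1\ge\delta_1^{th}$ (so the smaller root lies in $(\lambda_1^{**},\lambda_1^*)$), then forms the scalar function $g(\tilde\lambda_1)=\tilde\lambda_2(\tilde\lambda_1)-\tilde\lambda_2{}_1(\tilde\lambda_1)$ from the two best-response-type curves (38)--(39), proves both curves are convex and strictly increasing so $g$ is convex, and uses $g(\lambda_1^{**})>0$, $g(\lambda_1^*)=0$, $g'(\lambda_1^*)>0$ to conclude there is exactly one additional root in the open interval; induction on $N$ then extends this. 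If you want to keep your fixed-point framing, you must (i) restrict the domain away from the $\lambda^*$ corner (e.g.\ by first showing $T$ maps some strictly smaller box into itself, seeded by the strict drop of the smaller root once any one coordinate is strictly below $\lambda_j^*$), and (ii) replace diagonal dominance with an argument that controls the degenerate square root, e.g.\ the convexity-based counting the paper uses. Your lower-bound argument via $\delta<\delta_N^{th}\le\delta_i^{th}$ is consistent with the paper's $f(\lambda_1^{**})>0$ step and is fine.
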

The proof is given in Appendix E.
\begin{figure}
\centering
 \includegraphics[height=2.5in, width=3in]{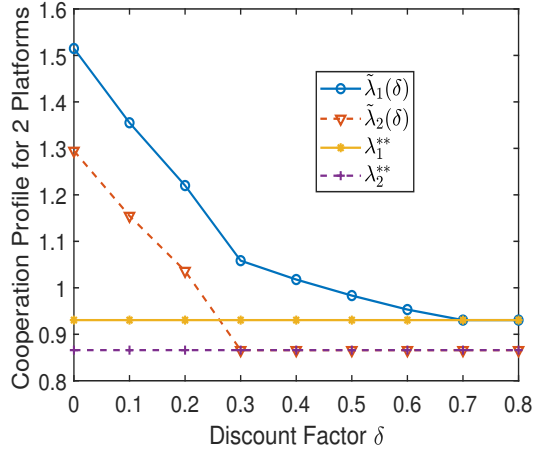}
\caption{Cooperation profile $(\tilde{\lambda}_1(\delta), \tilde{\lambda}_2(\delta))$ of the trigger mechanism versus the discount factor $\delta$,  as compared to social optimizers $(\lambda_1^{**},\lambda_2^{**})$. Here $N = 2$, $\delta^{th}_2=0.3$ and $\delta^{th}_1=0.7$ under parameters $c_1 = 1$, $c_2 = 1.5$ and $\mu = 1$. }

 \end{figure}
Figure 2 shows an illustrative example of $N =2$ platforms. It shows how cooperation profile $(\tilde{\lambda}_1(\delta), \tilde{\lambda}_2(\delta))$ under our trigger mechanism of non-monetary punishment changes with discount factor $\delta$ in all the three $\delta$ regimes. In small $\delta$ regime ($\delta<0.3$),  both $(\tilde{\lambda}_1(\delta), \tilde{\lambda}_2(\delta))$ decrease with $\delta$ until $\delta^{th}_2=0.3$ with $\tilde{\lambda}_2(\delta)=\lambda_2^{**}$. In medium $\delta$ regime ($0.3\leq\delta<0.7$), only $\tilde{\lambda}_1(\delta)$ decreases with $\delta$ until $\delta^{th}_1=0.7$. Finally, in large $\delta$ regime ($\delta\geq0.7$), the profile always equals $(\lambda_1^{**}, \lambda_2^{**})$. The results are consistent with Propositions \Romannum{3}.2-\Romannum{3}.4.  

Under our optimized  trigger mechanism of non-monetary punishment, one may wonder how the efficiency loss due to platform competition changes with discount factor $\delta$ in all the three $\delta$ regimes.
Given the symmetric cost setting  {($c_1 = c_2 = \cdots = c_N$)}, $\delta_1^{th}=\cdots=\delta^{th}_N$ and there are only small and large $\delta$ regimes. In this case, we manage to analytically derive the following result.

\begin{corollary}
 Given {$c_1 = c_2 = \cdots = c_N$} under complete information, the ratio between the social costs under the trigger mechanism of non-monetary punishment and the social optimum decreases with $\delta$ until $\delta=\delta_1^{th}=\cdots=\delta^{th}_N$ and keeps constant 1 since then. 
\end{corollary}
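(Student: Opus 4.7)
The plan is to exploit the symmetric-cost structure to reduce the $N$-variable social cost to a scalar function of a single cooperation rate, and then invoke the monotonicity properties already established in Propositions III.2 and III.4.

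\emph{Step 1: Collapse by symmetry.} With $c_1=\cdots=c_N=c$, the fixed-point systems (2), (3), and (19) all admit symmetric solutions, which I denote $\lambda^{**}$, $\lambda^*$, and $\tilde\lambda(\delta)$ respectively. Substituting $\lambda_i=\lambda$ for all $i$ into (2) collapses the negative-externality term against the own-derivative cross term, leaving $-1/\lambda^2+c=0$, so $\lambda^{**}=1/\sqrt{c}$. Evaluated on the symmetric profile $(\lambda,\ldots,\lambda)$, the social cost becomes $\pi(\lambda)=N\bigl(1/\lambda+N/\mu+c\lambda\bigr)$, which is strictly convex in $\lambda$, attains its unique minimum at $\lambda^{**}$, and is strictly increasing on $(\lambda^{**},\infty)$.

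\emph{Step 2: Collapse the regime partition.} Because all $c_i$ coincide, so do all thresholds $\delta_i^{th}$ defined in (16), at a single value $\delta^{th}$. Hence the medium-$\delta$ regime of Proposition III.3 is empty, and only the small and large regimes remain. For $\delta\geq \delta^{th}$, Proposition III.2 gives $\tilde\lambda(\delta)=\lambda^{**}$, so the ratio $\pi(\tilde\lambda(\delta))/\pi(\lambda^{**})$ is identically $1$, which covers the second part of the claim.

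\emph{Step 3: Monotonicity in the small regime.} For $\delta<\delta^{th}$, Proposition III.4 (in its symmetric specialization, with $\tilde\lambda_{-i}(\delta)=(N-1)\tilde\lambda(\delta)$) gives $\lambda^{**}<\tilde\lambda(\delta)\leq \lambda^*$ together with the strict decrease of $\tilde\lambda(\delta)$ in $\delta$, and the boundary value $\tilde\lambda(\delta^{th})=\lambda^{**}$. Since $\pi(\lambda)$ from Step 1 is strictly increasing on $(\lambda^{**},\infty)$, the composition $\pi(\tilde\lambda(\delta))$ strictly decreases in $\delta$ on $[0,\delta^{th}]$, falling to $\pi(\lambda^{**})$ at $\delta=\delta^{th}$. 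Dividing by the $\delta$-independent denominator $\pi(\lambda^{**})$ preserves this monotonicity, yielding the first part of the claim.

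\emph{Main obstacle.} The only non-routine ingredient is the strict decrease of $\tilde\lambda(\delta)$ in $\delta$ on $[0,\delta^{th})$, which is asserted in Proposition III.4 but merits re-checking under symmetry. A self-contained verification would implicitly differentiate the symmetric form of (19) and sign-check the derivative on the smaller of the two roots selected in Proposition III.4; the sign follows from the strict convexity of $\pi_i$ in $\lambda_i$ already used throughout Section III, combined with the ordering $\lambda^{**}<\tilde\lambda(\delta)\leq \lambda^*$. Everything else is a direct monotonicity argument on a scalar convex function.
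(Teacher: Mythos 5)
Your proposal is correct and follows essentially the same route as the paper's proof: collapse to the symmetric scalar social cost, note it is minimized at $\lambda^{**}=1/\sqrt{c}$ and increasing on $(\lambda^{**},\infty)$, and compose with the decrease of $\tilde{\lambda}(\delta)$ from Proposition \Romannum{3}.4. (Incidentally, your constant term $N^{2}/\mu$ in the symmetric social cost is the correct one; the paper's $2N/\mu$ is a typo that is harmless for the monotonicity argument.)
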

\begin{figure}
\centering
 \includegraphics[height=2.5in, width=3in]{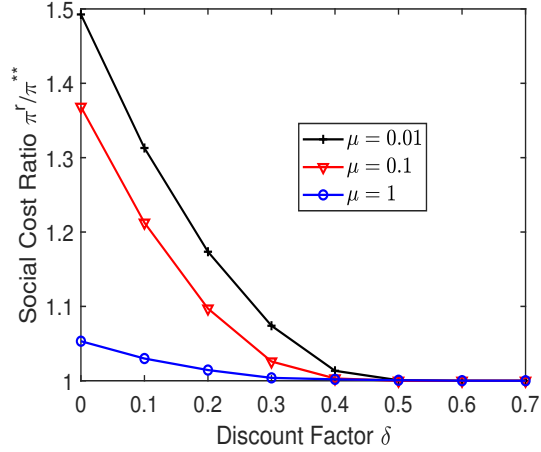}
\caption{Ratio between the social costs under our trigger-and-punishment mechanism and the social optimum under complete information. We fix $N =2$, $c_1=1$, $c_2 = 1.5$ and change bandwidth $\mu$ and $\delta$.}

 \end{figure}
\begin{proof}
As platforms perform the same in each round of the repeated game under our mechanism and the social optimum, it is enough to examine the social cost ratio in one shot.  Recall that under the social optimum, {$\lambda_1^{**} = \lambda_2^{**} = \cdots = \lambda_N^{**} = \frac{1}{\sqrt{c_1}}$} according to (3) and {$c_1=c_2=\cdots=c_N$}, the corresponding social cost is
 \begin{align}
     \pi^{**} = \sum\limits_{i=1}^N\bigg(\frac{1}{\lambda_i^{**}}\bigg(1+\sum\limits_{j = 1}^N\lambda_j^{**}/\mu\bigg) + c_i\lambda_i^{**}\bigg) \nonumber = 2N\sqrt{c_1} + 2N/\mu,\nonumber
 \end{align}
which is independent of $\delta$. Under our optimal trigger mechanism, {$\tilde{\lambda}_1(\delta) = \tilde{\lambda}_2(\delta) = \cdots = \tilde{\lambda}_N(\delta)$ decreases with $\delta$ until $\delta=\delta_1^{th}=\cdots=\delta^{th}_N$ and keeps constant since then}, and the corresponding social cost is
 \begin{align}
     \pi^{r} = \sum\limits_{i=1}^N\bigg(\frac{1}{\tilde{\lambda}_i(\delta)}\bigg(1+\sum\limits_{j = 1}^N\tilde{\lambda}_{j}(\delta)/\mu\bigg) + c_i\tilde{\lambda}_i(\delta) \bigg) \nonumber = N c_1\tilde{\lambda}_1(\delta) + N/\tilde{\lambda}_1(\delta) + 2N/\mu. 
 \end{align}
In the large $\delta$ regime, $\pi^r=\pi^{**}$ and we only need to examine the regime of small $\delta < \delta^{th}_1$, where $\tilde{\lambda}_1(\delta) > \lambda_1^{**}  = \frac{1}{\sqrt{c_1}}$ always holds according to (19). By taking the first derivative of $\pi^{r}$ over $\tilde{\lambda}_1(\delta)$, we have
{
 \begin{align}
 \frac{d\pi^{r}}{d\tilde{\lambda}_1(\delta)} = N c_1 - \frac{N}{\tilde{\lambda}_1^2(\delta)} > 0, \nonumber
 \end{align}
 }
 due to $\tilde{\lambda}_1(\delta) > \frac{1}{\sqrt{c_1}}$. Therefore, 
$\pi^{r}$ increases with $\tilde{\lambda}_1(\delta)$. According to Proposition \Romannum{3}.4,  $\tilde{\lambda}_1(\delta)$  decreases with $\delta$ for $\delta < \delta^{th}_1$, thus $\pi^{r} $ decreases with $\delta$ for $\delta < \delta^{th}_1$. Since $\pi^{**}$ is a constant with $\delta$, we have the ratio $\pi^{r}/\pi^{**}$ decreases with $\delta < \delta^{th}_1$ and keeps 1 after $\delta \geq \delta^{th}_1$.
\end{proof}

 Figure 3 further examines the asymmetric cost $c_1<c_2$ for the two-platform case. We can see that the the ratio between
social costs under the trigger mechanism and the social optimum still decreases with $\delta$, which is consistent with Corollary \Romannum{3}.4.1. As bandwidth $\mu$ increases, we expect smaller social cost ratio or smaller efficiency loss, as the two platforms' competition over bandwidth mitigates given more resource.

\section{Approximate trigger-and-punishment under Incomplete Information}

To remedy the huge inefficiency with PoA$=\infty$ in Proposition \Romannum{2}.4, we want to stimulate cooperation {among the $N$} platforms to approach the social optimum under incomplete information. As introduced in Section \Romannum{2}.B, platform 1's sampling cost realization in each instance is unknown to platform $i\in\{2, \cdots, N\}$. Similar to the complete information in Section \Romannum{3}, we propose to use an infinitely repeated game, where all the platforms will simultaneously play the Bayesian game for infinitely many  rounds  with  discount factor $\delta \in (0, 1)$.
Without any trigger mechanism of non-monetary punishment, each platform will still behave the same as $\left(\left(\lambda_{1}^{*}\left(c_{H}\right), \lambda_{1}^{*}\left(c_{L}\right)\right), \lambda_{2}^{*}, \cdots, \lambda_{N}^{*}\right)$ in (11)-(13) in each round. However, we cannot employ our non-forgiving trigger mechanism under complete information in Definition \Romannum{3}.1, by using the social optimal cooperation profile $\left(\left(\tilde{\lambda}_{1}\left(c_{L}, \delta\right), \tilde{\lambda}_{1}\left(c_{H}, \delta\right)\right), \tilde{\lambda}_{2}(\delta), \cdots, \tilde{\lambda}_{N}(\delta)\right)$. The reason is that under incomplete information, the other platforms cannot tell in each round whether platform $1$'s cost $c_1$ is $c_H$ or $c_L$ and platform $1$ can choose $\tilde{\lambda}_1(c_L)$ when $c_1=c_H$ without triggering any punishment. Even if $\delta$ is large enough to allow $(\tilde{\lambda}_1(c_L,\delta), \tilde{\lambda}_1(c_H,\delta))$=$(\lambda_1^{**}(c_L), \lambda_1^{**}(c_H))$, the following lemma shows that platform $1$ may not comply. 

 \begin{lemma}
Given the perfect cooperation profile $(\lambda_1^{**}(c_L),\lambda_1^{**}(c_H))$ for platform 1 under sufficiently large $\delta$,  platform 1 can still deviate from $\lambda_1^{**}(c_H)$ to $\lambda_1^{**}(c_L)$ when $c_1 = c_H$. 
\end{lemma}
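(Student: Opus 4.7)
The plan is to exhibit an undetectable and strictly profitable deviation for platform $1$ when its realized cost is $c_H$, which breaks the enforcement power of the non-forgiving trigger no matter how large $\delta$ is. Two ingredients have to be assembled: unobservability of $c_1$ by the other $N-1$ platforms, and a first-order incentive for platform $1$ to imitate the low-cost prescription.

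To make the undetectability precise, I would point out that the cooperation profile explicitly uses both $\lambda_1^{**}(c_L)$ and $\lambda_1^{**}(c_H)$ on-path across rounds. Platform $i\in\{2,\cdots,N\}$ can only infer platform $1$'s rate from its own measured $\lambda_{-i}$ via (1), and in any given round it cannot tell which cost realization motivated platform $1$'s choice. Hence submitting $\lambda_1^{**}(c_L)$ when $c_1=c_H$ is statistically indistinguishable from honest play when $c_1=c_L$, an event that occurs with positive probability $1-p_H$ every round. Thus the Definition \Romannum{3}.1 trigger never fires and the large-$\delta$ punishment threat from Section \Romannum{3} loses all force.

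For profitability, I would fix the other platforms at $\lambda_i=\lambda_i^{**}$ and treat $\pi_1(\lambda_1,\sum_{i=2}^{N}\lambda_i^{**})$ from (4) as a strictly convex function of $\lambda_1$ with cost parameter $c_H$. Evaluating its derivative at $\lambda_1=\lambda_1^{**}(c_H)$ and invoking the social-optimum FOC (8) shows it equals $-\frac{1}{\mu}\sum_{i=2}^{N}\frac{1}{\lambda_i^{**}}<0$, so platform $1$ strictly prefers to raise its rate above $\lambda_1^{**}(c_H)$. Subtracting (8) from (9) in turn gives $\lambda_1^{**}(c_L)>\lambda_1^{**}(c_H)$, confirming the prescribed deviation moves in the profitable direction. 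A short algebraic manipulation then reduces the strict inequality $\pi_1(\lambda_1^{**}(c_H))>\pi_1(\lambda_1^{**}(c_L))$ to $\lambda_1^{**}(c_H)\lambda_1^{**}(c_L)<(1+\sum_{i=2}^{N}\lambda_i^{**}/\mu)/c_H$, which I would verify directly from (8)--(9) under the paper's assumed cost ordering (equivalent to $\frac{1}{\mu}\sum_{i=2}^{N}\frac{1}{\lambda_i^{**}}\geq c_H-c_L$).

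The hard part is precisely this last closed-form inequality: unlike the complete-information case where a single convex function and its unique best response make everything transparent, here both $\lambda_1^{**}(c_H)$ and $\lambda_1^{**}(c_L)$ are pinned down by a coupled social-optimum system through the shared externality term $\frac{1}{\mu}\sum_{i=2}^{N}\frac{1}{\lambda_i^{**}}$, and one must argue the entire span $[\lambda_1^{**}(c_H),\lambda_1^{**}(c_L)]$ lies on the decreasing portion of platform $1$'s one-shot cost curve under $c_H$. Once that step is in place, combining it with the undetectability observation delivers the lemma and motivates the approximate trigger-and-punishment redesign in the remainder of Section \Romannum{4}.
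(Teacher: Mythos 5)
Your core reduction is the same as the paper's. Appendix~F also fixes the other platforms at $\lambda_i^{**}$, compares the one-shot costs $\pi_1(c_H,\lambda_1^{**}(c_H))$ and $\pi_1(c_H,\lambda_1^{**}(c_L))$, and shows that no-deviation is equivalent to $\sqrt{c_H+\frac{1}{\mu}\sum_{i=2}^N\frac{1}{\lambda_i^{**}}}\sqrt{c_L+\frac{1}{\mu}\sum_{i=2}^N\frac{1}{\lambda_i^{**}}}\leq c_H$; writing $B=\frac{1}{\mu}\sum_{i=2}^N\frac{1}{\lambda_i^{**}}$ and substituting (8)--(9), your condition $\lambda_1^{**}(c_H)\lambda_1^{**}(c_L)<\big(1+\sum_{i=2}^N\lambda_i^{**}/\mu\big)/c_H$ is exactly the negation of that, so the reduction is correct. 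Your first-order observation that the derivative at $\lambda_1^{**}(c_H)$ equals $-B<0$, and the undetectability discussion, are also consistent with the paper (the latter appears in the main text around the lemma rather than inside the proof).

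The flaw is in your final verification step. First, $(c_H+B)(c_L+B)>c_H^2$ is \emph{implied by} $B\geq c_H-c_L$ but is not \emph{equivalent} to it (the exact threshold on $B$ is the positive root of $B^2+B(c_H+c_L)+c_Hc_L-c_H^2=0$, which is strictly below $c_H-c_L$). Second, and more importantly, neither form of the inequality follows from the paper's assumed cost ordering $p_Hc_H+(1-p_H)c_L\leq c_2\leq\cdots\leq c_N$: that ordering places no lower bound on $B$, and for large bandwidth $\mu$ one has $B\to 0$, in which case $(c_H+B)(c_L+B)\to c_Hc_L<c_H^2$ and the deviation to $\lambda_1^{**}(c_L)$ is actually unprofitable. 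This is precisely why the paper concludes only that the no-deviation condition ``holds only if $\frac{1}{\mu}\sum_{i=2}^N\frac{1}{\lambda_i^{**}}$ is small and is not generally true,'' i.e., platform 1 \emph{may} deviate. If you execute your plan literally, the last step fails; you should instead stop at the reduced inequality and observe, as the paper does, that it fails to be excluded in general (equivalently, that the whole interval $[\lambda_1^{**}(c_H),\lambda_1^{**}(c_L)]$ lies on the decreasing branch of the cost curve whenever $B$ is not small), which is all the lemma's ``can still deviate'' claim requires.
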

The proof is given in Appendix F.
Once choosing between $\lambda_1^{**}(c_L)$ and $\lambda_1^{**}(c_H)$ in each round of the repeated game, platform 1 will not trigger any punishment. When  $c_1=c_H$, platform {$i\in\{2, \cdots, N\}$ under-samples with $\lambda_i^{**}$} by considering platform 1's average cost under incomplete information, and platform $1$ can take the information advantage to sample at high rate $\lambda_1^{**}(c_L)$ by using low AoI to justify its high sample cost. 

To negate information advantage of platform 1 under incomplete information, we next propose to blindly use platform 1's (deterministic) average cost to design its cooperation profile. That is, we recommend an approximate term $\tilde{\lambda}_1(\delta)$ to platform 1 all the time, without alternating between precise terms $\tilde{\lambda}_1(c_L, \delta)$ and $\tilde{\lambda}_1(c_H, \delta)$ over time to give platform 1 freedom to cheat. 

\begin{definition}
\emph{Our approximate trigger mechanism of indirect punishment under incomplete information is as follows:}
 \begin{itemize}
    \item \emph{In each round, all the platforms follow $\left(\tilde{\lambda}_{1}(\delta), \tilde{\lambda}_{2}(\delta), \cdots, \tilde{\lambda}_{N}(\delta)\right)$ to sample as approximate cooperation profile if none was ever detected to deviate from this profile in the past.}
    \item \emph{Once a deviation was found in the past, all the platforms will keep playing the equilibrium profile $\left(\left(\lambda_{1}^{*}\left(c_{H}\right), \lambda_{1}^{*}\left(c_{L}\right)\right), \lambda_{2}^{*}, \cdots, \lambda_{N}^{*}\right)$ in (11)-(13) forever as punishment.} 
\end{itemize}

\end{definition}
One can imagine that even if $\delta$ is sufficiently large, this approximate cooperation profile is still different from social optimizers $\left(\left(\lambda_{1}^{* *}\left(c_{H}\right), \lambda_{1}^{* *}\left(c_{L}\right)\right), \lambda_{2}^{* *}, \cdots, \lambda_{N}^{* *}\right)$ in (8)-(10) and there is inevitably some efficiency loss to avoid platform 1's cheating by using information advantage.  We next design the best cooperation profile $\left(\tilde{\lambda}_{1}(\delta), \tilde{\lambda}_{2}(\delta), \cdots, \tilde{\lambda}_{N}(\delta)\right)$ according to any value of non-trivial $\delta$ and minimize the involved inefficiency.

\subsection{Approximate  Cooperation Profile Design for Large $\delta$ Regime}

Under incomplete information,  platform 1 will behave indifferently no matter $c_1=c_H$ or $c_1=c_L$ in the repeated game. Then we can revise the social optimum in (8)-(10) by treating platform 1's cost constant as $\hat{c}_1:=p_Hc_H+(1-p_H)c_L$ deterministically. Then we approximate the social optimum as unique solutions to: 
\begin{align}
    \hat{\lambda}_{1}&=\sqrt{\frac{1+\sum\limits_{i=2}^{N} \hat{\lambda}_{i} / \mu}{p_{H} c_{H}+\left(1-p_{H}\right) c_{L}+\frac{1}{\mu} \sum\limits_{i=2}^{N} \frac{1}{\hat{\lambda}_{i}}}}, \\
   \hat{\lambda}_{i}&=\sqrt{\frac{1+\bigg(\sum\limits_{j = 1}^N \hat{\lambda}_{j}-\hat{\lambda}_i\bigg) / \mu}{c_{i}+\frac{1}{\mu}\bigg( \sum\limits_{j = 1}^N \frac{1}{\hat{\lambda}_{j}}-\frac{1}{\hat{\lambda}_i}\bigg)}}, \;\; i \in\{ 2, \cdots, N\}.
\end{align}
By comparing (20)-(21) with (8)-(10), we have the following result.
\begin{lemma}
Using approximation to smooth out sampling variation of platform 1, all the platforms will under-sample as compared to the social optimum. That is, 
\begin{align}
   \hat{\lambda}_1 \leq p_H\lambda_1^{**}(c_H) + (1-p_H)\lambda_1^{**}(c_L), \;\; \hat{\lambda}_i &\leq \lambda_i^{**}, \;\; i\in\{2, \cdots, N\}. \nonumber 
\end{align}
\end{lemma}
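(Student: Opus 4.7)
My plan is to compare the approximate FOC system (20)--(21) with the social-optimum FOC system (8)--(10) via Jensen's inequality applied to the convex maps $x\mapsto x^{-1/2}$ and $x\mapsto 1/x$, observing that (20)--(21) are exactly the FOCs of the deterministic social-optimum problem obtained from (7) by collapsing $\lambda_1(c_H)=\lambda_1(c_L)=\lambda_1$ and substituting $c_1\equiv\hat c_1$. Thus $\hat\lambda$ is the unique minimizer of a strictly convex ``deterministic-$\hat c_1$'' social cost, and the lemma becomes a comparative statics claim against the true stochastic-$c_1$ optimum.

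First, I would rewrite (8) and (9) in closed form as $\lambda_1^{**}(c)=\sqrt{A^{**}/(c+B^{**})}$, where $A^{**}=1+\sum_{i\geq 2}\lambda_i^{**}/\mu$ and $B^{**}=\frac{1}{\mu}\sum_{i\geq 2}1/\lambda_i^{**}$. Jensen on $x\mapsto x^{-1/2}$ gives $\lambda_{\mathrm{mix}}^{**}:=p_H\lambda_1^{**}(c_H)+(1-p_H)\lambda_1^{**}(c_L)\geq \sqrt{A^{**}/(\hat c_1+B^{**})}$, and Jensen on $x\mapsto 1/x$ gives $E^{**}:=\tfrac{p_H}{\lambda_1^{**}(c_H)}+\tfrac{1-p_H}{\lambda_1^{**}(c_L)}\geq \tfrac{1}{\lambda_{\mathrm{mix}}^{**}}$. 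These two inequalities, together with the concavity of $\sqrt{\cdot}$ yielding $E^{**}\leq\sqrt{(\hat c_1+B^{**})/A^{**}}$, are the driving estimates for the rest of the proof.

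Second, I would establish the two target bounds as a coupled pair. Once $\hat\lambda_i\leq\lambda_i^{**}$ for every $i\geq 2$ is known, $\hat A:=1+\sum_{i\geq 2}\hat\lambda_i/\mu\leq A^{**}$ and $\hat B:=\tfrac{1}{\mu}\sum_{i\geq 2}1/\hat\lambda_i\geq B^{**}$, so (20) and the first Jensen bound combine to give $\hat\lambda_1=\sqrt{\hat A/(\hat c_1+\hat B)}\leq \sqrt{A^{**}/(\hat c_1+B^{**})}\leq\lambda_{\mathrm{mix}}^{**}$. Conversely, $\hat\lambda_1\leq\lambda_{\mathrm{mix}}^{**}$ together with $\hat\lambda_j\leq\lambda_j^{**}$ ($j\geq 2$, $j\neq i$) forces $1/(\mu\hat\lambda_1)\geq \sqrt{(\hat c_1+B^{**})/(\mu^2 A^{**})}\geq E^{**}$ by the $\sqrt{\cdot}$-concavity estimate above, and (21) then yields $\hat\lambda_i^2=\hat A_i/[c_i+\tfrac{1}{\mu\hat\lambda_1}+\hat B_i]\leq A_i^{**}/[c_i+E^{**}+B_i^{**}]=(\lambda_i^{**})^2$, where $A_i^{**},B_i^{**}$ collect the remaining $j\neq i$ terms of (10).

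The main obstacle is breaking the circular dependence between these two implications. I would do so by a homotopy argument: introduce a one-parameter family of problems $P(t)$, $t\in[0,1]$, interpolating linearly between the deterministic-$\hat c_1$ problem at $t=0$ (whose solution is $\hat\lambda$) and the true stochastic problem at $t=1$ (whose solution is the social optimum of Proposition~\Romannum{2}.3) by widening the gap $\lambda_1(c_H)-\lambda_1(c_L)$ while preserving the mean $\hat c_1$. Applying the implicit function theorem to the FOC system of $P(t)$ and using the Jensen bounds to sign the cross-derivatives, I would show that $d\lambda_j^{(t)}/dt\geq 0$ componentwise (with $\lambda_1$ tracked by its mean), so the desired inequalities at $t=1$ follow from the trivial equalities at $t=0$.
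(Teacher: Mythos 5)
The paper gives no proof of this lemma beyond the sentence ``By comparing (20)--(21) with (8)--(10),'' so you are supplying an argument the authors omitted. Your starting point is the right one: (20)--(21) are exactly the first-order conditions of the deterministic problem with mean cost $\hat{c}_1$, and your Jensen estimates $p_H\lambda_1^{**}(c_H)+(1-p_H)\lambda_1^{**}(c_L)\ge\sqrt{A^{**}/(\hat{c}_1+B^{**})}$ (convexity of $x\mapsto x^{-1/2}$) and $p_H/\lambda_1^{**}(c_H)+(1-p_H)/\lambda_1^{**}(c_L)\le\sqrt{(\hat{c}_1+B^{**})/A^{**}}$ (concavity of $\sqrt{\cdot}$) are correct and are precisely the pointwise comparison of the two fixed-point systems that the lemma needs.

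The gap is that the circularity you yourself identify is never actually broken. Your two conditional implications are each premised on the full set of inequalities for the other coordinates --- in particular, the step $1/\hat{\lambda}_1\ge\sqrt{(\hat{c}_1+B^{**})/A^{**}}$ needs $\hat{B}\ge B^{**}$ and $\hat{A}\le A^{**}$, i.e.\ $\hat{\lambda}_j\le\lambda_j^{**}$ for \emph{all} $j\ge 2$ including $j=i$, not just $j\ne i$ --- so the pairwise implications cannot bootstrap each other. The homotopy meant to resolve this is only gestured at: the claim that $d\lambda_j^{(t)}/dt\ge 0$ componentwise does not follow from the Jensen bounds alone, since those bounds sign only the direct forcing terms of the linearized FOC system, while the total derivatives are obtained by applying the inverse Jacobian of the coupled system; signing that inverse requires a separate structural argument (e.g.\ that the Jacobian is an M-matrix, or that the spectral radius of the best-response derivative matrix is below one), which you do not supply. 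A cleaner way to close the gap with machinery the paper already has: both (20)--(21) and (8)--(10) are unique fixed points of coordinatewise \emph{increasing} maps (Appendix A shows each $\lambda_i(\cdot)$ is increasing in the others' rates), and your two Jensen inequalities say exactly that, for any fixed incumbent profile, the stochastic system's mean response for platform 1 dominates the deterministic response while its harmonic response is dominated, so the stochastic map dominates the deterministic map pointwise in every coordinate of (10) versus (21) as well. Iterating the dominating map from $(\hat{\lambda}_1,\ldots,\hat{\lambda}_N)$ then yields a monotone, bounded sequence converging to the unique fixed point of (8)--(10), which is therefore componentwise at least $\hat{\lambda}$; this delivers all the stated inequalities simultaneously and dispenses with the homotopy.
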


Given $(\hat{\lambda}_1, \hat{\lambda}_2, {\cdots, \hat{\lambda}_N})$ in (20)-(21) are attainable now and any platform's deviation from them will clearly bring itself in a larger long-term cost. We can analyze the no-deviation condition to reverse-engineer the feasible regime of large $\delta$ for enabling $(\hat{\lambda}_1, \hat{\lambda}_2, {\cdots, \hat{\lambda}_N})$ in the first place. 

When $c_1 = c_L$, if platform 1 chooses to deviate in the first round to save the immediate cost $\pi_{1}\left(\lambda_{1}, \sum\limits_{i=2}^{N} \hat{\lambda}_{i}\right)$ in (1) without any time discount, its optimal deviation or best response to {$\sum\limits_{i=2}^{N}\hat{\lambda}_i$} is $\lambda_{1}=\sqrt{\frac{1+\sum\limits_{i=2}^{N} \hat{\lambda}_{i} / \mu}{c_{L}}}$ according to (3). Its (discounted) long-term cost objective over all time stages is 
 \begin{align}
   \hat{\Pi}_{1}\left(c_{L}\right)&=\pi_{1}\left(\sqrt{\frac{1+\sum\limits_{i=2}^{N} \hat{\lambda}_{i} / \mu}{c_{L}}}, \sum\limits_{i=2}^{N} \hat{\lambda}_{i}\right)+\frac{\delta}{1-\delta} \pi_{1}\left(\left(\lambda_{1}^{*}\left(c_{L}\right), \lambda_{1}^{*}\left(c_{H}\right)\right), \lambda_{2}^{*}, \cdots, \lambda_{N}^{*}\right). \nonumber
\end{align}
Otherwise, it will obtain the following cost without any deviation,
    \begin{align}
   \Pi_{1}\left(c_{L}\right)&=\pi_{1}\left(\hat{\lambda}_{1}, \hat{\lambda}_{2}, \cdots, \hat{\lambda}_{N} | c_{1}=c_{L}\right)+\frac{\delta}{1-\delta} \pi_{1}\left(\hat{\lambda}_{1}, \hat{\lambda}_{2}, \cdots, \hat{\lambda}_{N}\right). \nonumber
\end{align} 
To ensure that platform 1 never deviates when $c_1=c_L$ in the repeated game, we require $\hat{\Pi}_1(c_L)\geq \Pi_1(c_L)$, or simply $\delta\geq \hat{\delta}_1^{th}(c_L)$ with 
\begin{table}[H]
\footnotesize{
  \begin{align}
&\hat{\delta}^{th}_1(c_L) :=\frac{c_{L} \hat{\lambda}_{1}-2 \sqrt{\left(1+\sum\limits_{i=2}^{N} \hat{\lambda}_{i} / \mu\right) c_{L}}+\left(1+\sum\limits_{i=2}^{N} \hat{\lambda}_{i} / \mu\right) / \hat{\lambda}_{1}}{\left(c_{L}-\hat{c}_{1}\right) \hat{\lambda}_{1}+2 \sqrt{1+\sum\limits_{i=2}^{N} \lambda_{i}^{*} / \mu}\left(p_{H} \sqrt{c_{H}}+\left(1-p_{H}\right) \sqrt{c_{L}}\right)-2 \sqrt{\left(1+\sum\limits_{i=2}^{N} \hat{\lambda}_{i} / \mu\right) c_{L}}}.
\end{align}}
\vspace{-30pt}
\end{table}
Similarly, we require the following to ensure no deviation when $c_1 = c_H$:
\begin{table}[H]
\footnotesize{
  \begin{align}
&\delta\geq \hat{\delta}^{th}_1(c_H) :=\frac{c_{H} \hat{\lambda}_{1}-2 \sqrt{\left(1+\sum\limits_{i=2}^{N} \hat{\lambda}_{i} / \mu\right) c_{H}}+\left(1+\sum\limits_{i=2}^{N} \hat{\lambda}_{i} / \mu\right) / \hat{\lambda}_{1}}{\left(c_{H}-\hat{c}_{1}\right) \hat{\lambda}_{1}+2 \sqrt{1+\sum\limits_{i=2}^{N} \lambda_{i}^{*} / \mu}\left(p_{H} \sqrt{c_{H}}+\left(1-p_{H}\right) \sqrt{c_{L}}\right)-2 \sqrt{\left(1+\sum\limits_{i=2}^{N} \hat{\lambda}_{i} / \mu\right) c_{H}}}.
\end{align}}
\vspace{-30pt}
\end{table}
Given platform 1 always chooses $\hat{\lambda}_1$, we also require the following for platform $i\in\{2,\cdots,N\}$ to follow $\hat{\lambda}_i$:   

     \begin{align}
      \delta \geq \hat{\delta}_{i}^{t h} :=\frac{\sqrt{\frac{c_{i}+\frac{1}{\mu} \bigg(\sum\limits_{j = 1}^N \frac{1}{\hat{\lambda}_{j}}-\frac{1}{\hat{\lambda}_i}\bigg)}{c_{i}}}+\sqrt{\frac{c_{i}}{c_{i}+\frac{1}{\mu}\bigg( \sum\limits_{j = 1}^N \frac{1}{\hat{\lambda}_{j}}-\frac{1}{\hat{\lambda}_i}\bigg)}}-2}{2 \sqrt{\frac{1+\left(p_{H} \lambda_{1}^{*}\left(c_{H}\right)+\left(1-p_{H}\right) \lambda_{1}^{*}\left(c_{L}\right)+\sum\limits_{j = 2}^N \lambda_{j}^{*}-\lambda_i^*\right) / \mu}{1+\bigg(\sum\limits_{j=1}^N \frac{1}{\hat{\lambda}_{j}}-\frac{1}{\hat{\lambda}_i}\bigg) / \mu}}-2}.
\end{align}
 Recall that {$p_Hc_H + (1-p_H)c_L \leq c_2 \leq \cdots \leq c_N$}, we have {$\hat{\delta}^{th}_N \leq \cdots \leq \hat{\delta}^{th}_{2} \leq \max\{ \hat{\delta}^{th}_1(c_H)$, $  \hat{\delta}^{th}_1(c_L) \}$ $:=  \hat{\delta}^{th}_1$}. Yet note that $\min\{ \hat{\delta}^{th}_1(c_H)$, $  \hat{\delta}^{th}_1(c_L) \}$ may or may not be larger than $\hat{\delta}^{th}_{2}$.
 
\begin{proposition}[Large $\delta$ Regime]
Under incomplete information, if $\delta \geq \hat{\delta}^{th}_1 = \max\{ \hat{\delta}^{th}_1(c_H)$, $  \hat{\delta}^{th}_1(c_L) \}$, {all the $N$} platforms will follow the approximate cooperation profile $(\tilde{\lambda}_1(\delta),\tilde{\lambda}_2(\delta), \cdots$, $ \tilde{\lambda}_N(\delta))=(\hat{\lambda}_1,\hat{\lambda}_2, \cdots$, $\hat{\lambda}_N)$ in (20)-(21) all the time, without triggering punishment $((\lambda_{1}^{*}\left(c_{H}\right)$, $\lambda_{1}^{*}\left(c_{L}\right)), \lambda_{2}^{*}, \cdots, \lambda_{N}^{*})$ in (11)-(13).
\end{proposition}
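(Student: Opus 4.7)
The plan is to verify that the approximate cooperation profile $(\hat\lambda_1,\ldots,\hat\lambda_N)$ in (20)-(21) is supported as a subgame-perfect outcome of the repeated Bayesian game under the trigger mechanism of Definition IV.1 whenever $\delta\ge\hat\delta_1^{th}$. Because the punishment reverts forever to the stage-game Bayesian Nash equilibrium in (11)-(13), the one-shot-deviation principle applies: I only need to show that no platform can profitably deviate in a single round of the cooperation phase.

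First I would treat platform $1$ separately for the two type-realizations. Given the others sample $\sum_{i\ge 2}\hat\lambda_i$, platform $1$'s optimal one-shot deviation when $c_1=c_L$ is the best response $\sqrt{(1+\sum_{i\ge 2}\hat\lambda_i/\mu)/c_L}$ obtained exactly as in (3). Since $c_1$ is i.i.d.\ across rounds, the continuation costs beginning the next round are expectations over $c_1$; substituting the closed-form Nash rates $\lambda_1^*(c_L),\lambda_1^*(c_H)$ from (11)-(12) into (4)-(5) collapses the punishment continuation to $2\sqrt{1+\sum_{i\ge 2}\lambda_i^*/\mu}\,[p_H\sqrt{c_H}+(1-p_H)\sqrt{c_L}]+1/\mu$, while the cooperation continuation is $(1+\sum_{i\ge 2}\hat\lambda_i/\mu)/\hat\lambda_1+\hat c_1\hat\lambda_1+1/\mu$. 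Imposing the no-deviation inequality $\Pi_1(c_L)\le\hat\Pi_1(c_L)$ and solving for $\delta$ yields exactly the threshold $\hat\delta_1^{th}(c_L)$ in (22); an identical calculation with $c_L$ replaced by $c_H$ produces $\hat\delta_1^{th}(c_H)$ in (23). Taking the maximum handles both realizations simultaneously.

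For each incumbent platform $i\in\{2,\ldots,N\}$, whose cost $c_i$ is constant and who observes neither $c_1$ nor its realization, the relevant objective is the expected stage cost in (6). Holding $\hat\lambda_1$ and $\{\hat\lambda_j\}_{j\ne 1,i}$ fixed, its best one-shot deviation is $\sqrt{(1+(\hat\lambda_1+\sum_{j\ne 1,i}\hat\lambda_j)/\mu)/c_i}$; because $\hat\lambda_1$ is identical across the two $c_1$-realizations, the $p_H$-weighted average in (6) collapses to a single quadratic in $\lambda_i$. Re-applying the no-deviation inequality and simplifying with the closed form of $\lambda_i^*$ from (13) recovers $\hat\delta_i^{th}$ in (24).

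It then remains to establish the ordering $\hat\delta_N^{th}\le\cdots\le\hat\delta_2^{th}\le\hat\delta_1^{th}$ stated before the proposition, so that $\delta\ge\hat\delta_1^{th}$ alone is sufficient. Under $\hat c_1\le c_2\le\cdots\le c_N$, a platform with smaller cost faces both a larger deviation temptation (its best-response rate scales as $1/\sqrt{c_i}$) and a larger gap between its cooperation and its punishment rate, both of which raise the threshold. I would compare (24) pairwise for $i$ and $i+1$ while holding the aggregate $\sum_{j\ne i}\hat\lambda_j$ fixed and verify that the ratio of deviation gain to deviation gain plus per-round punishment premium is monotone in $c_i$; an analogous comparison between (24) and the $c_H$-branch of (22)-(23) gives $\hat\delta_2^{th}\le\hat\delta_1^{th}$. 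Combined with the one-shot-deviation principle, $\delta\ge\hat\delta_1^{th}$ then rules out profitable deviations for every platform in every realization, so the cooperation profile persists forever and the punishment branch is never triggered. The main obstacle is this monotonicity step, since each $\hat\lambda_i$ depends implicitly on all costs through the coupled system (20)-(21); I expect to handle it by a direct ratio comparison that exploits $\hat c_1\le c_i$ rather than by differentiating the implicit system in $c_i$.
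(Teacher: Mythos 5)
Your proposal is correct and follows essentially the same route as the paper: the paper derives the no-deviation thresholds $\hat{\delta}^{th}_1(c_L)$, $\hat{\delta}^{th}_1(c_H)$ and $\hat{\delta}^{th}_i$ in (22)--(24) exactly as you do (best one-shot deviation against $\sum_{i\ge 2}\hat{\lambda}_i$, grim-trigger continuation evaluated in expectation over $c_1$, solve $\hat{\Pi}\ge\Pi$ for $\delta$), and your closed forms for the cooperation and punishment continuation values match (22)--(23). The one step you flag as the main obstacle --- the ordering $\hat{\delta}^{th}_N\le\cdots\le\hat{\delta}^{th}_2\le\hat{\delta}^{th}_1$ under $\hat{c}_1\le c_2\le\cdots\le c_N$ --- is likewise only asserted, not proved, in the paper, so your sketch of a monotonicity-in-$c_i$ argument is not a gap relative to the paper's own treatment.
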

When all the platforms have the same average costs, i.e., $p_Hc_H+ (1-p_H)c_L=c_{2} =\cdots= c_N$, we can analytically prove the following proposition. 
\begin{proposition}
 Given symmetric costs $p_Hc_H+ (1-p_H)c_L=c_{2} =\cdots= c_N$ among the platforms, the approximation ratio achieved by our trigger mechanism with profile  $(\hat{\lambda}_1,\hat{\lambda}_2, \cdots, \hat{\lambda}_N)$ in (20)-(21) is $\frac{N}{N-1}$ as compared to the social optimum with $\left(\left(\lambda_{1}^{**}\left(c_{H}\right), \lambda_{1}^{**}\left(c_{L}\right)\right), \lambda_{2}^{**}, \cdots, \lambda_{N}^{**}\right)$ in (8)-(10). The mechanism's performance improves as we have more incumbent platforms with known cost information. 
\end{proposition}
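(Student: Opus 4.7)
My plan is to exploit the full symmetry of the cost structure to reduce the approximate mechanism's cost to closed form, and then lower-bound the incomplete-information social optimum $\pi^{**}$ by two AM--GM-type inequalities whose combination yields exactly the ratio $N/(N-1)$.

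First I would simplify the approximate profile using $\hat{c}_1:=p_Hc_H+(1-p_H)c_L=c_2=\cdots=c_N=c$. Equations (20)--(21) then coincide structurally with the symmetric complete-information social-optimum conditions (2) at uniform cost $c$, so their unique symmetric fixed point is $\hat{\lambda}_i=1/\sqrt{c}$ for every $i$. Substituting into (7) and noting that platform $1$'s time-averaged sampling cost is $c/\sqrt{c}=\sqrt{c}$, each platform's total cost reduces to $2\sqrt{c}+N/\mu$, so
\begin{align}
\pi^{r}=2N\sqrt{c}+N^{2}/\mu. \nonumber
\end{align}

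For $\pi^{**}$ I would exploit the symmetry of platforms $2,\dots,N$ in (8)--(10) to set $\lambda_j^{**}=\lambda$ for $j\geq 2$, and abbreviate $a=\lambda_1^{**}(c_H)$, $b=\lambda_1^{**}(c_L)$, $m=p_Ha+(1-p_H)b$, $k=p_H/a+(1-p_H)/b$. Expanding (7) at this point and regrouping, I split $\pi^{**}$ into a ``$\sqrt{c}$ piece'' consisting of the AM--GM pairs $c\lambda+1/\lambda$ (for each platform $i\geq 2$) together with $c_Ha+1/a$ and $c_Lb+1/b$ (for platform $1$), and a ``$1/\mu$ piece'' equal to $[1+(N-1)^{2}]/\mu+(N-1)(\lambda k+m/\lambda)/\mu$. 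The $\sqrt{c}$ piece is bounded below by $2(N-1)\sqrt{c}$ via term-wise AM--GM (discarding platform $1$'s non-negative contribution). For the $1/\mu$ piece, the key observation is the Cauchy--Schwarz-type inequality
\begin{align}
km=\bigl(p_H/a+(1-p_H)/b\bigr)\bigl(p_Ha+(1-p_H)b\bigr)\geq\bigl(p_H+(1-p_H)\bigr)^{2}=1, \nonumber
\end{align}
which combined with AM--GM gives $\lambda k+m/\lambda\geq 2\sqrt{km}\geq 2$. Hence the $1/\mu$ piece is at least $[1+(N-1)^{2}+2(N-1)]/\mu=N^{2}/\mu$, yielding $\pi^{**}\geq 2(N-1)\sqrt{c}+N^{2}/\mu$.

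Combining the two bounds,
\begin{align}
\frac{\pi^{r}}{\pi^{**}}\leq\frac{2N\sqrt{c}+N^{2}/\mu}{2(N-1)\sqrt{c}+N^{2}/\mu}\leq\frac{N}{N-1}, \nonumber
\end{align}
where the last step reduces algebraically to $-N^{2}\leq 0$. I would establish tightness by sending $\mu\to\infty$ (killing the $1/\mu$ terms) together with $c_L\to 0$ and $p_H\to 0$ under the constraint $p_Hc_H=c$, which drives $p_H\sqrt{c_H}+(1-p_H)\sqrt{c_L}\to 0$ and makes both AM--GM inequalities asymptotically tight, so the ratio approaches $N/(N-1)$. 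The concluding clause is then immediate, since $N/(N-1)$ decreases monotonically to $1$ in $N$.

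The hard part will be the ``$1/\mu$ piece'' bound: a naive term-by-term AM--GM gives only $[1+(N-1)^{2}]/\mu$, which is strictly smaller than the $N(N-1)/\mu$ the ratio argument requires whenever $N\geq 3$. Upgrading this using the product structure of $k$ and $m$---dual weighted means of the same pair $\{a,b\}$---via $km\geq 1$ is precisely what closes the argument to the claimed $N/(N-1)$.
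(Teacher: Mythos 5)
Your proof is correct and reaches the same conclusion by the same overall strategy as the paper: compute $\pi^{r}=2N\sqrt{c}+N^{2}/\mu$ exactly at the symmetric fixed point $\hat{\lambda}_i=1/\sqrt{c}$, lower-bound $\pi^{**}$, and compare. The execution differs in a way worth noting. The paper keeps platform $1$'s own-cost contribution $2\left(p_H\sqrt{c_H}+(1-p_H)\sqrt{c_L}\right)$ in the denominator, strips the $1/\mu$ terms by twice invoking an auxiliary ratio lemma ($x\geq y$, $b\geq a$ $\Rightarrow$ $(x+a)/(y+b)\leq x/y$), and then explicitly optimizes over $p_H$ and $x=\sqrt{c_H/c_L}$ to arrive at the sharper bound $N/(\sqrt{p_H}+N-1)<N/(N-1)$. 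You instead discard platform $1$'s non-negative terms outright, prove the single closed-form bound $\pi^{**}\geq 2(N-1)\sqrt{c}+N^{2}/\mu$, and close with one cross-multiplication; this is shorter, avoids having to verify the hypotheses of the ratio lemma, and your explicit limiting sequence ($\mu\to\infty$, $c_L\to0$, $p_H\to0$) makes the tightness of $N/(N-1)$ explicit where the paper leaves it implicit in its formula. Two small remarks: your Cauchy--Schwarz step $km\geq1$ is correct but not actually needed, since $\lambda k+m/\lambda=p_H\left(\lambda/a+a/\lambda\right)+(1-p_H)\left(\lambda/b+b/\lambda\right)\geq2$ follows from term-wise AM--GM (this is what the paper uses), so your closing claim that the naive bound fails is only true if one discards the cross terms entirely rather than AM--GM'ing them; and the ``$\sqrt{c}$ piece'' should carry the weights $p_H$ and $1-p_H$ on platform $1$'s two pairs, though this is immaterial since you drop that contribution using only its non-negativity.
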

The proof is given in Appendix G. Given only platform 1 with hidden cost, relatively we face less information uncertainty as total platform number $N$ increases. 

\subsection{{Approximate Cooperation Profile Design for Medium $\delta$ Regime}}
 
If $\hat{\delta}^{th}_{j} \leq \delta < \hat{\delta}^{th}_{j-1}$, where $j \in\{ 2, \cdots, N\}$, only platform $k \in \{j, \cdots, N\}$ will still follow perfect approximate profile $\hat{\lambda}_k$ in (20). Yet platform 1 will deviate when $c_1=c_L$ given $\delta<\max\{ \hat{\delta}^{th}_1(c_H),  \hat{\delta}^{th}_1(c_L) \}$, and platform $i \in \{2, \cdots, j-1\}$ will also deviate from $\hat{\lambda}_i$, requiring us to design new $\tilde{\lambda}_1(\delta)$ and $\tilde{\lambda}_i(\delta)$. Similar to (22), (23) and (24), we need to ensure the platform 1's long-term cost does not change after the best immediate deviation no matter whether $c_1=c_L$ or $c_1=c_H$, and ensure platform $i$'s long-term cost does not change after the best immediate deviation. The we have the following.

\begin{proposition}[{Medium $\delta$ Regime}]
In the repeated game under incomplete information, if $\hat{\delta}^{th}_j \leq \delta < \hat{\delta}^{th}_{j-1}$ for some $j \in\{ 2, \cdots, N\}$, all the $N$ platforms will always follow the cooperation profile below without deviating to trigger $\left(\left(\lambda_{1}^{*}\left(c_{H}\right), \lambda_{1}^{*}\left(c_{L}\right)\right), \lambda_{2}^{*}, \cdots, \lambda_{N}^{*}\right)$ in (11)-(13) as punishment: 
\begin{itemize}
    \item For platform $k \in \{j, \cdots, N\}$ with greater costs: $\tilde{\lambda}_k(\delta) = \hat{\lambda}_k$ in (20)-(21).
    \item For platform 1 and platform $i\in \{2, \cdots, j-1\}$, their cooperation profile $\big(\tilde{\lambda}_1(\delta), \cdots, \tilde{\lambda}_{j-1}(\delta)\big)$ to follow are the unique solutions to: 
    \vspace{-10pt}
    \begin{table}[H]
        \small{
            \begin{align}     
 \tilde{\lambda}_{1}(\delta)=&\max \Bigg\{\frac{M_{L}-\sqrt{M_{L}^{2}-\left(\delta \hat{c}_{1}+(1-\delta) c_{L}\right)\left(1+\left(\sum\limits_{k = 2}^{j-1} \tilde{\lambda}_{k}(\delta)+\sum\limits_{k=j}^N \hat{\lambda}_{k}\right) / \mu\right)}}{\delta \hat{c}_{1}+(1-\delta) c_{L}}, \nonumber \\
 &\frac{M_{H}-\sqrt{M_{H}^{2}-\left(\delta \hat{c}_{1}+(1-\delta) c_{H}\right)\left(1+\left(\sum\limits_{k = 2}^{j-1} \tilde{\lambda}_{k}(\delta)+\sum\limits_{k=j}^N \hat{\lambda}_{k}\right) / \mu\right)}}{\delta \hat{c}_{1}+(1-\delta) c_{H}} \Bigg\}, \nonumber 
 \end{align}
 \begin{align} 
 \tilde{\lambda}_{i}(\delta)-\frac{M_{i}-\sqrt{M_{i}^{2}-c_{i}\left(1+\left(\sum\limits_{k = 1}^{j-1} \tilde{\lambda}_{k}(\delta)-\tilde{\lambda}_{i}(\delta)+\sum\limits_{k=j}^N \hat{\lambda}_{k}\right) / \mu\right)}}{c_{i}}=0,\nonumber
        \end{align}
        }
        \vspace{-30pt}
    \end{table}
where
\small{\begin{align}
M_{L}&=\delta \sqrt{1+\sum\limits_{i=2}^{N} \lambda_{i}^{*} / \mu}\left(p_{H} \sqrt{c_{H}}+\left(1-p_{H}\right) \sqrt{c_{L}}\right)+(1-\delta) \sqrt{\left(1+\left(\sum\limits_{k = 2}^{j-1} \tilde{\lambda}_{k}(\delta)+\sum\limits_{k=j}^N \hat{\lambda}_{k}\right) / \mu\right) c_{L}}, \nonumber  \\
M_{H}&=\delta \sqrt{1+\sum\limits_{i=2}^{N} \lambda_{i}^{*} / \mu}\left(p_{H} \sqrt{c_{H}}+\left(1-p_{H}\right) \sqrt{c_{L}}\right)+(1-\delta) \sqrt{\left(1+\left(\sum\limits_{k = 2}^{j-1} \tilde{\lambda}_{k}(\delta)+\sum\limits_{k=j}^N \hat{\lambda}_{k}\right) / \mu\right) c_{H}}, \nonumber \\
M_{i}&=\sqrt{c_{i}}(\delta \sqrt{1+(p_{H} \lambda_{1}^{*}\left(c_{H}\right)+\left(1-p_{H}\right) \lambda_{1}^{*}\left(c_{L}\right)+\sum\limits_{k = 2}^{N} \lambda_{k}^{*}-\lambda_i^* ) / \mu} \nonumber \\
&+(1-\delta) \sqrt{1+\left(\sum\limits_{k = 1}^{j-1} \tilde{\lambda}_{k}(\delta)-\tilde{\lambda}_{i}(\delta)+\sum\limits_{k=j}^N \hat{\lambda}_{k}\right) / \mu}). \nonumber
\end{align}}
\end{itemize}
\end{proposition}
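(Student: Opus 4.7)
The plan is to extend the indifference-based construction used in Proposition III.3 (complete information, medium $\delta$) and Proposition IV.4 (incomplete information, large $\delta$) to the present setting. Since the thresholds satisfy $\hat{\delta}^{th}_N\leq\cdots\leq\hat{\delta}^{th}_j\leq\delta$, I would first observe that the high-cost platforms $k\in\{j,\ldots,N\}$ still find it long-run optimal, against any profile of the remaining platforms, to play their approximate socially efficient rates $\hat{\lambda}_k$ from (20)-(21), so that portion of the cooperation profile needs no redesign. The remaining task is to construct $\tilde{\lambda}_1(\delta)$ and $\tilde{\lambda}_i(\delta)$ for $i\in\{2,\ldots,j-1\}$ so that no one-shot deviation against the punishment profile of (11)-(13) is profitable.

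For platform $i\in\{2,\ldots,j-1\}$, whose information set does not distinguish the realizations of $c_1$, the best immediate deviation against the effective cooperation total $\sum_{k=1}^{j-1}\tilde{\lambda}_k(\delta)-\tilde{\lambda}_i(\delta)+\sum_{k=j}^{N}\hat{\lambda}_k$ is the best response $\sqrt{(1+(\cdot)/\mu)/c_i}$ read off from the first-order condition behind (13). Equating the perpetual-cooperation cost with the deviate-then-punish cost, cancelling the common $1/\mu$ terms and clearing the $\delta/(1-\delta)$ denominator, yields a quadratic $c_i\tilde{\lambda}_i^2-2M_i\tilde{\lambda}_i+(1+(\cdot)/\mu)=0$; taking the smaller root (chosen to minimise social cost, exactly as in Proposition III.3) reproduces the stated expression. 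The quantity $M_i$ comes out as the convex combination, with weights $(1-\delta)$ and $\delta$, of the current-round best-response surd $\sqrt{c_i(1+(\cdot)^{coop}/\mu)}$ and the punishment-equilibrium surd $\sqrt{c_i(1+(\cdot)^{punish}/\mu)}$, where the Bayesian punishment uses $p_H\lambda_1^*(c_H)+(1-p_H)\lambda_1^*(c_L)$ as in (13).

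For platform 1 the subtle point is that the realization of $c_1$ affects both the current-round cost and the best-deviation rate, yet (by Lemma IV.1) its cooperation target $\tilde{\lambda}_1(\delta)$ must be a single number used regardless of the unobserved $c_1$. I would write the indifference condition twice, conditioning separately on $c_1=c_L$ and $c_1=c_H$. After multiplying through by $(1-\delta)$, the current-round term $c_L\tilde{\lambda}_1$ (or $c_H\tilde{\lambda}_1$) combines with the cooperation-future expectation $\hat{c}_1\tilde{\lambda}_1$ to give the quadratic coefficient $\delta\hat{c}_1+(1-\delta)c_L$ (or $\delta\hat{c}_1+(1-\delta)c_H$), while the punishment-future contribution produces the common Bayesian surd $\delta\sqrt{1+\sum_{i\geq 2}\lambda_i^*/\mu}\bigl(p_H\sqrt{c_H}+(1-p_H)\sqrt{c_L}\bigr)$ inside $M_L$ and $M_H$, exactly as in (22)-(23). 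Taking the smaller root of each quadratic yields the two fractions in the statement; the outer $\max$ is necessary because the binding no-deviation constraint is whichever of $c_1=c_L$ and $c_1=c_H$ produces the stronger temptation, and the single cooperation rate must deter deviation under both realizations simultaneously.

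The main obstacle is the coupled fixed-point structure: each $\tilde{\lambda}_i(\delta)$ depends on the others through $\sum_{k=1}^{j-1}\tilde{\lambda}_k(\delta)$, while $\tilde{\lambda}_1(\delta)$ itself is a $\max$ of two expressions that contain $\sum_{k=2}^{j-1}\tilde{\lambda}_k(\delta)$ inside the square roots of $M_L$ and $M_H$. I would recast each indifference equation as a map $F_i$ on the compact interval $[\hat{\lambda}_i,\lambda_i^*]$ (within which the discriminant stays non-negative and the smaller root is well-defined), verify monotonicity and a contraction constant strictly below $1$ by an argument analogous to Appendix D, and invoke Banach's fixed-point theorem to obtain the unique joint solution. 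Continuity of the resulting profile at the endpoints $\delta=\hat{\delta}^{th}_{j-1}$ and $\delta=\hat{\delta}^{th}_j$ connects the construction to Proposition IV.4 above and to the adjacent medium-$\delta$ window below, completing the piecewise characterisation of the approximate cooperation profile across all three $\delta$ regimes.
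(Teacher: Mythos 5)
Your derivation of the cooperation profile itself matches the paper's: the high-cost platforms $k\in\{j,\ldots,N\}$ keep $\hat{\lambda}_k$, each remaining platform's profile is pinned down by equating the perpetual-cooperation cost with the best-immediate-deviation-then-punishment cost, the resulting quadratics $c_i\tilde{\lambda}_i^2-2M_i\tilde{\lambda}_i+(1+(\cdot)/\mu)=0$ are solved by the smaller root to keep the social cost low, and for platform 1 the indifference condition is imposed separately under $c_1=c_L$ and $c_1=c_H$ with the outer $\max$ selecting the binding realization --- this is exactly the paper's system (45)--(46) in Appendix H, and your reading of $M_i$, $M_L$, $M_H$ as $\delta$-weighted combinations of the cooperation-round and punishment-round surds is the correct bookkeeping. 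Where you diverge is uniqueness. The paper does not use a contraction argument: it reduces Proposition IV.6 to Proposition IV.7 (the medium-regime system is the small-regime system with some coordinates frozen at $\hat{\lambda}_k$), and in Appendix I it shows each $\tilde{\lambda}_i(\delta)$, viewed as a function of the others via the smaller quadratic root, is convex and strictly increasing, then counts roots of the difference $g(\tilde{\lambda}_1(\delta))$ of a map and an inverse map using convexity plus sign checks at the interval endpoints, and finally inducts on $N$. Your proposed route via Banach's fixed-point theorem on $[\hat{\lambda}_i,\lambda_i^*]$ would also deliver uniqueness \emph{if} the contraction constant is genuinely below $1$, but that is the one step you assert rather than establish, and it is not obviously true: the relevant composed best-response-type maps are convex and increasing, so their slope can approach or exceed $1$ near the punishment equilibrium (indeed, in the analogous complete-information Appendix E the two curves meet again at $\lambda_1^*$ with $g(\lambda_1^*)=0$, which is incompatible with a strict contraction on the closed interval). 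The paper's convexity-based root-counting sidesteps precisely this difficulty, so if you pursue your version you would need either to shrink the domain away from $\lambda_i^*$ or to replace the contraction step with the convexity argument. A second, smaller gloss common to both your sketch and the paper: the claim that platforms $k\geq j$ ``still'' follow $\hat{\lambda}_k$ strictly requires re-checking their no-deviation condition against the inflated rates $\tilde{\lambda}_i(\delta)>\hat{\lambda}_i$ of the low-cost platforms, not just against the profile used to define $\hat{\delta}^{th}_k$.
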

The proof is given in Appendix H.

\subsection{Approximate Cooperation Profile Design for Small $\delta$ Regime}

If $\delta$ is smaller than the smallest threshold ${\hat{\delta}^{th}_N}$ among the platforms, no platform will follow cooperation profile $(\hat{\lambda}_1, \hat{\lambda}_2, {\cdots, \hat{\lambda}_N})$ in (20)-(21), and we need to redesign new $\big(\tilde{\lambda}_1(\delta), \tilde{\lambda}_2(\delta), \cdots$, $ \tilde{\lambda}_N(\delta)\big)$ jointly as functions of $\delta$. Similarly, we need to design the cooperation profile such that the platforms' long-term discounted costs do not change after the best immediate deviation. 
\begin{proposition}[Small $\delta$ Regime]
 In the repeated game under incomplete information, if  $\delta < \hat{\delta}^{th}_N $ with $i = N$ in (24), the $N$ platforms will always follow the cooperation profile $\big(\tilde{\lambda}_1(\delta), \tilde{\lambda}_2(\delta), \cdots, \tilde{\lambda}_N(\delta)\big)$ as unique solutions to
 \vspace{-10pt}
 \begin{table}[H]
    \small{
     \begin{align}     
 \tilde{\lambda}_{1}(\delta)=&\max \Bigg\{\frac{M_{L}^{\prime}-\sqrt{M_{L}^{\prime 2}-\left(\delta \hat{c}_{1}+(1-\delta) c_{L}\right)\left(1+\sum\limits_{i=2}^{N} \tilde{\lambda}_{i}(\delta) / \mu\right)}}{\delta \hat{c}_{1}+(1-\delta) c_{L}}, \nonumber \\
 &\frac{M_{H}^{\prime}-\sqrt{M_{H}^{\prime 2}-\left(\delta \hat{c}_{1}+(1-\delta) c_{H}\right)\left(1+\sum\limits_{i=2}^{N} \tilde{\lambda}_{i}(\delta) / \mu\right)}}{\delta \hat{c}_{1}+(1-\delta) c_{H}} \Bigg\}, \nonumber
 \end{align}
 \begin{align} 
 &\tilde{\lambda}_{i}(\delta)-\frac{M_{i}^{\prime}-\sqrt{M_{i}^{\prime 2}-c_{i}\left(1+\big(\sum\limits_{j = 1}^N \tilde{\lambda}_{j}(\delta)-\tilde{\lambda}_{i}(\delta)\big) / \mu\right)}}{c_{i}}=0, \nonumber
        \end{align}
    }
    \vspace{-30pt}
 \end{table}
where 
\small{\begin{align}
\begin{aligned} &M_{L}^{\prime} =\delta \sqrt{1+\sum\limits_{i=2}^{N} \lambda_{i}^{*} / \mu}\left(p_{H} \sqrt{c_{H}}+\left(1-p_{H}\right) \sqrt{c_{L}}\right)+(1-\delta) \sqrt{\left(1+\sum\limits_{i=2}^{N} \tilde{\lambda}_{i}(\delta) / \mu\right) c_{L}}, \\ 
&M_{H}^{\prime} =\delta \sqrt{1+\sum\limits_{i=2}^{N} \lambda_{i}^{*} / \mu}\left(p_{H} \sqrt{c_{H}}+\left(1-p_{H}\right) \sqrt{c_{L}}\right)+(1-\delta) \sqrt{\left(1+\sum\limits_{i=2}^{N} \tilde{\lambda}_{i}(\delta) / \mu\right) c_{H}}, \\ 
&M_{i}^{\prime} =\sqrt{c_{i}}\left(\delta \sqrt{1+(p_{H} \lambda_{1}^{*}\left(c_{H}\right)+\left(1-p_{H}\right) \lambda_{1}^{*}\left(c_{L}\right)+\sum\limits_{j = 2}^{N} \lambda_{j}^{*}-\lambda_i^* ) / \mu}+(1-\delta) \sqrt{1+\big(\sum\limits_{j = 1}^{N} \tilde{\lambda}_{j}(\delta) - \tilde{\lambda}_{i}(\delta)\big) / \mu}\right). \end{aligned} \nonumber
\end{align}}
\end{proposition}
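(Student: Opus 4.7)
The plan is to mirror the proof of Proposition III.4, adapted to the incomplete-information setting where platform 1 commits to a single cooperation rate $\tilde{\lambda}_1(\delta)$ independent of the realized $c_1$, so that deviation is deterred in both cost states simultaneously. Because $\delta < \hat{\delta}^{th}_N$, no platform is willing to adhere to the approximate social optimum $(\hat{\lambda}_1, \cdots, \hat{\lambda}_N)$ from (20)-(21), so every no-deviation inequality must bind at the designed profile. I would obtain the cooperation profile by setting up an indifference equation for each platform $i \in \{2, \cdots, N\}$ and a pair of indifference equations (one per cost realization) for platform 1, then selecting the roots that sustain cooperation with the lowest social cost.

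For platform $i \in \{2, \cdots, N\}$, I would write the indifference condition between cooperating forever at $\tilde{\lambda}_i(\delta)$ and best-deviating once before being punished by the Bayesian equilibrium in (11)-(13). The one-shot best response against opponents' actions, derived from the first-order condition of (6) with $\lambda_1$ replaced deterministically by $\tilde{\lambda}_1(\delta)$, is $\sqrt{(1+(\sum_{j\neq i}\tilde{\lambda}_j(\delta))/\mu)/c_i}$. Substituting the explicit form of $\pi_i$ from (6) and simplifying yields a quadratic in $\tilde{\lambda}_i(\delta)$ whose linear coefficient is $-2M'_i$ and whose constant term is $(1+(\sum_{j\neq i}\tilde{\lambda}_j(\delta))/\mu)/c_i$; the smaller root, obtained by the minus-sign choice, corresponds to lower social cost by convexity of $\pi_i$ about its minimizer and is therefore selected, matching the stated expression. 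For platform 1, I would derive two quadratics, one per realized cost $c_L$ or $c_H$, by observing that the current deviating round incurs cost at the realized type while the discounted punishment averages over the prior with weight $\delta \hat{c}_1$; this yields leading coefficients $\delta \hat{c}_1 + (1-\delta) c_L$ and $\delta \hat{c}_1 + (1-\delta) c_H$ with linear coefficients $-2M'_L$ and $-2M'_H$. The feasible cooperation region is the intersection of the two intervals bounded by these quadratic roots, so the minimum rate that deters deviation in both states is the max of the two smaller roots, exactly the stated formula.

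The main obstacle will be verifying real nonnegativity of the three discriminants and uniqueness of the joint fixed point across the coupled $N$ equations. For existence, I would rewrite each discriminant as a perfect-square-minus-cross-term expression and bound it below using $\lambda_i^* \geq \tilde{\lambda}_i(\delta)$, a consequence of extending Lemma IV.2 together with the observation that the Bayesian equilibrium is the most aggressive profile, so that the square-root terms inside $M'_L, M'_H, M'_i$ dominate as needed. For uniqueness, I would define the operator $T$ whose $i$-th component returns the smaller-root rate (or the max of the two for $i = 1$) as a function of $\tilde{\lambda}_{-i}(\delta)$, verify that each $T_i$ is continuous and monotonically non-increasing on the compact box $[\lambda_i^{**}, \lambda_i^*]$ (where the social optimizers serve as the lower bound and the Bayesian equilibrium rates as the upper bound), and invoke a monotone fixed-point argument analogous to the one underlying Proposition III.4 to conclude that a unique joint solution exists in this box.
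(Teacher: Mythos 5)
Your derivation of the cooperation profile itself tracks the paper's argument closely: you correctly set up one indifference (no-deviation) condition per incumbent platform and a pair of conditions for platform 1 (one per realized cost), you correctly identify that the cooperation path prices platform 1's sampling at $\delta\hat{c}_1+(1-\delta)c_L$ (resp.\ $c_H$) because the deviating round is type-specific while the continuation averages over the prior, you obtain the quadratics with linear coefficients $-2M'_L,-2M'_H,-2M'_i$, and you take the smaller roots and the max of the two smaller roots for platform 1 as the intersection of the two feasible intervals. That is exactly what the paper does in Appendix I.

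The gap is in your uniqueness argument. You claim each component map $T_i$ (sending $\tilde{\lambda}_{-i}(\delta)$ to the selected root) is monotonically non-increasing and then invoke a monotone fixed-point argument. The direction is wrong: these maps are strictly \emph{increasing} in the other platforms' rates (both $M'_i$ and the constant term of the quadratic grow with $\sum_{j\neq i}\tilde{\lambda}_j(\delta)$; the paper explicitly establishes that $\tilde{\lambda}_1(\delta)(\tilde{\lambda}_2(\delta))$ and $\tilde{\lambda}_2(\delta)(\tilde{\lambda}_1(\delta))$ are convex and strictly increasing). For increasing best-response maps, a Tarski-type monotone fixed-point argument gives existence of a fixed point in the box but not uniqueness, so your argument does not close. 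The paper instead proves uniqueness for $N=2$ by forming $g(\tilde{\lambda}_1(\delta))=\tilde{\lambda}_2(\delta)(\tilde{\lambda}_1(\delta))-\tilde{\lambda}_2(\delta)_1(\tilde{\lambda}_1(\delta))$ (the second term being the inverse of the first map), using convexity of $g$ together with the sign conditions $g(\hat{\lambda}_1)>0$, $g(\lambda_1^*(c_j))<0$, $g'(\lambda_1^*(c_j))>0$ to pin down a single crossing, and then extends to general $N$ by induction, substituting one coordinate map into the others while preserving convexity and monotonicity. Also note the paper's bounding box uses the approximate optimizers $\hat{\lambda}_i$ from (20)--(21) as lower endpoints, not the exact social optimizers $\lambda_i^{**}$; in the incomplete-information construction it is $\hat{\lambda}_i$ that the profile approaches as $\delta$ grows, so that is the correct anchor.
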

The proof is given in Appendix I.
Figure 4 shows an illustrative example of $N=2$ platforms, where the approximate cooperation profile $\big(\tilde{\lambda}_1(\delta), \tilde{\lambda}_2(\delta)\big)$ in Propositions \Romannum{4}.4, \Romannum{4}.6, \Romannum{4}.7 under our trigger mechanism of non-monetary punishment changes with discount factor $\delta$ in all the three $\delta$ regimes. Here the mean cost $\hat{c}_1$ of platform 1 is less than that of platform 2 with $\hat{\delta}^{th}_1(c_L) = \max\{\hat{\delta}^{th}_1(c_L), \hat{\delta}^{th}_1(c_H)\}$ and $\hat{\delta}^{th}_2<\hat{\delta}^{th}_1(c_L)$. In small $\delta$ regime, both $(\tilde{\lambda}_1(\delta),\tilde{\lambda}_2(\delta))$ decrease with $\delta$ until $\hat{\delta}^{th}_2=0.3$ with $\tilde{\lambda}_2(\delta)= \hat{\lambda}_2$ ideally. In medium $\delta$ regime, only $\tilde{\lambda}_1(\delta)$ decreases with $\delta$ till $\hat{\delta}^{th}_1(c_L)$ = 0.7. 
Finally, in large $\delta$ regime, the profile eventually equals $(\hat{\lambda}_1, \hat{\lambda}_2)$. As $\hat{c}_1=19$ is slightly smaller than $ c_2=20$ here, the final profile $\hat{\lambda}_1$ is close to $\hat{\lambda}_2$. 
  \begin{figure}
 \centering
 \includegraphics[height=2.5in, width=3in]{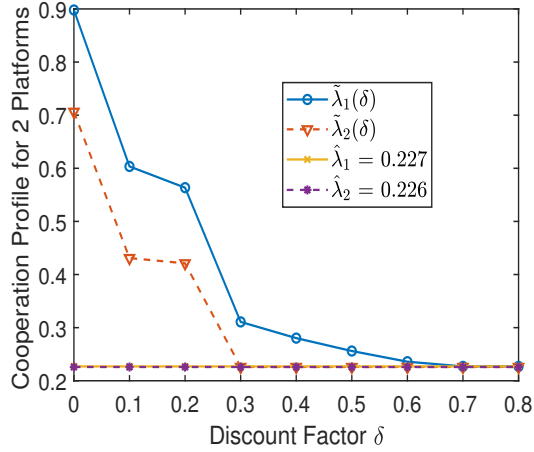}
\caption{Cooperation profile $(\tilde{\lambda}_1(\delta), \tilde{\lambda}_2(\delta))$ of the trigger mechanism versus discount factor $\delta$, as compared to approximate social optimizers $(\hat{\lambda}_1, \hat{\lambda}_2)$ in (20)-(21). Here we have $N =2$, $\hat{\delta}^{th}_2=0.3$ and $\hat{\delta}^{th}_1(c_L)=0.7$ under parameters $c_H = 100$, $c_L = 10$, $p_H = 0.1$, $\hat{c}_1=19$,  $c_2 = 20$ and $\mu = 0.1$ with $\hat{c}_1 <c_2$. }
\vspace{-30pt}
 \end{figure}

  \begin{figure*}[!t]
\centering
\subfloat[][Social cost ratio between equilibrium and optimum]{\includegraphics[width=3in]{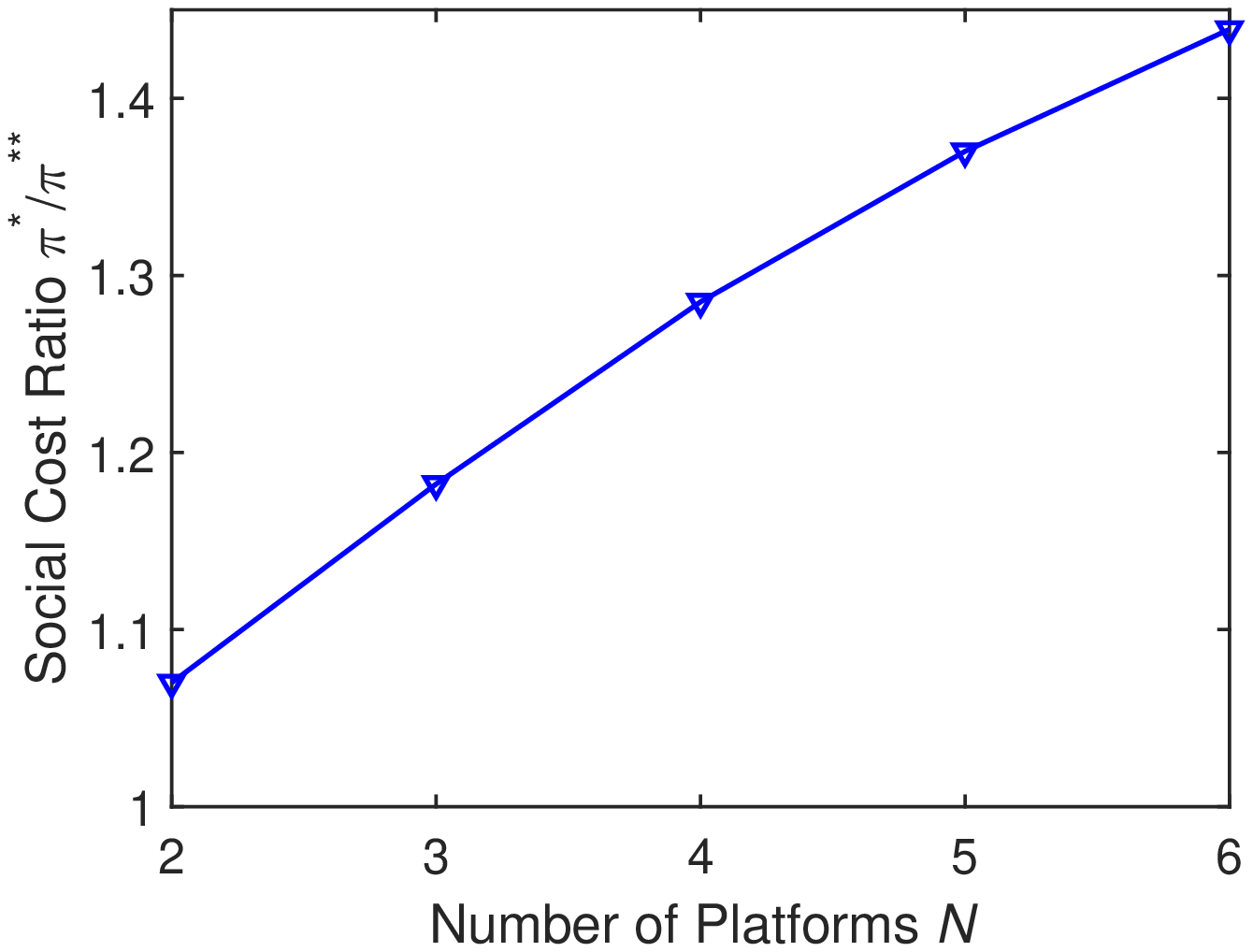}%
\label{}}
\hfil
\subfloat[][Social cost ratio between approximate mechanism and optimum]{\includegraphics[width=3in]{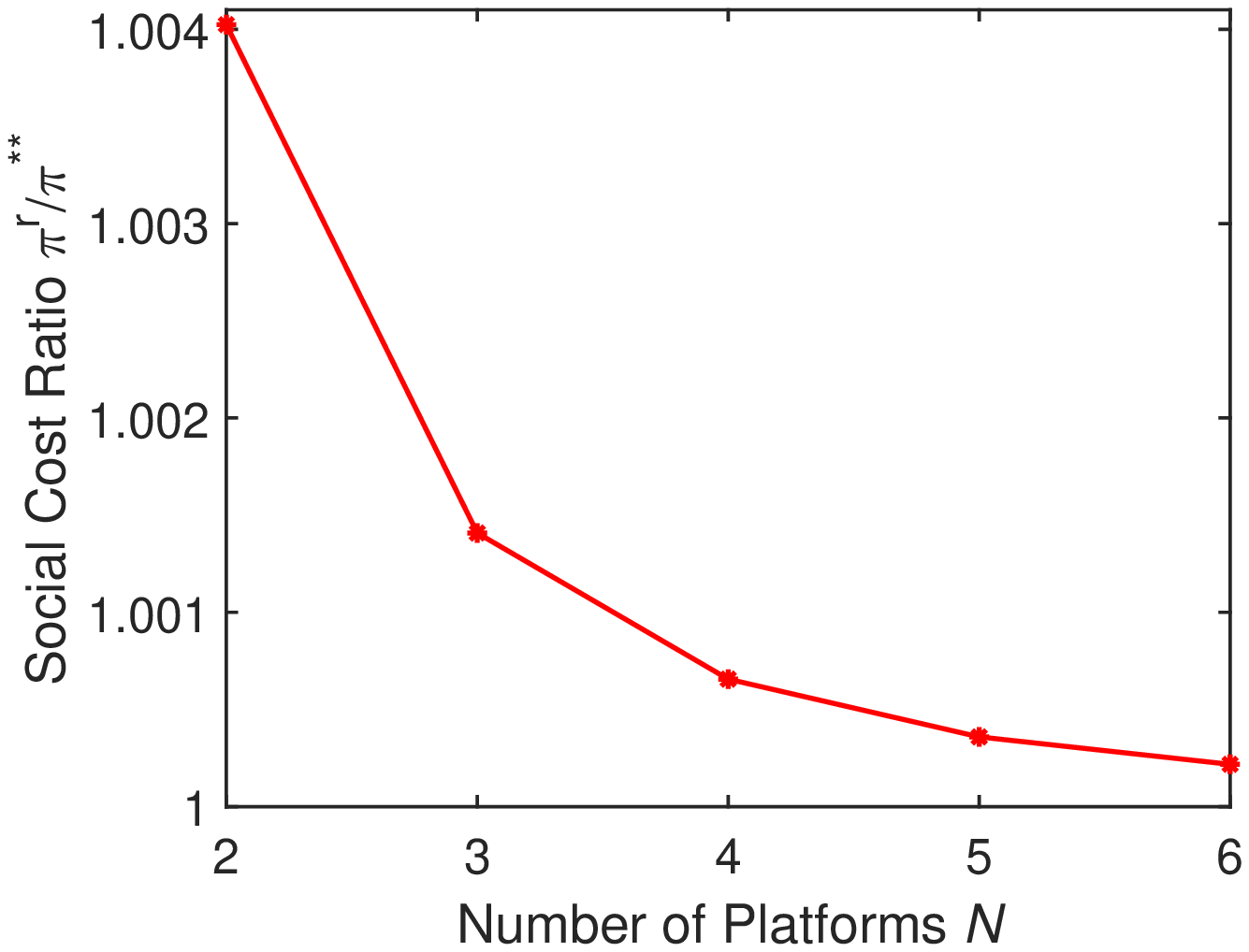}%
\label{}}
\caption{Empirical performance comparison between competition equilibrium, social optimum in Section \Romannum{2}.B, and our approximate mechanism here. Here, we set $c_H = 1.5$, $c_L = 0.5$, $p_H = 0.5$ for platform 1, and symmetric costs $c_2 = \cdots = c_6 = p_H c_H + (1-p_H)c_L$ for the other $N-1$ platforms, and $\mu = 1$.}
\label{fig_sim}
\vspace{-20pt}
\end{figure*}
Figure 5 considers an arbitrary number $N$ of platforms and empirically shows the social cost ratio between the approximate trigger mechanism (in large $\delta$ regime) and the social optimum in (8)-(10), $\pi^r/\pi^{**}$, by comparing to the social cost ratio between competition equilibrium in (11)-(13) and the social optimum, $\pi^*/\pi^{**}$ without any mechanism design. As $N$ increases, platforms compete more intensively to over-sample, thus the ratio $\pi^*/\pi^{**}$ increases with greater efficiency loss. However, our approximate mechanism only has mild efficiency loss. Given only platform 1 with hidden information, relatively we face less information uncertainty as the total platform number $N$ increases, and the approximate cooperation profile better approaches the social optimizers. Hence, ratio $\pi^r/\pi^{**}$ decreases. This empricial result is consistent with Proposition \Romannum{4}.5 in the worst case. Similar to Figure 3 in Section \Romannum{3}, with asymmetric unit sampling costs, our simulations show that social cost ratio between approximate mechanism and optimum under incomplete information also decreases with $\delta$ in small and medium $\delta$ regimes, and keeps constant in large $\delta$ regime.




\section{Conclusion}

In this paper we study the competition among online content platforms in AoI and bandwidth sharing, and they concern the freshness of their own updates on real-time information instead of the others'. When all the platforms know each other's sampling cost, 
we show that all the platforms over-sample and cause huge efficiency loss. To remedy the loss, we propose a trigger mechanism of non-monetary punishment in the repeated game to approach the social optimum. We also study the more challenging case where some newly joined platform can hide its cost information from the other incumbent platforms in the Bayesian game. Perhaps surprisingly, we show that this platform may get hurt by knowing more information. Accordingly, we  redesign the trigger-and-punishment mechanism to approach the social optimum by ensuring no cheating from the platform with more information. Extensive simulations show that the mechanism's performance improves as we have more incumbent platforms with known cost information.



%
\appendices

\section{Proof of Proposition \Romannum{2}.1}
\subsection{Proof of Social Optimizers and the Uniqueness}
To show (2) are solutions as social optimizers, note that $\pi(\lambda_1, \cdots, \lambda_N)$ is concave with each $\lambda_i$ due to $\frac{\partial^2 \pi(\lambda_1, \cdots, \lambda_N)}{\partial \lambda_i^2}\leq 0$, where $i\in\{1, \cdots, N\}$. By using the first-order condition, we have (2) as the solutions.

     Then we want to prove (2) has unique solutions with induction method. When $N = 2$,  (2) can be rewritten as 
     
      \begin{align}
    \lambda_1(\lambda_2) &= \sqrt{\frac{1+\frac{\lambda_2}{\mu}}{\frac{1}{\lambda_2\mu}+c_1}},  \\
    \lambda_2(\lambda_1) &= \sqrt{\frac{1+\frac{\lambda_1}{\mu}}{\frac{1}{\lambda_1\mu}+c_2}}. 
\end{align}
To show $\lambda_1(\lambda_2)$ in (25) is concave and strictly increasing in $\lambda_2$, we take the first and second derivatives of $\lambda_1(\lambda_2)$ as 
\begin{align}
    \lambda_1'(\lambda_2) &= \frac{1}{2\mu}\sqrt{\frac{\frac{1}{\lambda_2\mu}+c_1}{1+\frac{\lambda_2}{\mu}}}\cdot\frac{c_1+\frac{2}{\lambda_2\mu}+ \frac{1}{\lambda_2^2}}{(\frac{1}{\lambda_2\mu}+c_1)^2}, \nonumber \\
    \lambda_1''(\lambda_2) &= -\frac{\frac{5c_1+1}{\lambda_2^2} + \frac{2c_1+2}{\lambda_2\mu} + c_1 + \frac{1}{\lambda_2^4} + \frac{4c_1\mu}{\lambda_2^3}}{4\mu^2(\frac{1}{\lambda_2\mu}+c_1)^{\frac{5}{2}}(1+\frac{\lambda_2}{\mu})^{\frac{3}{2}}}. \nonumber
\end{align}
Since $\lambda_1''(\lambda_2) < 0$, $\lambda_1'(\lambda_2 = 0) \to \infty$  and $\lambda_1'(\lambda_2 \to \infty) = 0$, we know $\lambda_1'(\lambda_2) > 0$ and $\lambda_1(\lambda_2)$ is concave and strictly increasing in $\lambda_2$. Similarly we can show that $\lambda_2(\lambda_1)$ in (26) is concave and strictly increasing in $\lambda_1$. By substituting $\lambda_1(\lambda_2)$ in (25) into $\lambda_2(\lambda_1)$ in (26), we simplify (25) and (26) as the following equation:
\begin{align}
    \lambda_2- \sqrt{\frac{1+\frac{\lambda_1(\lambda_2)}{\mu}}{\frac{1}{\lambda_1(\lambda_2)\mu}+c_2}} = 0.
\end{align}
Since  $\sqrt{\frac{1+\frac{\lambda_1(\lambda_2)}{\mu}}{\frac{1}{\lambda_1(\lambda_2)\mu}+c_2}}$ in (27) is concave and strictly increasing in $\lambda_1(\lambda_2)$ and  $\lambda_1(\lambda_2)$ in (27) is concave and strictly increasing in $\lambda_2$, we obtain that $\sqrt{\frac{1+\frac{\lambda_1(\lambda_2)}{\mu}}{\frac{1}{\lambda_1(\lambda_2)\mu}+c_2}}$ is concave and strictly increasing in $\lambda_2$. To show (27) has only one positive solution, denote $g(\lambda_2) = \lambda_2 - \sqrt{\frac{1+\frac{\lambda_1(\lambda_2)}{\mu}}{\frac{1}{\lambda_1(\lambda_2)\mu}+c_2}}$, where we know $g(\lambda_2)$ is convex in $\lambda_2$ because of concavity of  $\sqrt{\frac{1+\frac{\lambda_1(\lambda_2)}{\mu}}{\frac{1}{\lambda_1(\lambda_2)\mu}+c_2}}$. Therefore we know $g'(\lambda_2)$ increases with $\lambda_2$. Since  
\begin{align}
    g'(\lambda_2 = 0) &= 1 -  \lambda_2'(0) < 0, \nonumber \\
    g'(\lambda_2 \to \infty) &=  1 - \lambda_2'(\infty)  > 0, \nonumber
\end{align}
there exists unique $a > 0$ satisfying $g'(a) = 0$, then $g(\lambda_2)$ decreases in $(0, a]$ and increases in $(a, \infty)$. Since $g(0) = 0$ and $g(\infty) \to \infty$, there exists unique $\lambda_2^{**} > 0$ satisfying $g(\lambda_2^{**}) = 0$, thus (27) has unique positive solution. We plot $g(\lambda_2)$ in Figure 6. Similarly (25) has unique positive solution $\lambda_1^{**}$. The social optimizers in (2) are unique.

Suppose that when $N = M-1$, (2) has unique solutions. With induction method, we need to prove when $N = M$, (2) has unique solutions. Similar to (25)-(26), we can rewrite $\lambda_i$ as a function of $\lambda_j$ and the $\lambda_i$ is concave and strictly increasing in each $\lambda_j$, where $i\in\{1, \cdots, M\}$ and $j \ne i$. If we introduce $\lambda_M$ as in (25)-(26) into other $\lambda_i$ as in (25)-(26), where $i\in\{1, \cdots, M-1\}$, we have $\lambda_i$ is still concave and strictly increasing in $\lambda_j$, where $j \in \{1, \cdots, M-1\}$ and $j \ne i$. Since we know when $N = M-1$, (2) has unique solutions. Then after introducing $\lambda_M$ as in (25)-(26) into other $\lambda_i$ as in (25)-(26), the new $M-1$ equations also have unique solutions. Then we prove that when $N = M$, (2) has unique solutions.
\begin{figure}
\centering
 \includegraphics[height=2.5in, width=3in]{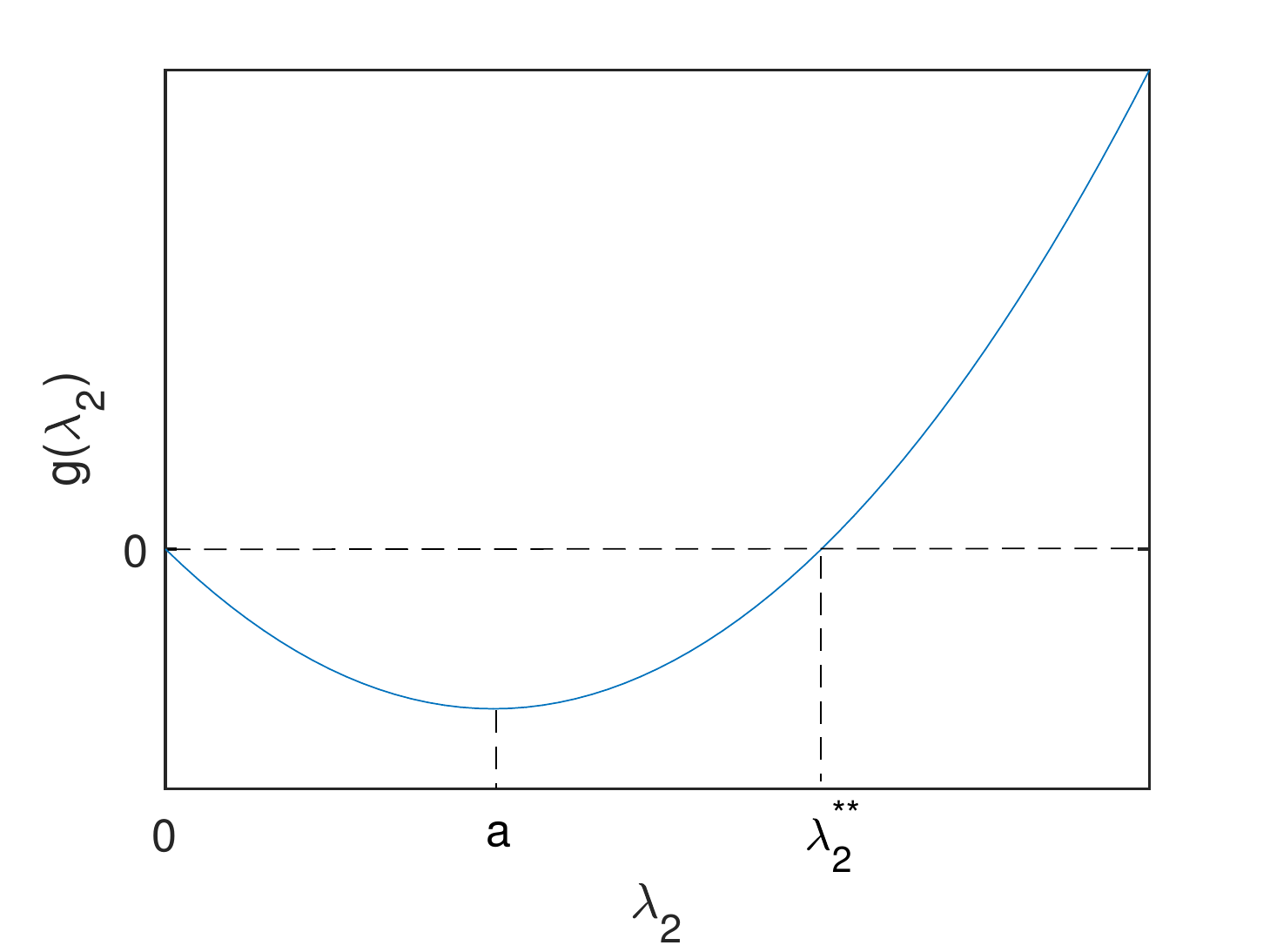}
\caption{$g(\lambda_2)$ versus $\lambda_2$ in Appendix A.A.}
 \end{figure}
\subsection{Proof of Competition Equilibrium and the Uniqueness}
To show (3) are solutions as equilibrium, note that each $\pi_i(\lambda_i, \lambda_{-i})$ is concave with each $\lambda_i$ due to $\frac{\partial^2 \pi_i(\lambda_i, \lambda_{-i})}{\partial \lambda_i^2}\leq 0$, where $i\in\{1, \cdots, N\}$. By using the first-order condition, we have (3) as the solutions.

 Then we want to prove (2) has unique solutions with induction method. When $N = 2$, (3) can be rewritten as
 
\begin{align}
    \lambda_1(\lambda_2) = \sqrt{\frac{1+\lambda_2/\mu}{c_1}},  \\
    \lambda_2(\lambda_1) = \sqrt{\frac{1+\lambda_1/\mu}{c_2}},
\end{align}
which are equivalent to the following equation:
\begin{align}
    \sqrt{\frac{1+\lambda_1/\mu}{c_2}}- \mu(c_1\lambda_1^2 - 1) = 0.
\end{align}
 Denote $f(\lambda_1) = \sqrt{\frac{1+\lambda_1/\mu}{c_2}}- \mu(c_1\lambda_1^2 - 1) $. To show $f(\lambda_1)$ only has one positive root, we check the first-order and second-order derivatives of $f(\lambda_1)$ as
\begin{align}
    f'(\lambda_1) &= \frac{1}{2 \sqrt{c_2} \mu}\sqrt{\frac{1}{1+\lambda_1/\mu}} - 2c_1 \mu\lambda_1, \nonumber \\
    f''(\lambda_1) &= -\frac{1}{4 \sqrt{c_2} \mu^2}\sqrt{\frac{1}{(1+\lambda_1/\mu)^{\frac{3}{2}}}} - 2c_1 \mu. \nonumber
\end{align}
Since $f''(\lambda_1) < 0$, $f'(\lambda_1)$ decreases with $\lambda_1$. Additionally, since $f'(0) > 0$ and $f'(\infty) < 0$, then $f'(\lambda_1)$ has exactly one positive root, denoted as $f'(b) = 0$. Then $f(\lambda_1)$ is increasing in $(0, b]$ and decreasing in $(b, \infty)$. Also since
\begin{align}
    f(0) &> 0, \nonumber \\
    f(\infty) &< 0, \nonumber
\end{align}
then $f(\lambda_1)$ has unique positive root $\lambda_1^*$ satisfying (30). We plot $f(\lambda_1)$ in Figure 7. Therefore $\lambda_2^*$ is unique according to (29) and (3) has unique solutions.
\begin{figure}
\centering
 \includegraphics[height=2.5in, width=3in]{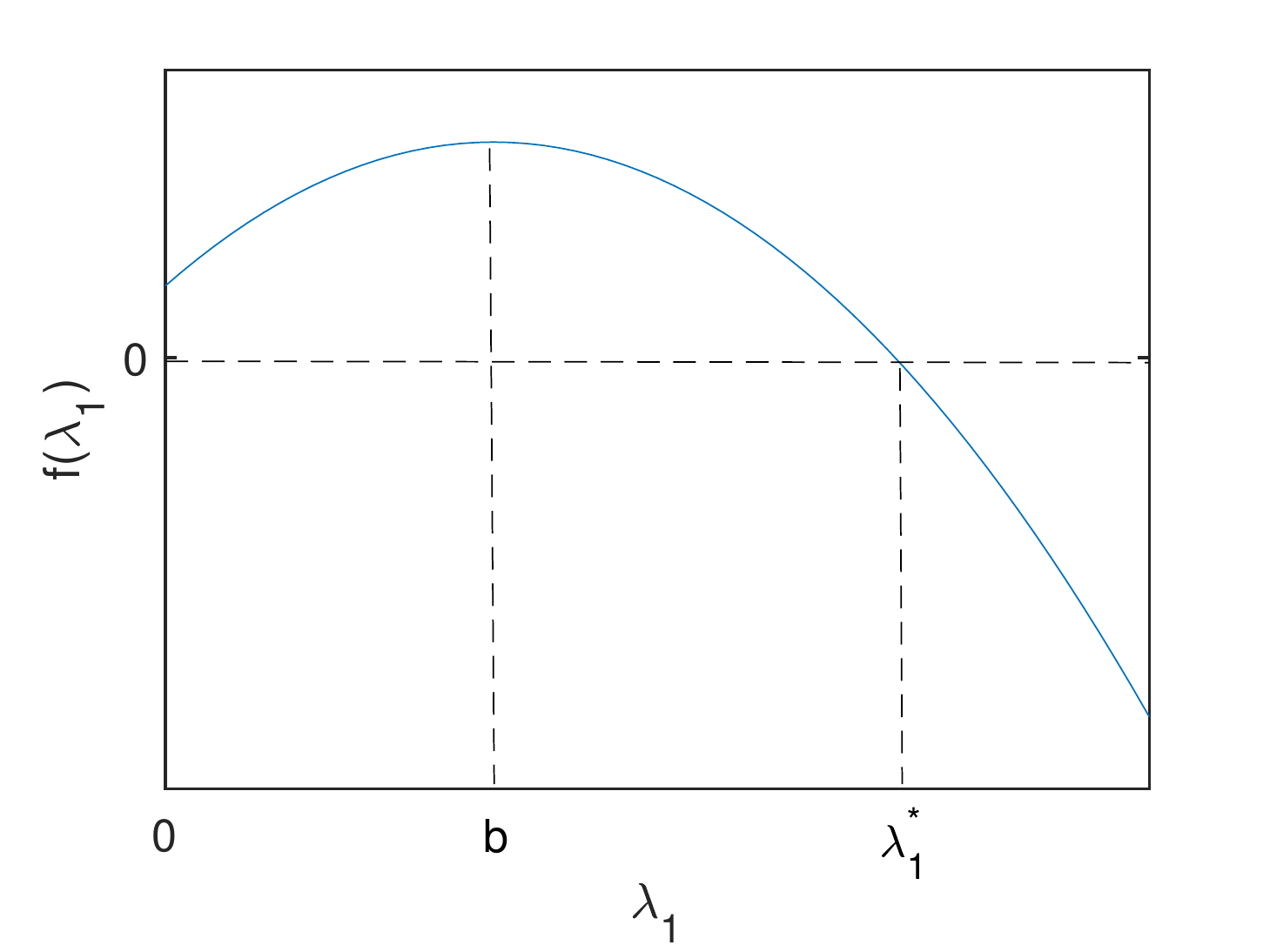}
\caption{$f(\lambda_1)$ versus $\lambda_1$ in Appendix A.B.}
 \end{figure}
 
 Suppose that when $N = M-1$, (3) has unique solutions. With induction method, we need to prove when $N = M$, (3) has unique solutions. Similar to (28)-(29), we can rewrite $\lambda_i$ as a function of $\lambda_j$ and the $\lambda_i$ is concave and strictly increasing in each $\lambda_j$, where $i\in\{1, \cdots, M\}$ and $j \ne i$. If we introduce $\lambda_M$ as in (28)-(29) into other $\lambda_i$ as in (28)-(29), where $i\in\{1, \cdots, M-1\}$, we have $\lambda_i$ is still concave and strictly increasing in $\lambda_j$, where $j \in \{1, \cdots, M-1\}$ and $j \ne i$. Since we know when $N = M-1$, (2) has unique solutions. Then after introducing $\lambda_M$ as in (28)-(29) into other $\lambda_i$ as in (28)-(29), the new $M-1$ equations also have unique solutions. Then we prove that when $N = M$, (3) has unique solutions.

Since $\lambda_1(\lambda_2)$ in (25) has an additional item $\frac{1}{\lambda_2\mu}$ in the denominator than $\lambda_1(\lambda_2)$ in (28), and $\lambda_2(\lambda_1)$ in (26) has an additional item $\frac{1}{\lambda_1\mu}$ in the denominator than $\lambda_2(\lambda_1)$ in (29), thus solutions to  (25) and (26), $(\lambda_1^{**}, \lambda_2^{**})$, are smaller than solutions to (28) and (29),  $(\lambda_1^{*}, \lambda_2^{*})$. Then we have $\lambda_i^{*} \geq \lambda_i^{**}$, $i = 1, 2$. For general $N \geq 2$, each $\lambda_i$ as in (25) has an additional item $\frac{1}{\mu}\big(\sum\limits_{j = 1}^N\frac{1}{\lambda_j}-\frac{1}{\lambda_i}\big)$ in the denominator than $\lambda_i$ as in (28), thus solutions to (25) and (26) are smaller than solutions to (28) and (29) when $N \geq 2$, we have $\lambda_i^* \geq \lambda_i^{**}$, $i\in\{1, \cdots, N\}$.

\section{Proof of Proposition \Romannum{2}.3}
\subsection{Proof of Social Optimizers and the Uniqueness}
To show (8)-(10) are solutions as social optimizers, note that $\pi((\lambda_1(c_H), \lambda_1(c_L)), \cdots, \lambda_N)$ is concave with each $\lambda_i$ due to $\frac{\partial^2 \pi((\lambda_1(c_H), \lambda_1(c_L)), \cdots, \lambda_N)}{\partial \lambda_i^2}\leq 0$, where $i\in\{1, \cdots, N\}$. By using the first-order condition, we have (8)-(10) as the solutions. Note that (8)-(10) have the same structure as (2), we can prove uniqueness of (8)-(10) by following proof of (2) in Appendix A.A. We thus skip details here.

\subsection{Proof of Competition Equilibrium and the Uniqueness}
To show (11)-(13) are solutions as equilibrium, note that each $\pi_i(\lambda_i, \lambda_{-i})$ is concave with each $\lambda_i$ due to $\frac{\partial^2 \pi_i(\lambda_i, \lambda_{-i})}{\partial \lambda_i^2}\leq 0$, where $i\in\{1, \cdots, N\}$. By using the first-order condition, we have (11)-(13) as the solutions. Note that (11)-(13) have the same structure as (3), we can prove uniqueness of (11)-(13) by following proof of (3) in Appendix A.B. We thus skip details here.

\section{Proof of Proposition \Romannum{2}.5}

If all the platforms have complete information of platform 1's sampling cost, they will play a one-shot game as in (1),
we present $((\bar{\lambda}_1(c_H),\bar{\lambda}_1(c_L))$, $(\bar{\lambda}_2(c_H),\bar{\lambda}_2(c_L)), \cdots, (\bar{\lambda}_N(c_H),\bar{\lambda}_N(c_L)))$, the Nash equilibrium, as unqiue solutions to
\vspace{-10pt}
\begin{table}[H]
\small{
   \begin{align}
    \bar{\lambda}_1(c_H) &= \sqrt{\frac{1+ \sum\limits_{j =2}^N \bar{\lambda}_j(c_H)/\mu}{c_H}}, \bar{\lambda}_i(c_H) = \sqrt{\frac{1+\big(\sum\limits_{j = 1}^N\bar{\lambda}_j(c_H)-\bar{\lambda}_i(c_H)\big)/\mu}{c_i}},\;\; i \in\{2, \cdots, N\}, \\
 \bar{\lambda}_1(c_L) &= \sqrt{\frac{1+ \sum\limits_{j =2}^N \bar{\lambda}_j(c_L)/\mu}{c_L}}, \bar{\lambda}_i(c_L) = \sqrt{\frac{1+\big(\sum\limits_{j = 1}^N\bar{\lambda}_j(c_L)-\bar{\lambda}_i(c_L)\big)/\mu}{c_i}}, \;\; i \in\{2, \cdots, N\}.
\end{align}}
\vspace{-30pt}
\end{table}
In the following, we first prove platform 1 obtains larger one-shot cost under incomplete information than that under complete information when $c_1=c_H$, then prove platform 1 can obtain larger one-shot average cost under incomplete information.

Platform 1's one-shot cost when $c_1 = c_H$ under incomplete information is $\pi_1(\lambda_1^*(c_H), \sum\limits_{j = 2}^N\lambda_j^*)$, where $\lambda_1^*(c_H)$ and $\lambda_j^*$ are given in (11)-(13), and its one-shot cost when $c_1 = c_H$ under complete information is $\pi_1(\bar{\lambda}_1(c_H), \sum\limits_{j=2}^N\bar{\lambda}_j(c_H))$ using (31).
To prove platform 1 obtains larger one-shot cost under incomplete information when $c_1 = c_H$, it's equivalent to prove $\pi_1(\lambda_1^*(c_H), \sum\limits_{j = 2}^N\lambda_j^*) \geq \pi_1(\bar{\lambda}_1(c_H), \sum\limits_{j=2}^N\bar{\lambda}_j(c_H))$. We want to prove the equivalent statement via introducing an intermediate term:
\begin{align}
    \pi_1(\lambda_1^*(c_H), \sum\limits_{j = 2}^N\lambda_j^*) \geq \pi_1(\lambda_1^*(c_H), \sum\limits_{j=2}^N\bar{\lambda}_j(c_H)) \geq \pi_1(\bar{\lambda}_1(c_H), \sum\limits_{j=2}^N\bar{\lambda}_j(c_H)). 
\end{align}
To prove (33), we first prove $\pi_1(\lambda_1^*(c_H), \sum\limits_{j = 2}^N\lambda_j^*) \geq \pi_1(\lambda_1^*(c_H), \sum\limits_{j=2}^N\bar{\lambda}_j(c_H))$. By comparing (11)-(13) and (31),  
we have $ \sum\limits_{j = 2}^N\lambda_j^* \geq \sum\limits_{j=2}^N\bar{\lambda}_j(c_H)$. Since $\pi_1(\lambda_1, \sum\limits_{j =2}^N\lambda_j)$ in (1) increases with $\sum\limits_{j =2}^N\lambda_j$, we have $\pi_1(\lambda_1^*(c_H), \sum\limits_{j = 2}^N\lambda_j^*) \geq \pi_1(\lambda_1^*(c_H), \sum\limits_{j=2}^N\bar{\lambda}_j(c_H))$.

We then prove $\pi_1(\lambda_1^*(c_H), \sum\limits_{j=2}^N\bar{\lambda}_j(c_H)) \geq\pi_1(\bar{\lambda}_1(c_H), \sum\limits_{j=2}^N\bar{\lambda}_j(c_H))$ in (33). According to (31), the best response $\lambda_1 = \bar{\lambda}_1(c_H)$ is the unique solution to minimize platform 1's one-shot cost $\pi_1(\lambda_1, \sum\limits_{j =2}^N\bar{\lambda}_j(c_H))$, thus $\pi_1(\lambda_1^*(c_H), \sum\limits_{j=2}^N\bar{\lambda}_j(c_H)) \geq\pi_1(\bar{\lambda}_1(c_H), \sum\limits_{j=2}^N\bar{\lambda}_j(c_H))$. 


Similar to (33), we can prove 
\begin{align}
    \pi_1(\lambda_1^*(c_L), \sum\limits_{j = 2}^N\lambda_j^*)  \leq \pi_1(\bar{\lambda}_1(c_L), \sum\limits_{j = 2}^N\lambda_j^*)
  \leq \pi_1(\bar{\lambda}_1(c_L), \sum\limits_{j=2}^N\bar{\lambda}_j(c_L)).
\end{align}

Finally, we are ready to prove that platform 1 can obtain larger one-shot average cost under incomplete information than that under complete information. This time-average cost of platform 1 under complete information is
\begin{align}
   \pi_1^C = p_H\pi_1(\bar{\lambda}_1(c_H), \sum\limits_{j=2}^N\bar{\lambda}_j(c_H)) + (1-p_H)\pi_1(\bar{\lambda}_1(c_L), \sum\limits_{j=2}^N\bar{\lambda}_j(c_L)). \nonumber
\end{align}
Its average cost under information advantage changes to 

\begin{align}
    \pi_1^I = p_H\pi_1(\lambda_1^*(c_H), \sum\limits_{j = 2}^N\lambda_j^*) + (1-p_H)\pi_1(\lambda_1^*(c_L), \sum\limits_{j = 2}^N\lambda_j^*), \nonumber
\end{align}
where $\lambda_1^*(c_H)$, $\lambda_1^*(c_L)$ and  $\lambda_j^*$ are given in (11)-(13). If platform 1 obtains larger cost under incomplete information, or $\pi_1^C \leq \pi_1^I$, which can be simplified as
\vspace{-10pt}
\begin{table}[H]
   \small{
   \begin{align}
p_H&\geq \frac{\pi_1(\bar{\lambda}_1(c_L),  \sum\limits_{j=2}^N\bar{\lambda}_j(c_L)) -  \pi_1(\lambda_1^*(c_L), \sum\limits_{j = 2}^N\lambda_j^*)}{\pi_1(\bar{\lambda}_1(c_L), \sum\limits_{j=2}^N\bar{\lambda}_j(c_L)) -  \pi_1(\lambda_1^*(c_L), \sum\limits_{j = 2}^N\lambda_j^*) + \pi_1(\lambda_1^*(c_H), \sum\limits_{j = 2}^N\lambda_j^*) - \pi_1(\bar{\lambda}_1(c_H),  \sum\limits_{j=2}^N\bar{\lambda}_j(c_H))}, \nonumber 
\end{align}
   }
\vspace{-20pt}
\end{table}
\noindent in which the right-hand side is positive and less than 1 because of (33) and (34). 
\section{Proof of Proposition \Romannum{3}.3}
To show the equation in Proposition \Romannum{3}.3 are the solutions to the cooperation profile, we solve and rewrite (17) as
\begin{align}
    \tilde{\lambda}_i^2(\delta)/c_i - 2\sqrt{\delta\lambda_i^* + (1-\delta)\sqrt{\frac{1+\frac{\lambda_{-i}(\delta)}{\mu}}{c_i}}}\tilde{\lambda}_i(\delta) + 1+\frac{\lambda_{-i}(\delta)}{\mu} = 0, \nonumber
\end{align}
where $\lambda_{-i}(\delta) = \sum\limits_{k =1 }^{j}\tilde{\lambda}_k(\delta) - \tilde{\lambda}_i(\delta) + \sum\limits_{k = j+1}^N\lambda_k^{**}$. We then choose to take the smaller root with smaller social cost, which is consistent with equation in Proposition \Romannum{3}.3.

Notice that the only different between the equation in Proposition \Romannum{3}.3. and (19) is that in latter, $\tilde{\lambda}_k(\delta) = \lambda_k^{**}$ are constant for $k \in \{j+1, \cdots, N\}$, while in (19), such $\tilde{\lambda}_k(\delta)$s are still variable to determine. Then we can prove unique solution of the equation in Proposition \Romannum{3}.3. by proving that of (19), which is given in Appendix E.

We rewrite the equation in Proposition \Romannum{3}.3 as 
\begin{align}
    \tilde{\lambda}_i(\delta) = \delta\lambda_i^* + (1-\delta)\sqrt{\frac{1+\frac{\lambda_{-i}(\delta)}{\mu}}{c_i}} 
   -\sqrt{\bigg(\delta\lambda_i^* + (1-\delta)\sqrt{\frac{1+\frac{\lambda_{-i}(\delta)}{\mu}}{c_i}} \bigg)^2 - \frac{1+\frac{\lambda_{-i}(\delta)}{\mu}}{c_i}}.
\end{align}
$\tilde{\lambda}_i(\delta)$ in (35) decreases with $\delta$ due to $\frac{\partial \tilde{\lambda}_i(\delta)}{\partial \delta} < 0$, where $i \in\{1, \cdots, j\}$. When $\delta = 0$,  $\tilde{\lambda}_i(\delta) = \lambda_i^*$ in (35). When $\delta = \delta^{th}_i$,  $\tilde{\lambda}_i(\delta) = \lambda_i^{**}$ in (35). Thus we prove that $\lambda_i^{**} < \tilde{\lambda}_i(\delta) < \lambda_i^{*}$ and $\tilde{\lambda}_i(\delta)$ decreases with $\delta$.

\section{Proof for Proposition \Romannum{3}.4}
To show (19) are the solutions to the cooperation profile, we solve and rewrite (18) as
\begin{align}
    \tilde{\lambda}_i^2(\delta)/c_i - 2\sqrt{\delta\lambda_i^* + (1-\delta)\sqrt{\frac{1+\frac{\lambda_{-i}(\delta)}{\mu}}{c_i}}}\tilde{\lambda}_i(\delta) + 1+\frac{\lambda_{-i}(\delta)}{\mu} = 0, \nonumber
\end{align}
where $\lambda_{-i}(\delta) = \sum\limits_{k =1 }^{N}\tilde{\lambda}_k(\delta) - \tilde{\lambda}_i(\delta)$. We then choose to take the smaller root with smaller social cost, which is consistent with (19).

We want to show (19) have unique solutions with induction method. When $N =2$, we rewrite (18) as
\begin{align}
    \tilde{\lambda}_{1}^2(\delta) - 2\bigg(\delta\lambda_1^* + (1-\delta)\sqrt{\frac{1+\tilde{\lambda}_{2}(\delta)/\mu}{c_1}}\bigg)\tilde{\lambda}_{1}(\delta) + \frac{1+\tilde{\lambda}_{2}(\delta)/\mu}{c_1} &= 0, \\
    \tilde{\lambda}_{2}^2(\delta) - 2\bigg(\delta\lambda_2^* + (1-\delta)\sqrt{\frac{1+\tilde{\lambda}_{1}(\delta)/\mu}{c_2}}\bigg)\tilde{\lambda}_{2}(\delta) + \frac{1+\tilde{\lambda}_{1}(\delta)/\mu}{c_2} &= 0. 
\end{align}
Denote
\begin{align}
    f(\tilde{\lambda}_{1}(\delta)) = &\tilde{\lambda}_{1}^2(\delta) - 2\bigg(\delta\lambda_1^* + (1-\delta)\sqrt{\frac{1+\tilde{\lambda}_{2}(\delta)/\mu}{c_1}}\bigg)\tilde{\lambda}_{1}(\delta) + \frac{1+\tilde{\lambda}_{2}(\delta)/\mu}{c_1}. \nonumber
\end{align}
In the range of $\lambda_2^{**}<\tilde{\lambda}_2(\delta)<\lambda_2^{*}$, symmetric axis of $ f(\tilde{\lambda}_{1}(\delta))$ satisfies 
\begin{align}
   \lambda_1^{**} < \delta\lambda_1^* + (1-\delta)\sqrt{\frac{1+\tilde{\lambda}_{2}(\delta)/\mu}{c_1}} < \lambda_1^*.\nonumber
\end{align}
To show $f(\lambda_1^{**}) > 0$, we simplify it as

 \begin{align}
    \delta < \delta_1 = \frac{\bigg( \lambda_1^{**} - \sqrt{\frac{1+\tilde{\lambda}_2(\delta)/\mu}{c_1}} \bigg)^2}{2\lambda_1^{**}\bigg( \lambda_1^* - \sqrt{\frac{1+\tilde{\lambda}_2(\delta)/\mu}{c_1}} \bigg)}. \nonumber
\end{align}
Since $\delta_1$ increases with $\tilde{\lambda}_2(\delta) \in (\lambda_2^{**}, \lambda_2^{*})$ and when $\tilde{\lambda}_2(\delta) = \lambda_2^{**}$, $\delta_1 = \delta_{th1}$, then $\delta_1 > \delta_{th1}$ always holds in tits range. Then $\delta < \delta_1$ always holds because $\delta < \delta_{th1}$, which means $f(\lambda_1^{**}) > 0$ holds. Then the smaller positive root of (36) must be in the range of  $\lambda_1^{**}<\tilde{\lambda}_1(\delta)<\lambda_1^{*}$. Similarly, we can show that the smaller positive root of (37) is also in the range of  $\lambda_2^{**}<\tilde{\lambda}_2(\delta)<\lambda_2^{*}$.

Notice that although (36) and (37) may have multiple roots, we only select the root that has smallest social cost, which are the solutions to
\begin{align}
    \tilde{\lambda}_1(\delta)(\tilde{\lambda}_2(\delta)) &= \delta\lambda_1^* + (1-\delta)\sqrt{\frac{1+\frac{\tilde{\lambda}_2(\delta)}{\mu}}{c_1}} - \sqrt{\bigg(\delta\lambda_1^* + (1-\delta)\sqrt{\frac{1+\frac{\tilde{\lambda}_2(\delta)}{\mu}}{c_1}} \bigg)^2 - \frac{1+\frac{\tilde{\lambda}_2(\delta)}{\mu}}{c_1}},  \\
   \tilde{\lambda}_2(\delta)(\tilde{\lambda}_1(\delta)) &= \delta\lambda_2^* + (1-\delta)\sqrt{\frac{1+\frac{\tilde{\lambda}_1(\delta)}{\mu}}{c_2}} - \sqrt{\bigg(\delta\lambda_2^* + (1-\delta)\sqrt{\frac{1+\frac{\tilde{\lambda}_1(\delta)}{\mu}}{c_2}} \bigg)^2 - \frac{1+\frac{\tilde{\lambda}_1(\delta)}{\mu}}{c_2}}.
\end{align}
We've shown the existence of (38) and (39) in previous paragraph.
To show (38)-(39) have unique solutions in the range of $\lambda_1^{**}<\tilde{\lambda}_1(\delta)<\lambda_1^{*}, \lambda_2^{**}<\tilde{\lambda}_2(\delta)<\lambda_2^{*}$, denote 
\begin{align}
    g(\tilde{\lambda}_1(\delta)) = \tilde{\lambda}_2(\delta)(\tilde{\lambda}_1(\delta)) - \tilde{\lambda}_2(\delta)_1(\tilde{\lambda}_1(\delta)), \nonumber
\end{align}
 where $ \tilde{\lambda}_2(\delta)_1(\tilde{\lambda}_1(\delta))$ is the inverse function to $\tilde{\lambda}_1(\delta)(\tilde{\lambda}_2(\delta))$ in (38) with variable $\tilde{\lambda}_1(\delta)$.
By taking first and second derivatives of $\tilde{\lambda}_1(\delta)(\tilde{\lambda}_2(\delta))$ in (38) and of $\tilde{\lambda}_2(\delta)(\tilde{\lambda}_1(\delta))$ in (39), we can find that $\tilde{\lambda}_1(\delta)(\tilde{\lambda}_2(\delta))$ is convex and strictly increasing in $\tilde{\lambda}_2(\delta)$, and  $\tilde{\lambda}_2(\delta)(\tilde{\lambda}_1(\delta))$ is convex and strictly increasing in $\tilde{\lambda}_1(\delta)$. Thus $g(\tilde{\lambda}_1(\delta))$ is convex in $\tilde{\lambda}_1(\delta)$. Additionally, we have
\begin{align}
    g(\lambda_1^{**}) &> 0, \nonumber \\
    g(\lambda_1^{*}) &= 0, \nonumber \\
    g'(\lambda_1^*) &> 0. \nonumber
\end{align}
Thus there exists unique $\tilde{\lambda}_1^0(\delta)$ in $\lambda_1^{**} < \tilde{\lambda}_1(\delta) < \lambda_1^*$ satisfying $g(\tilde{\lambda}_1^0(\delta)) = 0$. We plot $g(\tilde{\lambda}_1(\delta))$ in Figure 8. Then 
there exist unique solutions to (38) and (39) in the feasible range of $\lambda_1^{**} < \tilde{\lambda}_1(\delta) < \lambda_1^*$ and $\lambda_2^{**} < \tilde{\lambda}_1(\delta) < \lambda_2^*$.
\begin{figure}
\centering
 \includegraphics[height=2.5in, width=3in]{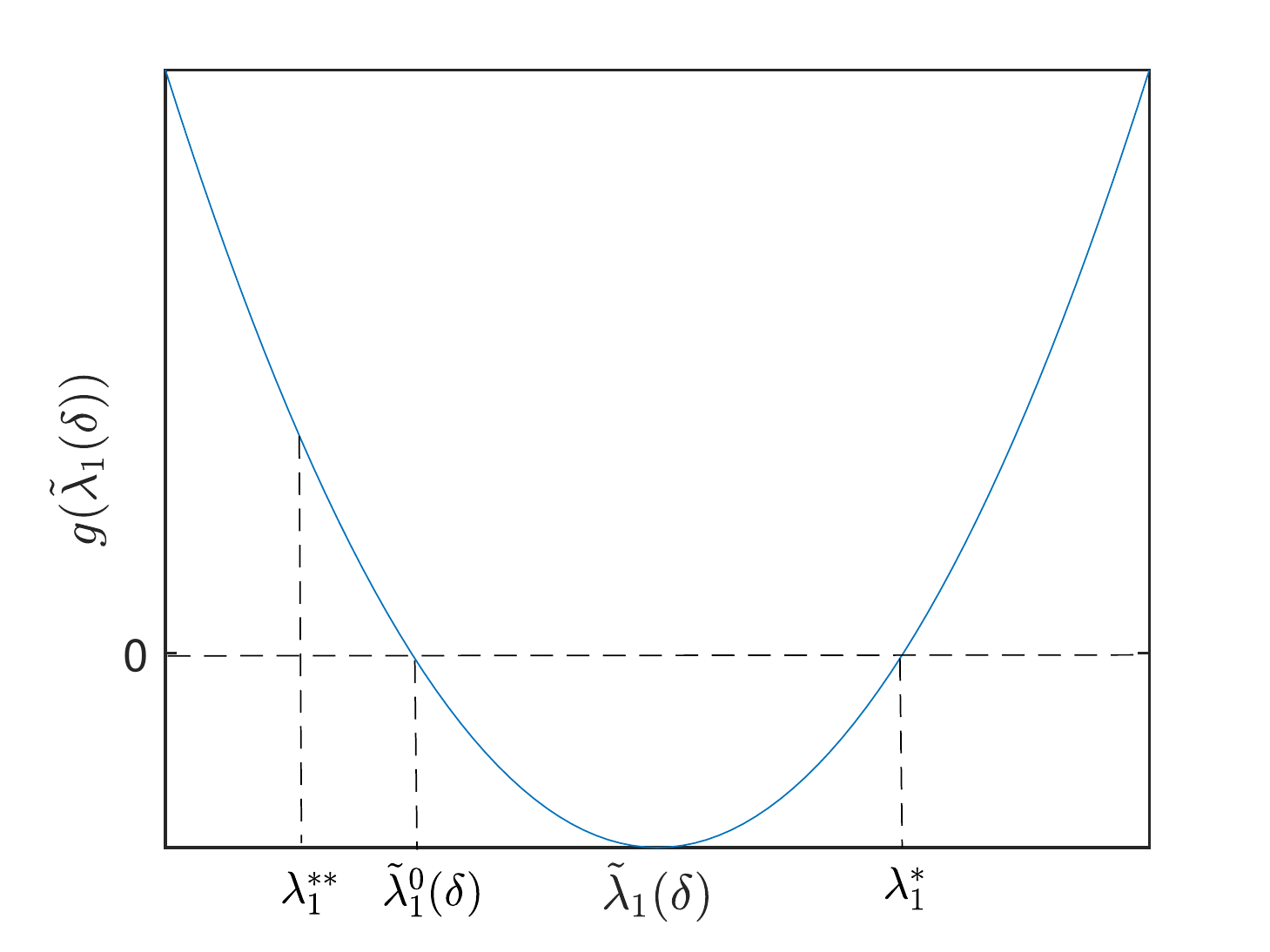}
\caption{$ g(\tilde{\lambda}_1(\delta))$ versus $\tilde{\lambda}_1(\delta)$ in Appendix E.}
 \end{figure}
 
 Suppose that when $N = M-1$, (19) has unique solutions. With induction method, we need to prove when $N = M$, (19) has unique solutions. Similar to (38)-(39), we can rewrite $\tilde{\lambda}_i(\delta)$ as a function of $\tilde{\lambda}_j(\delta)$ and the $\tilde{\lambda}_i(\delta)$ is convex and strictly increasing in each $\tilde{\lambda}_j(\delta)$, where $i\in\{1, \cdots, M\}$ and $j \ne i$. If we introduce $\tilde{\lambda}_M(\delta)$ as in (38)-(39) into other $\tilde{\lambda}_i(\delta)$ as in (38)-(39), where $i\in\{1, \cdots, M-1\}$, we have $\tilde{\lambda}_i(\delta)$ is still convex and strictly increasing in $\tilde{\lambda}_j(\delta)$, where $j \in \{1, \cdots, M-1\}$ and $j \ne i$. Since we know when $N = M-1$, (19) has unique solutions. Then after introducing $\lambda_M$ as in (38)-(39) into other $\lambda_i$ as in (38)-(39), the new $M-1$ equations also have unique solutions. Then we prove that when $N = M$, (19) has unique solutions.

When $\delta \to 0$, (19) just becomes (2), then solution to (19) are $ \tilde{\lambda}_{i}(\delta) = \lambda_i^*$, where $i\in\{1, \cdots, N\}$.
 
$\tilde{\lambda}_i(\delta)$ in (19) decreases with $\delta$ due to $\frac{\partial \tilde{\lambda}_i(\delta)}{\partial \delta} < 0$, where $i \in\{1, \cdots, M\}$. Thus solutions to (19) decrease with $\delta$.

\section{Proof of Lemma \Romannum{4}.1}

Given $c_1 = c_H$, at the social optimum platform 1's sampling rate is $\lambda_1 = \lambda_1^{**}(c_H)$ with corresponding cost: 
$$\pi_1(c_H, \lambda_1^{**}(c_H)) = \frac{1}{\lambda_1^{**}(c_H)}\bigg(1+\big(\sum\limits_{i=2}^N\lambda_i^{**}\big)/\mu\bigg) + c_H\lambda_1^{**}(c_H) + 1/\mu.$$  
If it cheats to play $\lambda_1^{**}(c_L)$, its cost will be $\pi_1(c_H, \lambda_1^{**}(c_L))$. 
If $\pi_1(c_H, \lambda_1^{**}(c_H)) \leq  \pi_1(c_H, \lambda_1^{**}(c_L))$, platform 1 won't deviate to $\lambda_1 = \lambda_1^{**}(c_L)$, which is equivalent to
\vspace{-10pt}
\begin{table}[H]
    \small{
    $$\sqrt{c_H + \frac{1}{\mu}\sum\limits_{i=2}^N\frac{1}{\lambda_i^{**}}}\sqrt{c_L + \frac{1}{\mu}\sum\limits_{i=2}^N\frac{1}{\lambda_i^{**}}} \leq c_H,$$
    }
    \vspace{-30pt}
\end{table}
\noindent which holds only if $\frac{1}{\mu}\sum\limits_{i=2}^N\frac{1}{\lambda_i^{**}}$ is small and is not generally true. Then platform 1 may deviate from $\lambda_1^{**}(c_H)$ to $\lambda_1^{**}(c_L)$ when $c_1 = c_H$.

\section{Proof of Proposition \Romannum{4}.5 }
Before we prove the proposition, let's first prove an useful lemma. 
\begin{lemma}
Given real numbers $x, y, a, b > 0$ and $x \geq y$, $b \geq a$, $\frac{x+a}{y+b} \leq \frac{x}{y}$. 
\end{lemma}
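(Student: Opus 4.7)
The plan is to reduce the fractional inequality to a polynomial one by cross-multiplication, and then derive it as a short chain of inequalities from the two hypotheses $x \geq y$ and $b \geq a$ together with the positivity assumptions.

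First, I would observe that since $y > 0$ and $y+b > 0$, the inequality $\frac{x+a}{y+b} \leq \frac{x}{y}$ is equivalent (after clearing denominators) to $y(x+a) \leq x(y+b)$. Expanding both sides and cancelling the common term $xy$, this reduces to the simpler claim $ay \leq bx$. Thus the whole lemma rests on establishing $ay \leq bx$ from $x \geq y$, $b \geq a$, and positivity.

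Next, I would chain the two hypotheses. From $x \geq y$ and $a > 0$, multiplication by $a$ preserves the inequality and gives $ax \geq ay$. From $b \geq a$ and $x > 0$, multiplication by $x$ gives $bx \geq ax$. Combining these yields $bx \geq ax \geq ay$, which is exactly $ay \leq bx$. Substituting back into the cross-multiplied form and dividing by the positive quantity $y(y+b)$ recovers the original inequality.

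I do not expect any genuine obstacle here: the argument is elementary and relies only on the fact that multiplication by a positive number preserves inequalities. The only point requiring care is to apply the two monotonicity steps in the right order so that positivity is invoked correctly at each stage; no case analysis or auxiliary constructions are needed.
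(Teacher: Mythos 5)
Your proposal is correct and follows essentially the same route as the paper's proof: clear denominators to reduce the claim to $bx \geq ay$, then deduce this from $x \geq y$ and $b \geq a$ with positivity. Your version is slightly more explicit in chaining $bx \geq ax \geq ay$, but the argument is the same.
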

\begin{proof}
Since $x, y, a, b$ are all positive, $\frac{x+a}{y+b} \leq \frac{x}{y}$ is equivalent to $bx \geq ay$. Then we continue to prove $bx \geq ay$ is true, which holds 
due to given condition $x \geq y > 0$, $b \geq a > 0$.
\end{proof}

When $p_Hc_H + (1-p_H)c_L = c_{2} = \cdots = c_N$, according to (20)-(21), the approximate cooperation profile in larger $\delta$ regime are 
\begin{align}
    \hat{\lambda}_1 = \hat{\lambda}_2 = \cdots = \hat{\lambda}_N = \frac{1}{\sqrt{p_Hc_H + (1-p_H)c_L}}. \nonumber
\end{align}
As the platforms repeat their sampling choices in the social optimum and the mechanism, we only need to compare their one-shot social costs. The social cost in one-shot under this approximate cooperation profile is
\begin{align}
    \pi^{r} = 2N\sqrt{p_Hc_H + (1-p_H)c_L} + N^2/\mu. \nonumber
\end{align}
Since $c_2 = \cdots = c_N$, according to (8)-(10), we have $\lambda_2^{**}=\cdots=\lambda_N^{**}$. Then the minimum social cost in one-shot can be simplified as 
\begin{align}
    \pi^{**} 
    = &p_H\bigg(\frac{1}{\lambda_1^{**}(c_H)}+c_H\lambda_1^{**}(c_H)\bigg) + (1-p_H)\bigg(\frac{1}{\lambda_1^{**}(c_L)}+c_L\lambda_1^{**}(c_L)\bigg)+ N/\mu \nonumber \\
    &+ (N-1)\bigg(\frac{1}{\lambda_2^{**}}+c_2\lambda_2^{**}\bigg) + (N-2)(N-1)/\mu \nonumber \\
    &+(N-1)p_H\bigg(\frac{\lambda_2^{**}}{\lambda_1^{**}(c_H)} + \frac{\lambda_1^{**}(c_H)}{\lambda_2^{**}} \bigg) +(N-1)(1-p_H)\bigg(\frac{\lambda_2^{**}}{\lambda_1^{**}(c_L)} + \frac{\lambda_1^{**}(c_L)}{\lambda_2^{**}} \bigg). \nonumber
\end{align}
Then the ratio of social costs under the approximate cooperation profile and social optimizers is 
\begin{align}
    r = \max_{0\leq p_H\leq 1, \mu, c_H > c_L > 0} \frac{\pi^{r}}{\pi^{**}}. 
\end{align}
Since both $\pi^{r}$ and $\pi^{**}$ contain a common term $(N+(N-2)(N-1))/\mu$, from Lemma C.1 we know if we eliminate this term from both $\pi^{r}$ and $\pi^{**}$, the ratio would just become larger. Thus, we can rewrite (40) as
\begin{align}
   r \leq \max_{0\leq p_H\leq 1, \mu, c_H > c_L > 0}\frac{2N\sqrt{p_Hc_H + (1-p_H)c_L} +\frac{2N-2}{\mu}}{m_1 + m_2}, 
\end{align}
where all the $N-1$ platforms behave the same, and 
\vspace{-20pt}
\begin{table}[H]
    \small{
    \begin{align}
   m_1 &= p_H\bigg(\frac{1}{\lambda_1^{**}(c_H)}+c_H\lambda_1^{**}(c_H)\bigg) + (1-p_H)\bigg(\frac{1}{\lambda_1^{**}(c_L)}+c_L\lambda_1^{**}(c_L)\bigg)+ (N-1)\bigg(\frac{1}{\lambda_2^{**}}+c_2\lambda_2^{**}\bigg), \nonumber \\
    m_2  &= 
     (N-1)p_H\bigg(\frac{\lambda_2^{**}}{\lambda_1^{**}(c_H)} + \frac{\lambda_1^{**}(c_H)}{\lambda_2^{**}} \bigg) +(N-1)(1-p_H)\bigg(\frac{\lambda_2^{**}}{\lambda_1^{**}(c_L)} + \frac{\lambda_1^{**}(c_L)}{\lambda_2^{**}} \bigg). \nonumber  
\end{align}
    }
    \vspace{-30pt}
\end{table}
Given $p_H = 0$ or $p_H = 1$, $\pi^{r} = \pi^{**}$ always holds. If $0 < p_H < 1$, we notice that in $m_2$, $\frac{\lambda_2^{**}}{\lambda_1^{**}(c_H)} + \frac{\lambda_1^{**}(c_H)}{\lambda_2^{**}} \geq 2$ and $\frac{\lambda_2^{**}}{\lambda_1^{**}(c_L)} + \frac{\lambda_1^{**}(c_L)}{\lambda_2^{**}} \geq 2$, where the equalities hold at $\lambda_2^{**} = \lambda_1^{**}(c_H) = \lambda_1^{**}(c_L)$. Then we can tell that $m_2$ is minimized at $\lambda_2^{**} = \lambda_1^{**}(c_H) = \lambda_1^{**}(c_L)$ with $m_2 = (2N-2) / \mu$. By using Lemma C.1 again, we eliminate $(2N-2)/\mu$ in the numerator of (41) and $m_2$ in the denominator and rewrite (41) as
\begin{align}
    r \leq \max_{0<p_H<1, \mu, c_H>c_L>0}\frac{2N\sqrt{p_Hc_H + (1-p_H)c_L}}{m_1}. 
\end{align}
Now we focus on $m_1$. Inside, $\frac{1}{\lambda_1^{**}(c_H)}+c_H\lambda_1^{**}(c_H)$ is minimized at $\lambda_1^{**}(c_H) = \frac{1}{\sqrt{c_H}}$, $\frac{1}{\lambda_1^{**}(c_L)}+c_L\lambda_1^{**}(c_L)$ is minimized at $\lambda_1^{**}(c_L) = \frac{1}{\sqrt{c_L}}$ and $\frac{1}{\lambda_2^{**}}+c_2\lambda_2^{**}$ is minimized at $\lambda_2^{**} = \frac{1}{\sqrt{c_2}}$. Thus, $m_1$ is minimized at $\lambda_1^{**}(c_H) = \frac{1}{\sqrt{c_H}}$, $\lambda_1^{**}(c_L) = \frac{1}{\sqrt{c_L}}$ and $\lambda_2^{**} = \frac{1}{\sqrt{c_2}}=\frac{1}{\sqrt{p_Hc_H+(1-p_H)c_L}}$, which only happen altogether at $\mu \to \infty$. Thus, the right-hand side of (42) is maximized at $\mu \to \infty$. We thus simply (42) as
\begin{align}
    r &\leq \max_{0<p_H<1, c_H>c_L>0}\frac{N\sqrt{p_Hc_H/c_L + (1-p_H)}}{p_H\sqrt{c_H/c_L} + (1-p_H) + (N-1)\sqrt{p_H c_H/c_L+(1-p_H)}}.
\end{align}
If we replace $c_H$, $c_L$ by $x = \sqrt{c_H/c_L} $, (43) is simplified to
\begin{align}
    r  \leq \max_{0<p_H<1, x > 1}\frac{N\sqrt{p_Hx^2 + (1-p_H)}}{p_Hx + (1-p_H) + (N-1)\sqrt{p_H x^2+(1-p_H)}}:=f(x).
\end{align}
We notice that $f(x)$ increases with $x > 1$ because 
\begin{align}
    f'(x) = \frac{N(1-p_H)p_H(x-1)}{\bigg(p_Hx + (1-p_H) + (N-1)\sqrt{p_H x^2+(1-p_H)}\bigg)^2\sqrt{p_Hx^2+1-p_H}} > 0. \nonumber
\end{align}
Thus, the right-hand side of (44) is maximized at $x \to \infty$ and we can finally rewrite (44) as
\vspace{-10pt}
\begin{table}[H]
    \begin{align}
    r  &\leq \lim_{x\to\infty}\max_{0<p_H<1}\frac{N\sqrt{p_Hx^2 + (1-p_H)}}{p_H x + (1-p_H) + (N-1) \sqrt{p_H x^2+(1-p_H)}} \nonumber \\
         &= \max_{0<p_H<1}\frac{N\sqrt{p_H}}{p_H + (N-1)\sqrt{p_H}} \nonumber = \max_{0<p_H<1}\frac{N}{\sqrt{p_H} + N-1} < \frac{N}{N-1}. \nonumber
\end{align}
\vspace{-20pt}
\end{table}

\section{Proof of Proposition \Romannum{4}.6}
To show the equations in Proposition \Romannum{4}.6 are the solutions, we need to solve the following equations for platform 1
\begin{align}
\delta=\max \left\{\hat{\delta}_{1}^{t h}\left(\tilde{\lambda}_{1}(\delta), \tilde{\lambda}_{2}(\delta), \cdots, \tilde{\lambda}_{N}(\delta) | c_{1}=c_{L}\right), \hat{\delta}_{1}^{t h}\left(\tilde{\lambda}_{1}(\delta), \tilde{\lambda}_{2}(\delta), \cdots, \tilde{\lambda}_{N}(\delta) | c_{1}=c_{H}\right)\right\}.
\end{align}
 and for platform $i\in\{2, \cdots, N\}$:  
\begin{align}
  \delta=\frac{\pi_{i}\left(\tilde{\lambda}_{i}(\delta), \tilde{\lambda}_{-i}(\delta)\right)-\pi_{i}\left(\tilde{\lambda}_{-i}(\delta), \sqrt{\frac{1+\tilde{\lambda}_{-i}(\delta) / \mu}{c_{i}}}\right)}{\pi_{i}\left(\lambda_{i}^{*}, \lambda_{-i}^{*}\right)-\pi_{i}\left(\tilde{\lambda}_{-i}(\delta), \sqrt{\frac{1+\tilde{\lambda}_{-i}(\delta) / \mu}{c_{i}}}\right)}.
\end{align}
where $\tilde{\lambda}_{-i}(\delta)=\sum\limits_{j \neq i}^{N} \tilde{\lambda}_{j}(\delta) \text { and } \lambda_{-i}^{*}=\sum\limits_{m \neq i, 1}^{N} \lambda_{m}^{*}+p_{H} \lambda_{1}^{*}\left(c_{H}\right)+\left(1-p_{H}\right) \lambda_{1}^{*}\left(c_{L}\right)$.
By jointly solving (45) and (46) and taking the smaller roots to avoid large social cost, we have the following cooperation $\big(\tilde{\lambda}_1(\delta), \tilde{\lambda}_2(\delta), \cdots, \tilde{\lambda}_N(\delta)\big)$ for all the $N$ platforms as in Proposition \Romannum{4}.6.

Notice that the only different between the equations in Proposition \Romannum{4}.6 and Proposition \Romannum{4}.7 is that in the equations in Proposition \Romannum{4}.6, $\tilde{\lambda}_k(\delta) = \hat{\lambda}_k$ are constant for $k \in \{j, \cdots, N\}$, while in the equations in Proposition \Romannum{4}.7, such $\tilde{\lambda}_k(\delta)$s are still variable to determine. Then we can prove unique solution of the equations in Proposition \Romannum{4}.6 by proving that of the equations in Proposition \Romannum{4}.7, which is given in Appendix I.

\section{Proof of Proposition \Romannum{4}.7}

Platform 1 should not deviate with $\tilde{\lambda}_1(\delta)$ whether $c_1 = c_H$ or $c_1 = c_L$ and platform $i = 2, 3, \cdots, N$ should not deviate with $\tilde{\lambda}_i(\delta)$, which are equivalent to
\begin{align}
\delta \geq \hat{\delta}^{th}_1(\tilde{\lambda}_1(\delta),\sum\limits_{j=2}^N\tilde{\lambda}_j(\delta) | c_1=c_L),  
\delta \geq \hat{\delta}^{th}_1(\tilde{\lambda}_1(\delta),\sum\limits_{j=2}^N\tilde{\lambda}_j| c_1=c_H), 
\delta \geq \hat{\delta}^{th}_i(\sum\limits_{j \ne i}^N\tilde{\lambda}_j(\delta)).
\end{align}
Solutions to (47) are
\begin{table}[H]
\small{
\begin{align}     
 \tilde{\lambda}_1(\delta) &\in \bigg[\frac{M_L' - \sqrt{M_L^{2'} -(\delta\hat{c}_1+(1-\delta)c_L)(1+\sum\limits_{j=2}^N\tilde{\lambda}_j(\delta)/\mu) }}{\delta\hat{c}_1+(1-\delta)c_L}, \frac{M_L' + \sqrt{M_L^{'2} -(\delta\hat{c}_1+(1-\delta)c_L)(1+\sum\limits_{j=2}^N\tilde{\lambda}_j(\delta)/\mu) }}{\delta\hat{c}_1+(1-\delta)c_L}\bigg], 
 \end{align}
 \begin{align} 
 \tilde{\lambda}_1(\delta) &\in \bigg[\frac{M_H' - \sqrt{M_H^{'2} -(\delta\hat{c}_1+(1-\delta)c_H)(1+\sum\limits_{j=2}^N\tilde{\lambda}_j(\delta)/\mu) }}{\delta\hat{c}_1+(1-\delta)c_H}, \nonumber \frac{M_H' + \sqrt{M_H^{'2} -(\delta\hat{c}_1+(1-\delta)c_H)(1+\sum\limits_{j=2}^N\tilde{\lambda}_j(\delta)/\mu) }}{\delta\hat{c}_1+(1-\delta)c_H}\bigg], 
 \end{align}
 \begin{align} 
 \tilde{\lambda}_i(\delta) & \in \bigg[\frac{M_i - \sqrt{M_i^{2} - c_i(1+ \sum\limits_{j\ne i}^N\tilde{\lambda}_j(\delta)/\mu)}}{c_i}, \frac{M_i + \sqrt{M_i^{2} - c_i(1+ \tilde{\lambda}_1(\delta)/\mu)}}{c_i} \bigg],
\end{align}
}
\end{table}
where 
\begin{table}[H]
\small{
\begin{align}
M_k' = \delta\sqrt{1+\sum\limits_{j=2}^N\lambda_j^*/\mu}(p_H\sqrt{c_H} + (1-p_H)\sqrt{c_k})+(1-\delta)\sqrt{(1+ \sum\limits_{j\ne i}^N\tilde{\lambda}_j(\delta))c_k }, \;\; k \in \{H, L\},  \nonumber
\end{align}
}

\end{table}
\begin{table}[H]
\small{
\begin{align}
M_i &= \sqrt{c_i}\bigg(\delta\sqrt{1+(p_H\lambda_1^*(c_H)+(1-p_H)\lambda_1^*(c_L)+\sum\limits_{j\ne i,1}^N\lambda_j^*)/\mu}
    +(1-\delta)\sqrt{1+ \sum\limits_{j\ne i}^N\tilde{\lambda}_j(\delta)/\mu}\bigg), \;\; i = 2,3,\cdots,N. \nonumber
\end{align}
}

\end{table}
Interaction of (48) and (49) is the feasible region for desired $\tilde{\lambda}_1(\delta)$, which is
\begin{table}[H]
\small{
\begin{align}     
 \tilde{\lambda}_1(\delta) \in \bigg[ &\max\bigg\{\frac{M_L' - \sqrt{M_L^{'2} -(\delta\hat{c}_1+(1-\delta)c_L)(1+\tilde{\lambda}_2(\delta)/\mu) }}{\delta\hat{c}_1+(1-\delta)c_L}, \frac{M_H' - \sqrt{M_H^{'2} -(\delta\hat{c}_1+(1-\delta)c_H)(1+\tilde{\lambda}_2(\delta)/\mu) }}{\delta\hat{c}_1+(1-\delta)c_H} \bigg\}, \nonumber \\
 &\min\bigg\{\frac{M_L' + \sqrt{M_L^{'2} -(\delta\hat{c}_1+(1-\delta)c_L)(1+\tilde{\lambda}_2(\delta)/\mu) }}{\delta\hat{c}_1+(1-\delta)c_L}, \frac{M_H' + \sqrt{M_H^{'2} -(\delta\hat{c}_1+(1-\delta)c_H)(1+\tilde{\lambda}_2(\delta)/\mu) }}{\delta\hat{c}_1+(1-\delta)c_H} \bigg\}\bigg]. 
\end{align}
}
\end{table}
To avoid large social cost, we then take the smallest feasible solutions in (50) as
\begin{table}[H]
\small{
\begin{align}     
 \tilde{\lambda}_1(\delta)(\tilde{\lambda}_2(\delta)) =\frac{M_k' - \sqrt{M_k^{'2} -(\delta\hat{c}_1+(1-\delta)c_j)(1+ \sum\limits_{j=2}^N\tilde{\lambda}_j(\delta) /\mu) }}{\delta\hat{c}_1+(1-\delta)c_j}, 
        \end{align}
\begin{align}
  \tilde{\lambda}_i(\delta)(\tilde{\lambda}_1(\delta)) = \frac{M_i - \sqrt{M_i^{2} - c_i(1+ \sum\limits_{j \ne i, 1}^N\tilde{\lambda}_j(\delta)/\mu)}}{c_i}, \;\; i = 2, 3, \cdots, N,
\end{align}}
\end{table}
where 
\begin{table}[H]
\small{
\begin{align}
M_k' = \delta\sqrt{1+\sum\limits_{j=2}^N\lambda_j^*/\mu}(p_H\sqrt{c_H} + (1-p_H)\sqrt{c_k})+(1-\delta)\sqrt{(1+ \sum\limits_{j\ne i}^N\tilde{\lambda}_j(\delta))c_k }, \;\; k \in \{H, L\},  \nonumber
\end{align}
}
\end{table}
which is the same as the equations in Proposition \Romannum{4}.7. We want to use induction method to show (51)-(52) have unique solutions. When $N = 2$, (51)-(52) are equivalent to
\begin{align}     
 \tilde{\lambda}_1(\delta) =\max\bigg\{&\frac{M_L' - \sqrt{M_L^{'2} -(\delta\hat{c}_1+(1-\delta)c_L)(1+ \tilde{\lambda}_2(\delta) /\mu) }}{\delta\hat{c}_1+(1-\delta)c_L},\nonumber \\
 &\frac{M_H' - \sqrt{M_H^{'2} -(\delta\hat{c}_1+(1-\delta)c_H)(1+ \tilde{\lambda}_2(\delta) /\mu) }}{\delta\hat{c}_1+(1-\delta)c_H}
 \bigg\},
        \end{align}
\begin{align}
   \tilde{\lambda}_2(\delta) = \frac{M_2 - \sqrt{M_2^{2} - c_2(1+ \tilde{\lambda}_1(\delta)/\mu)}}{c_2}.
\end{align}
We rewrite (53)-(54) as
\begin{align}     
 \tilde{\lambda}_1(\delta)(\tilde{\lambda}_2(\delta)) =\frac{M_j' - \sqrt{M_j^{'2} -(\delta\hat{c}_1+(1-\delta)c_j)(1+ \tilde{\lambda}_2(\delta) /\mu) }}{\delta\hat{c}_1+(1-\delta)c_j}, 
        \end{align}
\begin{align}
   \tilde{\lambda}_2(\delta)(\tilde{\lambda}_1(\delta)) = \frac{M_2 - \sqrt{M_2^{2} - c_2(1+ \tilde{\lambda}_1(\delta)/\mu)}}{c_2},
\end{align}
where
\begin{align}
    M_j' = \delta\sqrt{1+\lambda_2^*/\mu}(p_H\sqrt{c_H} + (1-p_H)\sqrt{c_L})+(1-\delta)\sqrt{(1+ \tilde{\lambda}_2(\delta))c_j }. \nonumber
\end{align}
To show (55)-(56) have unique solutions in the range of $\hat{\lambda}_1<\tilde{\lambda}_1(\delta)<\lambda_1^{*}(c_j), \hat{\lambda}_2<\tilde{\lambda}_2(\delta)<\lambda_2^{*}$, denote 
\begin{align}
    g(\tilde{\lambda}_1(\delta)) = \tilde{\lambda}_2(\delta)(\tilde{\lambda}_1(\delta)) - \tilde{\lambda}_2(\delta)_1(\tilde{\lambda}_1(\delta)), \nonumber
\end{align}
 where $ \tilde{\lambda}_2(\delta)_1(\tilde{\lambda}_1(\delta))$ is the inverse function to $\tilde{\lambda}_1(\delta)(\tilde{\lambda}_2(\delta))$ in (55) with variable $\tilde{\lambda}_1(\delta)$.
By taking first-order and second-order derivatives of $\tilde{\lambda}_1(\delta)(\tilde{\lambda}_2(\delta))$ in (55) and of $\tilde{\lambda}_2(\delta)(\tilde{\lambda}_1(\delta))$ in (56), we can find that $\tilde{\lambda}_1(\delta)(\tilde{\lambda}_2(\delta))$ is convex and strictly increasing in $\tilde{\lambda}_2(\delta)$, and  $\tilde{\lambda}_2(\delta)(\tilde{\lambda}_1(\delta))$ is convex and strictly increasing in $\tilde{\lambda}_1(\delta)$. Thus $g(\tilde{\lambda}_1(\delta))$ is convex in $\tilde{\lambda}_1(\delta)$. Additionally, we have
\begin{align}
    g(\hat{\lambda}_1) &> 0, \nonumber \\
    g(\lambda_1^{*}(c_j)) &< 0, \nonumber \\
    g'(\lambda_1^*(c_j)) &> 0. \nonumber
\end{align}
Thus there exists unique $\tilde{\lambda}_1^0(\delta)$ in $\hat{\lambda}_1 < \tilde{\lambda}_1(\delta) < \lambda_1^*(c_j)$ satisfying $g(\tilde{\lambda}_1^0(\delta)) = 0$. We plot $g(\tilde{\lambda}_1(\delta))$ in Figure 9. Then 
there exist unique solutions to (53) and (54) in the feasible range of $\hat{\lambda}_1 < \tilde{\lambda}_1(\delta) < \lambda_1^*(c_j)$ and $\hat{\lambda}_2 < \tilde{\lambda}_2(\delta) < \lambda_2^*$.
\begin{figure}
\centering
 \includegraphics[height=2.5in, width=3in]{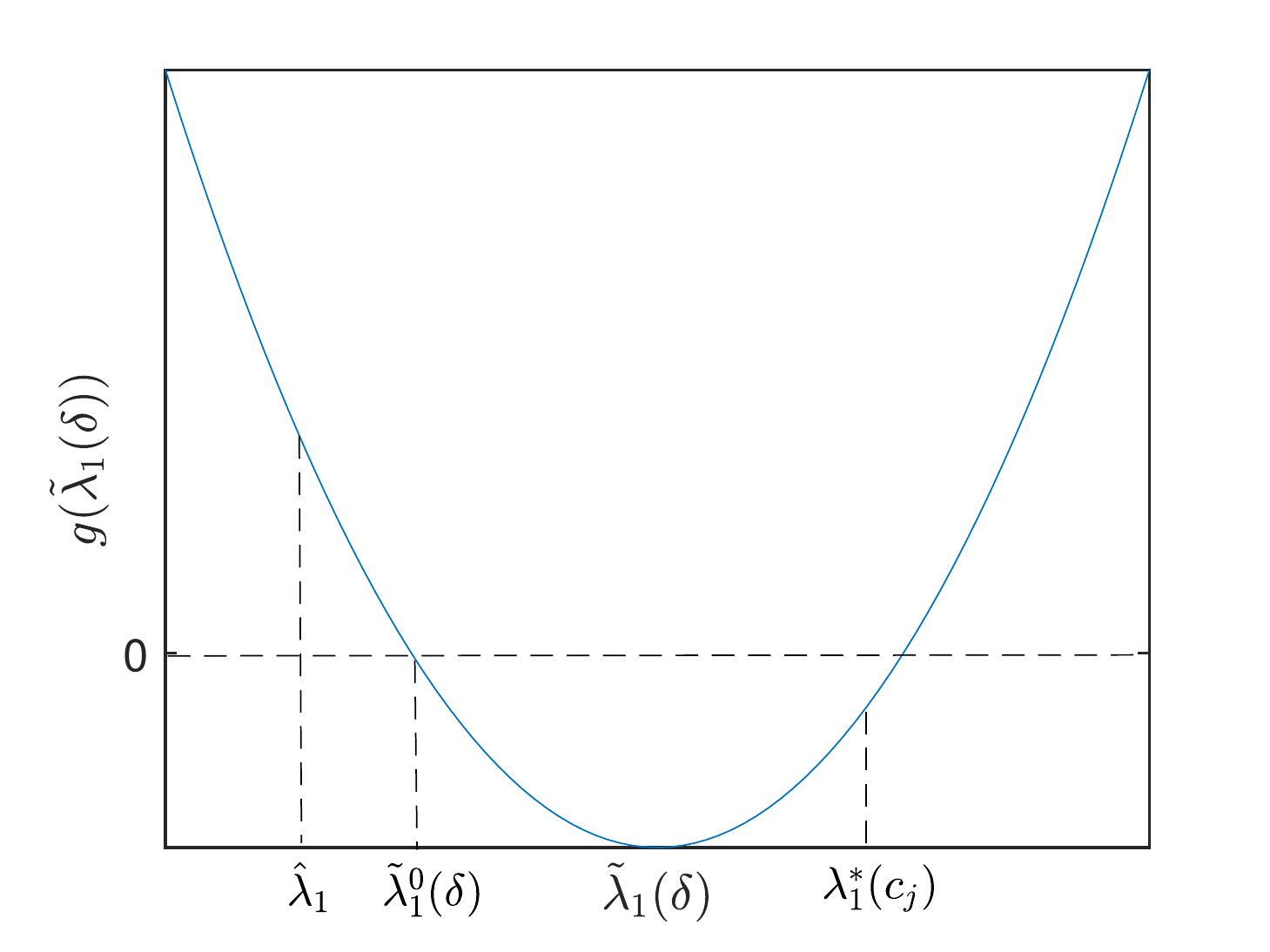}
\caption{$ g(\tilde{\lambda}_1(\delta))$ versus $\tilde{\lambda}_1(\delta)$ in Appendix I.}
 \end{figure}

 Suppose that when $N = M-1$, the equations in Proposition \Romannum{4}.7 have unique solutions. With induction method, we need to prove when $N = M$, the equations in Proposition \Romannum{4}.7 have unique solutions. Similar to (55)-(56), we can rewrite $\tilde{\lambda}_i(\delta)$ as a function of $\tilde{\lambda}_j(\delta)$ and the $\tilde{\lambda}_i(\delta)$ is convex and strictly increasing in each $\tilde{\lambda}_j(\delta)$, where $i\in\{1, \cdots, M\}$ and $j \ne i$. If we introduce $\tilde{\lambda}_M(\delta)$ as in (55)-(56) into other $\tilde{\lambda}_i(\delta)$ as in (55)-(56), where $i\in\{1, \cdots, M-1\}$, we have $\tilde{\lambda}_i(\delta)$ is still convex and strictly increasing in $\tilde{\lambda}_j(\delta)$, where $j \in \{1, \cdots, M-1\}$ and $j \ne i$. Since we know when $N = M-1$, the equations in Proposition \Romannum{4}.7 have unique solutions. Then after introducing $\lambda_M$ as in (55)-(56) into other $\lambda_i$ as in (55)-(56), the new $M-1$ equations also have unique solutions. Then we prove that when $N = M$, the equations in Proposition \Romannum{4}.7 have unique solutions.



\ifCLASSOPTIONcaptionsoff
  \newpage
\fi


\begin{thebibliography}{10}
\providecommand{\url}[1]{#1}
\csname url@samestyle\endcsname
\providecommand{\newblock}{\relax}
\providecommand{\bibinfo}[2]{#2}
\providecommand{\BIBentrySTDinterwordspacing}{\spaceskip=0pt\relax}
\providecommand{\BIBentryALTinterwordstretchfactor}{4}
\providecommand{\BIBentryALTinterwordspacing}{\spaceskip=\fontdimen2\font plus
\BIBentryALTinterwordstretchfactor\fontdimen3\font minus
  \fontdimen4\font\relax}
\providecommand{\BIBforeignlanguage}[2]{{%
\expandafter\ifx\csname l@#1\endcsname\relax
\typeout{** WARNING: IEEEtran.bst: No hyphenation pattern has been}%
\typeout{** loaded for the language `#1'. Using the pattern for}%
\typeout{** the default language instead.}%
\else
\language=\csname l@#1\endcsname
\fi
#2}}
\providecommand{\BIBdecl}{\relax}
\BIBdecl

\bibitem{hao2019economics}
S.~Hao and L.~Duan, ``Economics of age of information management under network
  externalities,'' \emph{Proc. ACM MobiHoc}, 2019.

\bibitem{NF}
D.~Guan, ``Five industries that should take a cue from netflix and crowdsource
  parts of its tech,''
  \url{https://techcrunch.com/2016/11/10/5-industries-that-should-take-a-cue-from-netflix-and-crowdsource-parts-of-its-tech/},
  2016.

\bibitem{GM}
T.~NewsDesk, ``Google improves maps data with new crowdsourcing features,''
  \url{https://techvibes.com/2016/07/21/google-maps-crowdsourcing}, 2016.

\bibitem{kaul2012real}
S.~Kaul, R.~Yates, and M.~Gruteser, ``Real-time status: How often should one
  update?'' \emph{Proc. IEEE INFOCOM}, 2012.

\bibitem{huang2015optimizing}
L.~Huang and E.~Modiano, ``Optimizing age-of-information in a multi-class
  queueing system,'' \emph{Proc. IEEE ISIT}, 2015.

\bibitem{costa2016age}
M.~Costa, M.~Codreanu, and A.~Ephremides, ``On the age of information in status
  update systems with packet management,'' \emph{IEEE Transactions on
  Information Theory}, vol.~62, no.~4, pp. 1897--1910, 2016.

\bibitem{sun2017update}
Y.~Sun, E.~Uysal-Biyikoglu, R.~D. Yates, C.~E. Koksal, and N.~B. Shroff,
  ``Update or wait: How to keep your data fresh,'' \emph{IEEE Transactions on
  Information Theory}, vol.~63, no.~11, pp. 7492--7508, 2017.

\bibitem{bedewy2016optimizing}
A.~M. Bedewy, Y.~Sun, and N.~B. Shroff, ``Optimizing data freshness,
  throughput, and delay in multi-server information-update systems,'' 
  \emph{Proc. IEEE ISIT}, 2016.

\bibitem{bedewy2017age}
A.~M. Bedewy, Y.~Sun and N.~B. Shroff, ``Age-optimal information updates in multihop networks,'' \emph{Proc. IEEE ISIT}, 2017.

\bibitem{bedewy2019minimizing}
A.~M. Bedewy, Y.~Sun and N.~B. Shroff, ``Minimizing the age of information through queues,'' \emph{IEEE
  Transactions on Information Theory}, vol.~65, no.~8, pp. 5215 - 5232, 2019. 

\bibitem{kaul2012status}
S.~K. Kaul, R.~D. Yates, and M.~Gruteser, ``Status updates through queues,''
  \emph{Proc. IEEE CISS}, 2012. 

\bibitem{hsu2017age}
Y.-P. Hsu, E.~Modiano, and L.~Duan, ``Age of information: Design and analysis
  of optimal scheduling algorithms,'' \emph{Proc. IEEE ISIT}, 2017.


\bibitem{hsu2017scheduling}
Y.-P. Hsu, E.~Modiano, and L.~Duan, ``Scheduling algorithms for minimizing age of information in wireless
  broadcast networks with random arrivals: The no-buffer case,'' IEEE Transactions on Mobile Computing, forthcoming. [Online]. Available:  \emph{arXiv
  preprint arXiv:1712.07419}, 2017.

\bibitem{xuehe}
X.~Wang and L.~Duan, ``Dynamic pricing for controlling age of information,'' \emph{Proc. IEEE ISIT}, 2019.
 

\bibitem{meng}
M.~Zhang, A.~Arafa, J.~Huang, and H.~V. Poor, ``How to price fresh data,'' \emph{Proc. IEEE/IFIP WiOpt}, 2019. 

\bibitem{duan2014motivating}
L.~Duan, T.~Kubo, K.~Sugiyama, J.~Huang, T.~Hasegawa, and J.~Walrand,
  ``Motivating smartphone collaboration in data acquisition and distributed
  computing,'' \emph{IEEE Transactions on Mobile Computing}, vol.~13, no.~10, pp. 2320--2333, 2014.

\bibitem{li2017dynamic}
Y.~Li, C.~A. Courcoubetis, and L.~Duan, ``Dynamic routing for social
  information sharing,'' \emph{IEEE Journal on Selected Areas in
  Communications}, vol.~35, no.~3, pp. 571--585, 2017.

\bibitem{jiang2017scalable}
C.~Jiang, L.~Gao, L.~Duan, and J.~Huang, ``Scalable mobile crowdsensing via
  peer-to-peer data sharing,'' \emph{IEEE Transactions on Mobile Computing},
  vol.~17, no.~4, pp. 898--912, 2017.

\bibitem{tang2014regulating}
J.~Tang and R.~T. Ma, ``Regulating monopolistic isps without neutrality,'' \emph{Proc. IEEE ICNP}, 2014.

\bibitem{lee2010max}
K.~Lee, Y.~Yi, J.~Jeong, H.~Won, I.~Rhee, and S.~Chong, ``Max-contribution: On
  optimal resource allocation in delay tolerant networks,'' \emph{Proc. IEEE INFOCOM}, 2010.

\bibitem{duan2015economic}
L.~Duan, J.~Huang, and J.~Walrand, ``Economic analysis of 4g upgrade timing,''
  \emph{IEEE Transactions on Mobile Computing}, vol.~14, no.~5, pp. 975--989,
  2015.

\bibitem{musacchio2006game}
J.~Musacchio, J.~Walrand, and S.~Wu, ``A game theoretic model for network
  upgrade decisions,'' \emph{Proc. 44th Annu. Allerton Conf.}, 2006.

\bibitem{varian2004system}
H.~Varian, ``System reliability and free riding,'' in \emph{Economics of
  information security}.\hskip 1em plus 0.5em minus 0.4em\relax Springer, 2004,
  pp. 1--15.

\bibitem{duan2012attack}
L.~Duan, A.~W. Min, J.~Huang, and K.~G. Shin, ``Attack prevention for
  collaborative spectrum sensing in cognitive radio networks,'' \emph{IEEE
  Journal on Selected Areas in Communications}, vol.~30, no.~9, pp. 1658--1665,
  2012.

\bibitem{aumann1995repeated}
R.~J. Aumann, M.~Maschler, and R.~E. Stearns, \emph{Repeated games with
  incomplete information}.\hskip 1em plus 0.5em minus 0.4em\relax MIT press,
  1995.

\bibitem{yates2018age1}
R.~D. Yates and S.~K. Kaul, ``The age of information: Real-time status updating
  by multiple sources,'' \emph{IEEE Transactions on Information Theory},
  vol.~65, no.~3, pp. 1807--1827, 2018.

\bibitem{yates2012real}
R.~D. Yates and S.~Kaul, ``Real-time status updating: Multiple sources,'' \emph{Proc. IEEE ISIT}, 2012.

\bibitem{le2010repeated}
M.~Treust, S.~Lasaulce, ``A repeated game formulation of energy-efficient decentralized power control,'' \emph{IEEE Transactions on Wireless Communications}, vol.~9, no.~9, pp. 2860-2869, 2010.

\end{thebibliography}
\end{document}